\definecolor{gryffindor}{RGB}{220,0,1}
\definecolor{ravenclaw}{RGB}{14,26,164}
\newclass{\PTIME}{PTIME}
\pgfplotsset{compat=1.18}
\newcommand{\red}[1]{\textcolor{gryffindor}{#1}}
\newcommand{\blue}[1]{\textcolor{ravenclaw}{#1}}
\newcommand{\bluebox}[1]{%
    \colorlet{currentcolor}{.}%
    {\color{ravenclaw}%
    \fbox{\color{currentcolor}#1}}%
}
\newcommand{\redbox}[1]{%
    \colorlet{currentcolor}{.}%
    {\color{gryffindor}%
    \fbox{\color{currentcolor}#1}}%
}
\renewcommand{\top}{\ensuremath\mathsf{T}}
\renewcommand{\epsilon}{\ensuremath\varepsilon}
\newcommand{\Id}{\operatorname{Id}}
\newcommand{\Z}{\mathbb{Z}}
\newcommand{\N}{\mathbb{N}}
\newcommand{\Q}{\mathbb{Q}}
\newcommand{\NN}{\mathbb{N}}
\newcommand{\ZZ}{\mathbb{Z}}
\newcommand{\QQ}{\mathbb{Q}}
\newcommand{\QQbar}{\smash{\overline{\Q}}\vphantom{\Q}}
\newcommand{\diag}[1]{\ensuremath{\operatorname{diag}(#1)}}
\DeclareMathOperator{\GL}{GL}
\newcommand{\ideal}[1]{\left\langle #1 \right\rangle}
\newcommand{\orbit}{\mathcal{O}}
\newcommand{\zski}{\smash{\overline{\mathcal{O}}}\vphantom{\mathcal{O}}}
\providecommand*{\eu}%
{\ensuremath{\mathrm{e}}}
\providecommand*{\iu}%
{\ensuremath{\mathrm{i}}}
\newcommand{\loopie}{\mathcal{L}}
\newcommand{\alg}{\smash{\overline{\Q}}\vphantom{\Q}}
\newcommand{\Lexp}[1]{L_{\operatorname{exp}}(#1)}
\newcommand{\textover}[3][l]{%
 \makebox[\widthof{#3}][#1]{#2}%
 }
\newcommand{\appref}[1]{\hyperref[#1]{Appendix~\ref*{#1}}}
\renewcommand\paragraph{\@startsection{paragraph}{4}{\z@}{3.25ex \@plus1ex \@minus.2ex}{-1em}{\bfseries\normalsize}}
\tikzset{every picture/.style={thick,>=angle 60}}
\tikzset{Grand/.style={draw,circle,minimum size=11*1.5,inner sep=0}}
\tikzset{Gmax/.style={draw,rectangle,minimum size=9*1.5,inner sep=0}}
\tikzset{Gmin/.style={draw,diamond,minimum size=9*1.5,inner sep=0}}
\tikzset{gamebad/.style={fill=red}}
    \theoremstyle{acmplain}
    \newtheorem{claim}[theorem]{Claim}
    \theoremstyle{acmdefinition}
\renewcommand\footnotetextcopyrightpermission[1]{}
\begin{document}

\title[Simple Linear Loops: Algebraic Invariants and Applications]{\mbox{Simple Linear Loops:} \mbox{Algebraic Invariants and Applications}}

\author{Rida Ait El Manssour}
\email{manssour@irif.fr}
\affiliation{%
  \institution{IRIF,  CNRS, Universit\'e Paris Cit\'e}
    \city{Paris}
  \country{France}
}

\author{George Kenison}
\email{g.j.kenison@ljmu.ac.uk}
\affiliation{%
  \institution{Liverpool John Moores University}
  \city{Liverpool}
  \country{UK}
}

\author{Mahsa Shirmohammadi}
\email{mahsa@irif.fr}
\affiliation{%
  \institution{IRIF,  CNRS, Universit\'e Paris Cit\'e}
  \city{Paris}
  \country{France}
}

\author{Anton Varonka}
\email{anton.varonka@tuwien.ac.at}
\affiliation{%
  \institution{TU Wien}
  \city{Vienna}
  \country{Austria}
}

\begin{abstract}

The automatic generation of loop invariants is a fundamental challenge in software verification. While this task is undecidable in general, it is decidable for certain restricted classes of programs. This work focuses on invariant generation for (branching-free) loops with a single linear update.

 Our primary contribution is a polynomial-space algorithm that computes the strongest algebraic invariant for simple linear loops, generating all polynomial equations that hold among program variables across all reachable states. The key to achieving our complexity bounds lies in mitigating the blow-up associated with variable elimination and Gröbner basis computation, as seen in prior works (see~\cite{cyphert2024solvable,hrushovski2023strong,NPSHW2021} among others). Our procedure runs in polynomial time when the  number of program variables is fixed. 

We examine various applications of our results on invariant generation, focusing on invariant verification and loop synthesis. The invariant verification problem investigates whether a polynomial ideal defining  an algebraic set  serves as an invariant for a given linear loop. We show that this problem is {\coNP}-complete and lies in {\PSPACE} when the input ideal is given in dense or sparse representations, respectively.
In the context of  loop synthesis, we aim to construct a loop with an infinite set of reachable states that upholds a specified algebraic property as an invariant. The strong synthesis variant of this problem requires the construction of loops for which the given property is the strongest invariant. In terms of hardness, synthesising loops over integers (or rationals) is as hard as Hilbert's Tenth problem (or its analogue over the rationals). 
When the constants of the output are  constrained to bit-bounded rational numbers,
we demonstrate that loop synthesis and its strong variant are both  decidable in {\PSPACE}, and in {\NP} when  the  number of program variables is fixed.

\end{abstract}

\begin{CCSXML}
<ccs2012>
   <concept>
       <concept_id>10003752.10003790.10002990</concept_id>
       <concept_desc>Theory of computation~Logic and verification</concept_desc>
       <concept_significance>500</concept_significance>
       </concept>
   <concept>
       <concept_id>10010147.10010148.10010149.10010150</concept_id>
       <concept_desc>Computing methodologies~Algebraic algorithms</concept_desc>
       <concept_significance>500</concept_significance>
       </concept>
 </ccs2012>
\end{CCSXML}

\ccsdesc[500]{Theory of computation~Logic and verification}
\ccsdesc[500]{Computing methodologies~Algebraic algorithms}

\keywords{Algebraic Invariant, Program Synthesis, Loop Invariant, Zariski Closure,  Polynomial Space, Algebraic Reasoning.}

\settopmatter{printfolios=true}
\maketitle

\section{Introduction}
\label{sec-intro}
Reasoning about loops is a foundational task in program analysis and verification.  
Loop invariants  play a crucial and indispensable role; for instance,    
they help establish both safety properties
(as seen in proofs of non-reachability) and liveness properties
(as supporting invariants in termination proofs).
The paper~\cite{beyer2007invariant} goes so far as to call the problem
of automatic invariant generation \emph{the most important task in
program verification}.

The focus of this paper is on the  algorithmic generation of algebraic invariants for
 programs.  
The invariants we
study are given by polynomial equations in the program
variables.
 Not only are these invariants expressive, but they are amenable to a
rich collection of techniques from algebraic geometry.
From a computational perspective, a key question is to determine when the invariant generation problem is decidable. 
For decidability,
the program model must be fairly abstract, as  for instance, 
algebraic invariant generation is already undecidable for polynomial programs~\cite{hrushovski2023strong}.

M\"{u}ller-Olm and Seidl~\cite{Muller-OlmS04} considered the generation of polynomial invariants within the framework of affine programs, raising the question of whether it is possible to compute (a basis of) all polynomial invariants for any given affine program.  This problem can be recast in purely algebraic terms as a question about matrix semigroups: namely, the objective is to compute a representation of the Zariski closure of a finitely generated semigroup of matrices.  If the matrices involved are all invertible then this task is equivalent to that of computing the Zariski closure of a finitely generated matrix \emph{group}.  

The first algorithm to compute the polynomial ideal defining the Zariski closure $\overline{\langle M_1, \ldots , M_k\rangle}$ of the group generated by a set of invertible matrices ${ M_1, \ldots , M_k}$ was introduced in~\cite{DerksenJK05}. 
This algorithm was recently employed in~\cite{HrushovskiOP018, hrushovski2023strong} as a subroutine to  address the above-mentioned question posed by Müller-Olm and Seidl regarding invariant generation for affine programs.
Both of these decision procedures have significant shortcomings in terms of computational complexity; notably, the complexity bound for the group-closure computation in~\cite{DerksenJK05} is not known to be elementary~\cite[Appendix C]{NPSHW2021}. 

A recent advance in computing the group closure was obtained through a linearization technique from~\cite{muller2004note,karr1976affine} in combination with  a novel upper bound  on the degrees of the polynomials defining the  closure~\cite{NPSHW2021}. 
The resulting complexity bounds, although elementary, are of the order of severalfold exponential time for rational matrices; nevertheless there still remains  a significant  gap between the upper and lower complexity bounds.

This paper aims to address and resolve the computational complexity for cyclic matrix groups and semigroups.
Surprisingly, the approaches in~\cite{DerksenJK05,NPSHW2021,HrushovskiOP018, hrushovski2023strong}, and other works such as~\cite{galuppi2021toric,deGraafBook}, fail to achieve reasonable complexity even in the  simplest setting of cyclic matrix groups and semigroups.
This complexity blow-up is a consequence of  using variable elimination  in the computational procedures, which generally has a worst-case exponential-space complexity. 
Alternative approaches based on reductions to the theory of real-closed fields, combined with the degree upper bound obtained in~\cite{NPSHW2021}, will also result in similar complexity bounds. In this case, the complexity blow-up arises from the doubly exponential space required for quantifier elimination in real-closed fields~\cite{davenport1988quantifier}.

In the vocabulary of loop programs, the above simple 
setting translates to that of
branch-free loops with a single linear update, unlike the commonly studied  programs with multiple linear updates~\cite{ kincaid2017reasoning,HrushovskiOP018}. Notwithstanding the radical simplicity of this model, the most natural
verification problems (such as termination and reachability) are
already very difficult in this
setting~\cite{ouaknine2015looptermination}.
We  refer to this  subclass as the \emph{simple linear loops}, and 
 focus on the invariant generation problem for such loops. 

By performing the updates simultaneously,
a simple linear loop
is specified by a single
square matrix~$M$ and a vector \(\bm{\alpha}\) of  initial program values. 
The program state is  given by a vector, and an
iteration of the loop body is summarised by a matrix-vector product. We call a loop with $d$
program variables $d$-dimensional. 
The \emph{orbit} of the loop \(\langle M, \bm{\alpha}\rangle\) is  the reachable set of  program states, defined by
$\mathcal{O} :=
	\{ M^n \boldsymbol{\alpha} : n \in \mathbb{N} \}$.
 Our  main contribution is as follows.

\begin{restatable}{theorem}{theoinvpspace}
\label{theo-inver-pspace}
	Let~$\langle M,\boldsymbol{\alpha}\rangle$ be a rational loop with orbit~$\orbit$.
	A set of polynomials defining 
	$\zski$ is computable in 
 $\PSPACE$.
For all fixed~$d\in \NN$, the computation of $\zski$ for $d$-dimensional loops is in polynomial time. 
\end{restatable}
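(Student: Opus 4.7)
The plan is to parametrise the orbit via the Jordan--Chevalley decomposition of $M$, thereby reducing the computation of $\zski$ to understanding the lattice of multiplicative relations of the eigenvalues of $M$, and then to produce the defining polynomials coefficient-by-coefficient in order to remain within polynomial space.

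First I would write $M=S+N$ with $S$ semisimple, $N$ nilpotent, and $SN=NS$, and let $\lambda_1,\dots,\lambda_k\in\overline{\Q}$ denote the distinct nonzero eigenvalues of $S$ (any zero eigenvalue is handled separately as a degenerate case). Standard Jordan-form analysis then yields
\[
M^n \boldsymbol{\alpha} \;=\; \sum_{i=1}^{k}\sum_{j=0}^{d-1}\binom{n}{j}\,\lambda_i^{\,n}\,\boldsymbol{w}_{ij},
\]
for vectors $\boldsymbol{w}_{ij}\in\overline{\Q}^{\,d}$ computable from the spectral data of $M$. Introducing fresh indeterminates $t,z_1,\dots,z_k$ and the polynomial map $\Phi(t,\boldsymbol{z}) \,=\, \sum_{i,j}z_i\binom{t}{j}\boldsymbol{w}_{ij}$, one has $M^n\boldsymbol{\alpha}=\Phi(n,\lambda_1^n,\dots,\lambda_k^n)$, so $\zski=\overline{\Phi(\mathcal{X})}$ where $\mathcal{X}=\mathbb{A}^1\times\mathcal{T}$ and $\mathcal{T}\subseteq(\overline{\Q}^{\,*})^k$ is the Zariski closure of the multiplicative trajectory $\{(\lambda_1^n,\dots,\lambda_k^n)\}_{n\in\N}$. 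This $\mathcal{T}$ is the algebraic subtorus cut out by the toric (binomial) ideal associated with the lattice of multiplicative relations
\[
L \;=\; \bigl\{\mathbf{a}\in\Z^k : \lambda_1^{a_1}\cdots\lambda_k^{a_k}=1\bigr\}.
\]

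A $\Z$-basis of $L$ can be computed in $\PSPACE$: results of Masser type provide polynomial a priori height bounds on the generators, while membership $\mathbf{a}\in L$ reduces to an equality test between two algebraic numbers of succinctly bounded degree and height. Given $L$ and $\Phi$, a polynomial $p\in\Q[x_1,\dots,x_d]$ vanishes on $\zski$ if and only if the pullback $p\circ\Phi$ reduces to zero modulo the ideal of $\mathcal{X}$. This is a linear condition on the coefficients of $p$, encoded by a matrix whose entries are the algebraic numbers obtained by reducing each monomial $z_1^{a_1}\cdots z_k^{a_k}$ modulo $L$ along a chosen basis of $\Z^k/L$ and then collecting like powers of $t$.

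The main obstacle is complexity: by the degree bound of~\cite{NPSHW2021}, one may restrict attention to $p$ of degree at most $D=\operatorname{poly}(d)$, but the monomial space then has dimension $\binom{D+d}{d}$, which is exponential in $d$. To remain in $\PSPACE$ I would \emph{never} materialise a basis of this space, and instead view the vanishing condition as a structured linear system over $\overline{\Q}$ whose individual entries are computable in $\PSPACE$ from $(M,\boldsymbol{\alpha},L)$ via algebraic-number arithmetic with minimal-polynomial and isolating-interval encodings. Standard $\PSPACE$ procedures for linear algebra over number fields then allow one to output the $i$-th coefficient of a chosen generator of the kernel on demand, producing a generating set of the vanishing ideal one coefficient at a time. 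For fixed $d$ the system collapses to polynomial size and the same procedure runs in polynomial time. The most delicate step is to orchestrate this streaming computation so as to avoid both variable elimination and any explicit Gr\"obner-basis construction throughout, while carefully controlling bit-complexity as one manipulates the algebraic numbers arising in $\Phi$ and in the reductions modulo $L$.
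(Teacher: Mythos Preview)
Your parametrisation via the Jordan data and your identification of the lattice $L$ of multiplicative relations (computed in $\PSPACE$ via Masser's bound plus an ${\exists\mathbb{R}}$ membership test) match the paper exactly; this is indeed where the $\PSPACE$ bottleneck sits. The divergence, and the gap, is in how you pass from the parametrised description to defining equations of $\zski$.

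You invoke a bound $D=\operatorname{poly}(d)$ and then propose streaming $\PSPACE$ linear algebra on the $\binom{D+d}{d}$-dimensional coefficient space. But $D=\operatorname{poly}(d)$ is not available: the paper explicitly observes that the degree bounds extractable from the prior literature are \emph{exponential}, and that combining an exponential degree bound with linearisation yields only an $\EXPSPACE$ procedure. Even the paper's own generators (the binomials in $S_1$ coming from $L$) have degree up to the Masser bound $(d\log H)^{O(d^2)}$, which is not polynomial in the input. With the correct (exponential) $D$, your streaming linear-algebra argument must handle a system of size roughly $D^d$ whose entries live in the splitting field $\Q(\lambda_1,\ldots,\lambda_s)$ of degree up to $d!$, and you have not shown how to compute those entries (reductions of monomials modulo $L$ against a chosen transversal of $\Z^k/L$, composed with the algebraic-number coefficients of $\Phi$) with bit-complexity tight enough to keep the whole pipeline in polynomial space. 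The proposal's own phrase ``the most delicate step'' is exactly the step the paper is designed to eliminate.

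The paper avoids all of this by writing the equations down in closed form. After a \emph{convenient} Jordan decomposition (forcing $P^{-1}\boldsymbol{\alpha}\in\{0,1\}^d$) and an explicit Stirling-number change of basis $R$ that turns $\binom{n}{j}\lambda_i^{\,n-j}$ into $n^{j}\lambda_i^{\,n}$, the closure in the transformed coordinates is cut out by four explicit families: the lattice binomials $S_1$; the within-block relations $x_{i,k_i-1}^{\,j}-x_{i,k_i-j}\,x_{i,k_i}^{\,j-1}$ in $S_{2,i}$; the cross-block $2\times 2$ minors in $S_{3,i,j}$; and the linear forms $S_4$ picking out the identically-zero coordinates. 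Pulling these back along the linear map $PU^{-1}R^{-1}$ preserves degree and yields the output directly. There is no elimination, no Gr\"obner basis, no kernel computation on a large coefficient space; only the lattice basis requires $\PSPACE$, and every other step is genuinely polynomial time, which is also what delivers the fixed-$d$ polynomial-time claim.
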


Loops with linear updates appear in many settings, such as Markov chains and  linear hybrid automata. In \cite{majumdar2020hybrid}, the strongest algebraic invariant is generated for a class of linear hybrid automata, where  the permitted state transitions are linear updates.
Further motivation for complexity results for prototypical classes of loops, as in \cref{theo-inver-pspace}, arises from bottom-up approaches to invariant generation, where one works to summarise larger and larger subprograms in order to analyse expressive program models (cf.~\cite{cyphert2024solvable}).

In general,  the strongest algebraic invariant of a loop with linear updates and equality guards is not computable~\cite{muller2004note}. Similarly, allowing multiple polynomial updates renders invariant generation undecidable~\cite{hrushovski2023strong}, or Skolem-hard \cite{muellner2024strong} with a single polynomial update. Thus linear loops are arguably the richest model where the invariant generation problem admits a complete algorithmic procedure.

\medskip 

\noindent {\bf Applications in Invariant Verification and Loop Synthesis.}
The concept of orbit closure is 
fundamental to numerous subfields of computer science, including geometric complexity
theory, quantum computation,
non-convex optimisation problems,  and graph isomorphism~\cite{forbes2013explicit,blaeser2021tensors,DerksenJK05,burgisser2024completeness,burgisser2011overview}.
Our problem of invariant generation can be interpreted as an implicit orbit-closure problem under the action of a cyclic semigroup~\cite{manssour2024determination}.  

The first application of invariant generation that we consider   is   the \emph{invariant
verification} problem.  In our setting this amounts to checking whether all
reachable states of a given loop satisfy a given collection of
polynomial equations.  We do not assume that the property to be
verified is inductive, and so our analysis involves a non-trivial
examination of reachability in the loop.  The fact that our invariants
are equations plays a key role.  For example, allowing inequalities in
our invariants would render the invariant verification problem more
general than the positivity problem for linear recurrence sequences,
whose decidability status has been open for many decades~\cite{ouaknine2014positivity}.

The second application of our algorithm for invariant generation is   \emph{loop synthesis}.
An example of the kind of scenario we seek to model is as follows.
Imagine that a loop has integer variables $x,y,w$, and $z$.  If, in
each iteration of the loop, both $w$ and $y$ are incremented by $1$,
how should we update variables $x$ and $z$ to 
preserve the invariant $x^2-y^2z^2+z^3=0$?  In other words, we ask to synthesise
variable updates that maintain a given relation among the loop
variables.  In the example at hand, one solution is
$x:=w(y^2-w^2)$ and $z:=y^2-w^2$.  We note that this task is related to
the problem of computing a parametrisation of a variety given by
polynomial equations, which is a classical problem in algebraic
geometry.

We prove that the
invariant verification problem is {\coNP}-complete when the 
polynomials describing the invariant to be verified are given in dense representation.
We also show that the problem
lies in {\PSPACE} when the polynomials are given in sparse representation. In the context of synthesising loops over integers and rationals, building on our work in invariant generation, we consider bit-bounded variants of the synthesis problem and provide  \(\PSPACE\) algorithm for  these cases. Additionally, when the loop dimension is fixed, we establish {\NP} upper bounds and, in some cases, provide matching lower bounds.

A summary of our main results is provided in~\cref{table-results}. 
A comprehensive discussion of these results, along with related previous works, is presented in~\cref{sec-overview}.

\pgfdeclarelayer{background}
\pgfdeclarelayer{foreground}
\pgfsetlayers{background,main,foreground}

\def\blockdist{5.5}

\begin{figure*}
\centering
\columnwidth=\linewidth
\begin{tikzpicture}

    \node[align=center,text=black] (InvVerif) {\textbf{Invariant Generation}
    \\\textbf{(decision problem)}};
    \node[below=0.5em of InvVerif.south west, align=center,xshift=0.1cm] (dense) {rational loops\\\\ \(\PSPACE\)};
    \node[below=0.5em of InvVerif.south east, align=center,xshift=-0.4cm] (sparse) {rational loops\\fixed dimension\\ \(\PTIME\)};

    \node[align=center,text=black,below=2.5cm of InvVerif,xshift=0.05cm] (InvVerif2) {\textbf{Invariant Verification}\\\textbf{(decision problem)}};
    \node[below=0.5em of InvVerif2.south west, align=center,xshift=0.35cm] (dense2) {dense ideal\\ \(\coNP\)-complete};
    \node[below=0.5em of InvVerif2.south east, align=center,xshift=-0.25cm] (sparse2) {sparse ideal\\ \(\PSPACE\)};

    \path (InvVerif.180)+(8.9cm,-0.1cm) node (LoopSynth) [align=center,text=black] {\textbf{Loop Synthesis over \(R \in \{\Q,\Z\}\)}\\ \textbf{(decision problem)}};

    \node[below=0.2em of LoopSynth.south west, align=center] (weakLoopSynth) {weak\\ as hard as HTP over $R$};
    
    \node[below=0.2em of LoopSynth.south east, align=center] (strongLoopSynth) {strong\\ remains open};
    
    \node[align=center, below=3em of LoopSynth.south] (BitBnd) {Bit-Bounded};
    \node[below=0.2em of BitBnd.south west, align=center,xshift=-0.8cm] (weakBitBnd) {weak\\ \(\PSPACE\) and \(\NP\)-hard};

    \node[below=0.2em of BitBnd.south east, align=center,xshift=0.7cm] (strongBitBnd) {strong\\ \(\PSPACE\)};
    
    \node[align=center, below=3.5em of BitBnd.south] (BitBndfixd) {Bit-Bounded fixed dimension};

    \node[below=0.5em of BitBndfixd.south west, align=center,xshift=+0.5cm] (weakBitBndfixd) {weak\\ $\NP$\\\(\NP\)-hard over $\Z$};

    \node[below=0.5em of BitBndfixd.south east, align=center,xshift=-0.7cm] (strongBitBndfixd) {strong\\ \(\NP\)};

    \begin{pgfonlayer}{background}
                
        \node[draw=black,fit=(BitBnd) (weakBitBnd) (strongBitBnd) (BitBndfixd) (weakBitBndfixd) (strongBitBndfixd) (LoopSynth) (weakLoopSynth) (strongLoopSynth), 
        rounded corners,inner sep=0.4em,minimum width=7.5cm] (BitBndbox) {};
        
        \node[draw=black,fit=(BitBnd) (weakBitBnd) (strongBitBnd) (BitBndfixd) (weakBitBndfixd) (strongBitBndfixd), 
        rounded corners,inner sep=0.2em, minimum width=5cm] (BitBndbox) {};
        
        \node[draw=black,fit=(BitBndfixd) (weakBitBndfixd) (strongBitBndfixd), 
        rounded corners,inner sep=0em] (BitBndfixdbox) {};
        
        \node[draw=black,fit=(InvVerif) (dense) (sparse), 
        rounded corners, inner sep=0em,minimum width=5.6cm,line width=0.1cm] {};

        \node[draw=black,fit=(InvVerif2) (dense2) (sparse2), 
        rounded corners, inner sep=0em,minimum width=5.6cm] {};
        
    \end{pgfonlayer}

\end{tikzpicture}
\caption{
Summary of our main results (see~\cref{sec-overview} for a comprehensive overview). 
}
    \label{table-results}
    \vspace{-.3cm}
\end{figure*}

\section{Overview of Main Results}
\label{sec-overview}
In this section, we give a high-level overview of our main results. We will also introduce the basic definitions required to follow  the main techniques exhibited in the algorithms we introduce.
See \cref{sec-app-back}  for extended preliminaries.

We denote by $\ZZ$ and $\QQ$ the set of all integer and rational numbers, respectively.
We write~$\QQ[x]$ for the
ring of univariate polynomials with rational coefficients over~$x$, and 
$\QQ[\boldsymbol{x}]$ for the
ring of multivariate polynomials over variables
$\boldsymbol{x}=(x_1,\ldots, x_d)$.
Given a polynomial $P\in \QQ[\bm{x}]$, 
its total degree is defined as the
maximum total degree of its constituent monomials. Following~\cite{koiran1996hn},  
we define $P(\bm{x})$ as being written in \emph{dense representation} if it is given as an array of coefficients (both zero and nonzero) for all monomials up to its total degree. In contrast, it is in \emph{sparse representation} if it is given as an explicit set of monomials with nonzero coefficients. The size of the polynomial~$P(\bm{x})$, 
in either representation, is the bit length of a reasonable encoding of the polynomial in the required representation, with all numbers written in binary.
For instance, 
the size of $x^{2^n}$ is  $2^n+1$ when written in  dense representation, whereas it is  $n+2$ when written in sparse representation. 
By the above,  the degree of $P(\bm{x})$ in dense representation is 
at most its size, whereas in sparse representation it could be 
exponential in its size.    
Given a set~$S \subseteq \QQ[\boldsymbol{x}]$, 
its description size is  defined as the combination of 
its cardinality, the size of
its polynomials, 
 and the number of variables.

Recall that a complex number  is algebraic 
if it is a root of a nonzero univariate polynomial in~$\QQ[x]$.
Denote by $\alg$  the set of all algebraic numbers. 
The \emph{minimal polynomial} of  \(\alpha\in\QQbar\), denoted by $m_{\alpha}$,   is uniquely defined as the monic polynomial  in ~\(\QQ[x]\) of smallest degree for which \(\alpha\) is a root. 
For algorithmic purposes, we rely on a \emph{symbolic representation} of algebraic numbers\footnote{The symbolic representation of \(\alpha\in\QQbar\)  consists of~$m_{\alpha}$, in sparse representation, combined with  a triple $(a,b,R) \in \mathbb{Q}^3$ such that~$\alpha$ is the unique root of $m_{\alpha}$ that lies  within the $R$-radius circle centered at $(a,b)$ in the complex plane.}, which results in effective arithmetic, see~\cite[Section 4.2.1]{cohen2013course} for more details.

\medskip 

\noindent {\bf Complexity Theory.}
We briefly summarise some relevant notions from complexity theory~\cite{arora2009computational}. 
The  Arthur--Merlin complexity class (\(\AM\)) consists of all decision problems that 
admit a two-round interactive proof in which Arthur tosses some coins and sends the result to Merlin, who responds with a purported proof of membership in the language~\cite{babi1988arthurmerlin}. 
It is well-known that \(\AM\) contains both $\BPP$ (that is, the class of randomised polynomial time with two-sided error) and $\NP$. It is also contained in~$\Pi_2$, the second level of the polynomial hierarchy (\(\PH\)). 

Denote by ${\exists \mathbb{R}}$ the class of problems that are
polynomial-time reducible to the decision problem for the existential
theory of the reals.
Since the latter lies in $\PSPACE$ we have
that ${\exists \mathbb{R}}\subseteq \PSPACE$.
We recall that $\PSPACE$ is closed under $\PSPACE$-oracles, that is,
$\PSPACE^{\PSPACE}=\PSPACE$.

\medskip 

\noindent {\bf Simple Linear Loops.}
A simple linear loop consists of a
single-path loop with linear variable updates.  
For example, consider
the loop in \cref{loop-fibo}.  
It is easy to see
that after $n$ iterations of the \texttt{while} loop, the values of the variables $x$
and $y$ are given by the \((n+1)\)th and $n$th Fibonacci
numbers, respectively.
After \(n\) iterations, the value of variable $z$ is $(-1)^n$.

\begin{figure}[h!]
\vspace{-.2cm}
	\begin{algorithmic}
		\REQUIRE \((x,y,z)\leftarrow (1,0,-1)\)
		\WHILE{\texttt{true}}
		\STATE \begin{equation*}
			\begin{pmatrix} x \\ y \\ z \end{pmatrix} \leftarrow
			\begin{pmatrix}
				1 & 1 & 0\\
				1 & 0 & 0\\
				0 & 0 & -1
			\end{pmatrix}\begin{pmatrix}
				x\\
				y \\
				z
			\end{pmatrix}
		\end{equation*}
		\ENDWHILE
			\end{algorithmic}
			\caption{Fibonacci simple loop program}
			\label{loop-fibo}
			\Description{Fibonacci simple linear loop program}
   \vspace{-.2cm}
		\end{figure}

Recall that  $\textrm{Mat}_d(\QQbar)$ is the set of all $d\times d$ matrices with entries in~$\QQbar$.
A \emph{simple linear loop} $\loopie=\langle M, \boldsymbol{\alpha} \rangle$ is formally defined as a loop program of the form
\begin{equation*}
	\label{def-loop}
	\boldsymbol{x} \leftarrow \boldsymbol{\alpha}; \, 
	\texttt{while}\ \textbf{true} \ \texttt{do}\ \boldsymbol{x}\leftarrow M \boldsymbol{x},
\end{equation*}
where \(\boldsymbol{\alpha} \in \QQbar^d\) is an \emph{initial}  \emph{vector}
and \(M \in \textrm{Mat}_d(\QQbar)\) is an \emph{update matrix}.
The main focus of this paper is on rational and integer loops,
	where
 the constants of $M$ and $\bm{\alpha}$
 lie in~$\QQ$ and in $\ZZ$, respectively.
We define  loops more generally to encompass intermediate steps in our procedures.

The orbit of \(\loopie\) is the reachable set of program states defined by
$\mathcal{O} \coloneqq
	\{ M^n \boldsymbol{\alpha} : n \in \mathbb{N} \} \subseteq \QQbar^d$. The orbit of the program in \cref{loop-fibo} is
\[\left\{ \begin{pmatrix}
	1\\
	0\\
	-1
\end{pmatrix},
\;
\begin{pmatrix}
	1\\
	1\\
	\textover[c]{$1$}{$-1$}
\end{pmatrix},
\;  \begin{pmatrix}
	2\\
	1\\
	-1
\end{pmatrix},
\;  \begin{pmatrix}
	3\\
	2\\
	\textover[c]{$1$}{$-1$}
\end{pmatrix}, 
\; \begin{pmatrix}
	5\\
	3\\
	-1
\end{pmatrix} ,
\; \begin{pmatrix}
	8\\
	5\\
	\textover[c]{$1$}{$-1$}
\end{pmatrix} ,\ldots \right\}.
\]	
A  program with a finite orbit is called trivial, and non-trivial otherwise. 
The  above program  is non-trivial~\cite{kenison2023polynomial}.\footnote{The definition of a non-trivial loop aligns with that of a \emph{wandering point} in the arithmetic dynamics literature~\cite{silverman2007arithmetic,benedetto2019currenttrends}.
}

\medskip 

\noindent {\bf Algebraic Geometry.}  
A polynomial ideal~$I$ is an additive  subgroup of~$\QQ[\boldsymbol{x}]$ 
that is closed under multiplication by  polynomials in~$\QQ[\boldsymbol{x}]$.
Hilbert's Basis theorem states that every polynomial ideal $I \subseteq \QQ[\boldsymbol{x}]$ is
\emph{finitely generated},
equivalently, that every strictly
ascending chain of ideals in $\QQ[\boldsymbol{x}]$ is finite.  
We shall write $I=\langle S \rangle$ if $S\subseteq \QQ[\boldsymbol{x}]$
is a generating set for~$I$.

We view  program orbits  as a subset of the affine space~$\QQbar^d$ for some dimension~$d$.
Following~\cite{HrushovskiOP018,hrushovski2023strong}, we consider  over-approximations of program orbits by  \emph{algebraic sets}.
An algebraic set (or \emph{variety}) is the set of common zeros
of a polynomial ideal~$I$; thus, the  set
\(V(I)\coloneqq \{\boldsymbol{v} \in \QQbar^d :
f(\boldsymbol{v})=0 \text{ for all } f\in I\}\), 
is an algebraic set. By Hilbert's
Basis theorem, every algebraic set can be represented as the set of
common zeros of a finite set
of polynomials. 
An algebraic set $X\subseteq \QQbar^d$ is
\emph{irreducible} if it  cannot be written 
as  $X=X_1\cup X_2$ such that
$X_1$ and $X_2$ are both algebraic  sets and both proper  subsets of~$X$.  In
this paper we use the terms algebraic set and variety
interchangeably whereas certain authors reserve the term variety for
irreducible algebraic sets.

The \emph{strongest algebraic approximation} 
for a program is the smallest algebraic set containing the program orbit. In this context, \emph{smallest} refers to the closure in the Zariski topology 
on~$\QQbar^d$, 
where closed sets are algebraic subsets of~$\QQbar^d$.
For any set~$X \subseteq \QQbar^d$,  its closure in the Zariski topology on~$\QQbar^d$, denoted by $\overline{X}$, is
the smallest algebraic set that contains it.
Subsequently, the strongest  
algebraic approximation of a program is~$\zski$. 

\medskip 

\noindent {\bf Algebraic Invariant.}
Given a loop program $\mathcal{L}=\langle M,\boldsymbol{\alpha}\rangle$, a set $A \subseteq
\QQbar^d$ is an \emph{invariant} if it over-approximates the orbit~$\mathcal{O}$ of $\mathcal{L}$, that is, $\{ M^n \boldsymbol{\alpha} : n \in \mathbb{N} \} \subseteq A$.  
Moreover, if $A$ is an
algebraic set then we call \(A\) an \emph{algebraic invariant}. 
We may refer to the polynomial ideal $I\subseteq \QQ[\boldsymbol{x}]$ such that $V(I)=A$ as 
an invariant ideal for~$\mathcal{L}$.

 We say that a set $A
\subseteq \QQbar^d$ is \textit{inductive} with respect to $\mathcal{L}$ if $\boldsymbol\alpha \in A$
and \(\{ Mv : v\in A\} \subseteq A\).
It is immediate that an inductive
set is an invariant.  It is equally straightforward that the converse
fails in general: not every invariant is inductive. 
However, the loop
$\mathcal{L}$ admits a smallest algebraic invariant, namely the
Zariski closure $\overline{\mathcal{O}}$ of the orbit, and this set
\emph{is} inductive.  Indeed $ M(\zski)
\subseteq \overline{M(\mathcal{O})} \subseteq \zski$ holds by 
Zariski continuity of the self-map $v\mapsto Mv$ on $\QQbar^d$.
A non-trivial algebraic invariant for the  program depicted in
\cref{loop-fibo} is (the zero set of) the polynomial 
\begin{equation}
	\label{eq-invFibbo}
	(y^2+xy-x^2)^2-z^2=0;	
\end{equation}
see, for example~\cite{kauers2008algrel}. 
This set is an invariant by virtue of being inductive.  Indeed, we
have that
the initial point  $(1,0,-1)$
satisfies~\eqref{eq-invFibbo}.
Further, by substitution of the update assignments in \cref{loop-fibo}, we have that $
(x^2+(x+y)x-(x+y)^2)^2-(-z)^2 = (y^2+xy-x^2)^2-z^2 = 0$.
In other words, Equation~\eqref{eq-invFibbo} is stable under the
update matrix of the loop.

\subsection{Algebraic Invariant Generation}
\label{sec-overview-invgen}
Our main contribution {(\cref{theo-inver-pspace})} is  a \(\PSPACE\) algorithm that computes the strongest algebraic invariant of a given simple linear loop. 
This algorithm runs in polynomial time if  the dimension of the loop is fixed.

The geometric properties of the algebraic closure of cyclic subsemigroups of~$\textrm{Mat}_d(\QQ)$ are studied in~\cite{GalSt21toric}. 
Principally, it is shown that each irreducible component of
the  closure of a cyclic semigroup is either an isolated point or isomorphic to a toric variety. Recall that a \emph{toric variety}  is the closed image of a monomial map. 
Toric varieties form an important and rich class of varieties  in algebraic geometry, particularly in view of their combinatorial and algorithmic properties. 
To obtain the result in~\cref{theo-inver-pspace}, we  rely on the observation that the closures of cyclic subsemigroups of matrices and orbit closures of linear loops,    under some  isomorphism of varieties, have the same geometric structure. 
A careful analysis of the results in~\cite{GalSt21toric}, which inspired our research, provides an exponential degree bound for the polynomials defining the closure of matrix semigroups. Such an
exponential degree bound, combined with linearization techniques in~\cite{muller2004note,karr1976affine},
gives an inefficient $\EXPSPACE$ procedure for invariant generation.

In our construction, 
the main obstacle to improve the 
\(\PSPACE\)  bound is 
a basis computation for the lattice of multiplicity relations between the
eigenvalues of the update matrix. The best known bound for this task is through 
a brute-force search in  combination with Masser's bound, see~\cite{DerksenJK05,hrushovski2023strong,galuppi2021toric,NPSHW2021} and \cref{sec:alg}.
If the eigenvalues of the matrix updates are rational numbers, rational multiples of roots of unity, or rational multiples of unnested radicals, this task can be performed more efficiently through identity testing problems for the underlying fields~\cite{BalajiNS022,BalajiPS021}. In those cases, this allows us to place the invariant generation problem in the second level of the polynomial hierarchy~({\PH}).

\medskip

\noindent {\bf Related Work: Invariant Generation.}
Automatically generating invariants for simple linear loop programs has garnered significant attention. This task  remains highly challenging even for simple loop programs with multiple linear updates~\cite{kincaid2017reasoning,HrushovskiOP018,kovacs2008psolvable,humenberger2017hypergeometric}.

There is also a line of work addressing more expressive 
computational models, such as programs with nested loops, conditional branching, probabilistic updates, and unstructured control flow. 
However, these procedures are often limited by the types of polynomial invariants they can generate, rendering them necessarily incomplete. Typically, they produce polynomials only up to a user-defined degree~\cite{amrollahi2024loop, amrollahi2022loop, bayarmagnai2024algebraic, Muller-OlmS04}.
Amongst recent works, a complete method for generating polynomial invariants for extended P-solvable loops is presented in~\cite{humenberger2018psolvable}.
Another recent work~\cite{cyphert2024solvable} generates polynomial invariants for programs with arbitrary control flow, building on an earlier heuristic framework in~\cite{kincaid2017reasoning}. The approach is complete for solvable transition ideals, a generalisation of solvable polynomial maps~\cite{rodriguez2004inv}. However, the Gr\"obner basis calculations in~\cite[Algorithm 3]{cyphert2024solvable} dominate the running time, making the computation exponential~\cite[pg.21]{cyphert2024solvable}.
Recently, invariant generation for moment-computable probabilistic loops was shown to be decidable~~\cite{muellner2024strong}; the proof employs a reduction to invariant generation in our setting (that of simple linear loops).

\subsection{Invariant Verification}
In this subsection, we consider two decision problems concerning invariants for a given loop. The first problem asks whether a set is an inductive invariant;  the second problem asks whether a set is an invariant, 
not necessarily inductive.

Determining  whether an input  ideal
	defines an \emph{inductive set} for a given loop is in principal a  simple task, reducible to \emph{radical membership testing}.
Recall that the radical of a polynomial ideal~$I\subseteq \QQ[\boldsymbol{x}]$, denoted by $\sqrt{I}$, is  obtained by taking all roots of its elements within $\QQ[\boldsymbol{x}]$. By 
Hilbert's Nullstellensatz, if a variety $V$ is such that $V=V(I)$,  the only  polynomials that vanish on $V$ are those in~$\sqrt{I}$.
The \emph{radical membership test} asks to determine whether a polynomial~$P(\boldsymbol{x})\in \QQ[\bm{x}]$ belongs to the radical of a given ideal $I\subseteq \QQ[\boldsymbol{x}]$ (see~\cite[Proposition 8, Chapter 4]{cox2015ideals}).
Radical membership testing reduces to the satisfiability problem of  polynomial equations over~$\QQbar$.
This is a byproduct of  the \emph{Rabinowitsch trick}, see~\cite[Chapter 4.\;§2]{cox2015ideals} for example,  which shows
that $f \in \sqrt{I}$ if and only if 
$V(\ideal{I \cup \{1-yf\}})= \emptyset$  
where $I\subseteq \QQ[\boldsymbol{x}]$ and~$\ideal{I \cup \{1-yf\}} \in \QQ[\boldsymbol{x},y]$.

The satisfiability problem for polynomial equations over~$\QQbar$, 
also known as the \emph{Hilbert's Nullstellensatz} problem (HN for short),
takes as an input a set~$S$ of polynomial equations and asks to determine whether the system  is satisfiable, i.e., whether the
variety of the ideal~$\ideal{S}$
is non-empty.  It is known that
the  HN problem admits an $\AM$ protocol~\cite{Koiran97} under the generalised Riemann hypothesis. This result is independent of the representation of the input polynomials.

By the above, an algorithm to decide whether an input ideal 
$\ideal{S} \subseteq \QQ[\boldsymbol{x}]$  
defines  an inductive set for a given loop~$\langle M,
\boldsymbol{\alpha} \rangle$ must (1)~determine whether~$\boldsymbol{\alpha}$ satisfies all  relations in~$S$, and (2)~test whether $P(M \boldsymbol{x}) \in \sqrt{\ideal{S}}$ for all
polynomials~$P(\boldsymbol{x}) \in S$.
Both of these tests algorithmically reduce to HN.
An alternative approach is to write a query 
in  the existential theory of reals; this leads to  
an unconditional ${\exists \mathbb{R}}$ upper bound.
We note in passing that  for a fixed number of variables such queries can be decided in polynomial time. 

Determining whether an ideal defines an inductive set for a linear loop is less demanding than determining whether an ideal defines an invariant (not necessarily  inductive). 
 We call the  latter the \emph{invariant verification} problem.
A conceptually simple algorithm for the invariant verification
problem is a backward algorithm, used in similar settings in~\cite{benedikt2017polynomial,bayarmagnai2024algebraic,kauers2007equivalence,kauers2008solving}, which examines the ascending chain~$I_0 \subseteq I_1 \subseteq I_2 \subseteq \cdots$ of ideals where 
\begin{equation}
	\label{eq:chain1}
	I_0= \langle S \rangle \quad \text{ and } \quad
	I_i = \langle P(M^j\boldsymbol{x}): P\in S, j\leq i \rangle,
\end{equation}
for all $i\in \NN$. By virtue of $\QQ[\boldsymbol{x}]$ being Noetherian,
the ascending sequence 
of nested ideals in~\eqref{eq:chain1} stabilises. In our setting, by properties of linear transformations and  the well-known fingerprinting procedure for solving Algebraic Circuit Identity Testing (ACIT)~\cite{allender2009complexity}, we obtain the following.

\begin{restatable}{proposition}{lemweakverifycoNP}
\label{lem-weakverify-coNP}
	Given 
	$S \subseteq \QQ[\boldsymbol{x}]$  in dense representation and a  loop  with orbit~$\orbit$, 
	verifying
	whether $\mathcal{O} \subseteq V(S)$ is \(\coNP\)-complete.
\end{restatable}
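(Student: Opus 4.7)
The plan is to prove membership in $\coNP$ and $\coNP$-hardness separately.

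For the upper bound, I would exhibit a polynomial-size witness for non-containment $\mathcal{O} \not\subseteq V(S)$. For each $P \in S$ of total degree $\delta$, the polynomials $P(M^n\boldsymbol{x})$ all lie in the finite-dimensional subspace $V_\delta \subseteq \QQ[\boldsymbol{x}]$ of polynomials of degree at most $\delta$, whose dimension $N = \binom{d+\delta}{d}$ is bounded by the input size in dense representation. Let $W_i$ denote the $\QQ$-span of $\{P'(M^j\boldsymbol{x}) : P' \in S,\ j \leq i\}$; then $W_0 \subseteq W_1 \subseteq \cdots$ is an ascending chain of subspaces of $V_\delta$ that must stabilise within $N$ steps. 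Once $W_i = W_{i+1}$, applying the linear substitution $Q(\boldsymbol{x}) \mapsto Q(M\boldsymbol{x})$ componentwise gives $W_{i+2} = W_{i+1}$, so by induction the ideal chain~\eqref{eq:chain1} stabilises at some $i^\star \leq N$. It would then follow that $\mathcal{O} \subseteq V(S)$ if and only if $P(M^j \boldsymbol{\alpha}) = 0$ for every $P \in S$ and every $0 \leq j \leq N$.

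A $\coNP$ witness for the complement problem is thus a pair $(j, P)$ with $j \leq N$, $P \in S$, and $P(M^j\boldsymbol{\alpha}) \neq 0$. To verify such a witness, I would compute $M^j \boldsymbol{\alpha}$ by iterated matrix-vector multiplication (so that bit-lengths remain polynomial in the input size), evaluate the dense polynomial $P$ on the resulting vector, and test non-vanishing via the fingerprinting technique for \textsc{ACIT} of~\cite{allender2009complexity}.

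For $\coNP$-hardness, the plan is to reduce from a canonical $\coNP$-hard problem such as \textsc{Unsat}. Given a CNF formula $\phi$, I would construct a simple linear loop $\langle M, \boldsymbol{\alpha}\rangle$ together with a polynomial $P \in \QQ[\boldsymbol{x}]$ of polynomial dense size such that $P$ vanishes along every orbit iterate if and only if $\phi$ is unsatisfiable. A natural approach is to choose $M$ so that its orbit enumerates a polynomial-size set of distinguished points encoding partial variable assignments for $\phi$, and to design $P$ so that clause satisfaction corresponds to polynomial equations on those points.

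The main obstacle I anticipate lies on the lower-bound side: the orbit of a simple linear loop is structurally very restricted (its coordinates are linear recurrence sequences), so the combinatorial complexity of $\phi$ must be absorbed almost entirely into the polynomial $P$ while keeping its dense representation polynomial in $|\phi|$. Making the encoding work typically requires a careful interplay between the eigenvalues of $M$ and the monomial structure of $P$, and this is where the bulk of the technical effort would go.
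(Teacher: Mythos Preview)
Your upper-bound argument is correct and matches the paper's: both exploit that the linear substitution $\boldsymbol{x}\mapsto M\boldsymbol{x}$ preserves total degree, so the $\QQ$-spans $W_i=\operatorname{span}\{P(M^j\boldsymbol{x}):P\in S,\ j\le i\}$ form an ascending chain inside the space of degree-$\le\delta$ polynomials and stabilise within $\binom{d+\delta}{d}$ steps. You observe, correctly, that in dense representation this bound is itself at most the input size $s$, so the witness index $j$ is polynomial and you can verify by iterated matrix--vector multiplication followed by direct evaluation. The paper instead records only the looser bound $k=O(2^{s})$ and therefore resorts to repeated squaring plus ACIT fingerprinting to handle a potentially exponential index; your route is the cleaner of the two, and the ACIT appeal in your write-up is in fact unnecessary.

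The lower bound, however, is only a sketch and lacks the key construction. Your phrase ``its orbit enumerates a polynomial-size set of distinguished points'' is actually a warning sign: if the orbit were polynomially small one could simply check every point in polynomial time, and no hardness would follow. The paper's reduction does the opposite---it makes the orbit \emph{exponential} and arranges for it to sweep through all Boolean assignments via the Chinese Remainder Theorem. Concretely, for a 3CNF over $y_1,\ldots,y_k$ one takes the first $k$ primes $p_1<\cdots<p_k$, sets $d=\sum_j p_j$, lets $M$ be the block-diagonal matrix whose $j$th block is the $p_j\times p_j$ cyclic permutation matrix, and lets $\boldsymbol{\alpha}\in\{0,1\}^d$ have a single $1$ in each block. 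By CRT, the orbit then hits every $\{0,1\}$-vector with exactly one $1$ per block. Two designated positions in block $j$ encode the literals $y_j$ and $\neg y_j$; a single polynomial $Q$ is built as the product of a factor $\prod_{i\text{ non-literal}}(1-x_i)$ killing all configurations with a $1$ outside the literal positions and a clause-product factor $\prod_i Q_i$, so that $Q$ is nonzero at an orbit point precisely when that point encodes a satisfying assignment. The obstacle you anticipated is thus resolved not through an eigenvalue/monomial interaction but by taking $M$ to be a permutation with pairwise coprime cycle lengths; this is the idea your plan is missing.
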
 

The above na\"ive algorithm results in an inefficient \(\EXPSPACE\) upper bound  when the polynomials in~$S$ have sparse representation. One of our contributions shows that invariant verification is in \(\PSPACE\), even if the input  polynomials are given in sparse representation. 
Our procedure also verifies whether  $\zski =V(S)$, a task that 
cannot be performed through the study of the ideal chain in~\eqref{eq:chain1}.

\begin{restatable}{proposition}{lemweakverify}
	\label{lem-weakverify}	 
	Given 
	$S \subseteq \QQ[\boldsymbol{x}]$ and   a loop with the orbit~$\orbit$, verifying whether
	$\mathcal{O} \subseteq V(S)$ holds is in ${\PSPACE}$, and is 
	${\coNP}$-hard. Similarly, the test $\zski =V(S)$ can be performed in ${\PSPACE}$.  
	For all fixed $d\in \mathbb{N}$, both problems for $d$-dimensional loops are decidable in polynomial time. 
\end{restatable}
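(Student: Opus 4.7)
The plan is to leverage \cref{theo-inver-pspace} to obtain an explicit description of $\zski$ and then reduce both verification tasks to Hilbert's Nullstellensatz (HN) instances. First, invoke \cref{theo-inver-pspace} to compute a finite set $G \subseteq \QQ[\boldsymbol{x}]$ with $V(\ideal{G}) = \zski$, in polynomial space (or in polynomial time when $d$ is fixed), whose elements have polynomial-size representations.

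For the containment test, the key observation is that $\mathcal{O} \subseteq V(S)$ holds if and only if $\zski \subseteq V(S)$, since $\zski$ is the smallest algebraic set containing $\mathcal{O}$. By Hilbert's Nullstellensatz, this is in turn equivalent to $S \subseteq \sqrt{\ideal{G}}$, and each radical-membership query $P \in \sqrt{\ideal{G}}$ reduces via the Rabinowitsch trick to the co-HN instance $V(\ideal{G \cup \{1 - yP\}}) = \emptyset$ in $d+1$ variables. Each such instance has size polynomial in the input. Since HN lies unconditionally in $\PSPACE$ (via a reduction to the existential theory of algebraically closed fields, or to the existential theory of reals together with $\exists\mathbb{R} \subseteq \PSPACE$), and $\PSPACE$ is closed under polynomially many queries and under complementation, the overall procedure runs in $\PSPACE$. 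The matching $\coNP$-hardness is inherited directly from \cref{lem-weakverify-coNP}, which already gives the hardness in the dense-representation special case.

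The equality test $\zski = V(S)$ is handled by verifying both inclusions. The direction $\zski \subseteq V(S)$ is treated exactly as above; the direction $V(S) \subseteq \zski = V(\ideal{G})$ holds, by Hilbert's Nullstellensatz, if and only if every $g \in G$ lies in $\sqrt{\ideal{S}}$, which again reduces to a polynomial number of polynomial-size HN instances, all decidable in $\PSPACE$. Hence this task also lies in $\PSPACE$.

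For fixed $d$, \cref{theo-inver-pspace} produces $G$ in polynomial time, and HN in a fixed number of variables is polynomial-time decidable (e.g.\ via the existential theory of reals in fixed dimension). Both verification problems thus lie in $\PTIME$ for every fixed $d$. The main obstacle one wishes to avoid is the blow-up of the naive backward fixed-point procedure based on the ascending chain in~\eqref{eq:chain1}: with polynomials in sparse representation, the iterated ideals $I_i$ may grow exponentially before stabilisation, giving only an $\EXPSPACE$ bound. Computing the explicit generators of $\zski$ via \cref{theo-inver-pspace} sidesteps this blow-up entirely and keeps the whole verification within $\PSPACE$.
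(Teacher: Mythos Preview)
Your overall strategy (compute defining polynomials for $\zski$ via \cref{theo-inver-pspace}, then reduce both inclusions to radical-membership/HN queries) matches the paper's. The gap is in the assumption that \cref{theo-inver-pspace} yields a \emph{single} generating set $G\subseteq\QQ[\bm x]$ of polynomial total size with $V(\ideal{G})=\zski$. It does not: the construction in \cref{sec:computestrong} outputs $\zski$ as a union of up to $d$ isolated points together with one component $V(\mathcal J)$ (equivalently $\pi(V(\mathcal Y))$), each described by its own ideal $\mathcal I_0,\ldots,\mathcal I_{n_0-1},\mathcal J$. The paper explicitly notes (just before \eqref{orbit-closure-is}) that merging these into one ideal causes an exponential blow-up in the number of generators, and that avoiding this is ``crucial for the obtained $\PSPACE$ bound'' in this very proposition. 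Consequently your HN instances $G\cup\{1-yP\}$ need not have polynomial size, and for $V(S)\subseteq\zski$ there are not ``a polynomial number'' of $g\in G$ to test.

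The paper repairs both directions by working component-wise. For $\zski\subseteq V(S)$ one checks each isolated point $M^i\bm\alpha$ directly (ACIT) and then tests $P\in\sqrt{\mathcal J}$ for each $P\in S$ using the polynomial-size generators of $\mathcal J$ (or $\mathcal Y$). For $V(S)\subseteq\zski$ one needs every element of $\mathcal I_0\cap\cdots\cap\mathcal I_{n_0-1}\cap\mathcal J$ to lie in $\sqrt{\ideal S}$; rather than materialising this intersection, the paper checks the complement: guess one generator from each $\mathcal I_i$ and one from $\mathcal J$, form their product (polynomial size), and test whether it is \emph{not} in $\sqrt{\ideal S}$. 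There are $d^{O(d)}$ such products, but each is enumerable in $\PSPACE$ and each membership test is a polynomial-size HN query, so $\PSPACE$ closure under complement and under $\PSPACE$ oracles finishes the argument. Your fixed-$d$ claim survives, since then the number of components and products is bounded by a constant.
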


\medskip 

\noindent {\bf Related Work: Invariant Verification.}
 Invariant verification in the setting of matrix
semigroups (a non-deterministic model of loop programs) was shown to
be decidable by Dr\"{a}ger~\cite{drager2016invariance}.  While the
program model considered in~\cite{drager2016invariance} is more
expressive than linear loops (which essentially correspond to matrix
semigroups with a single generator), the class of invariants is restricted---the invariants therein are given by conjunctions of linear
equations.

\subsection{Weak and Strong Loop Synthesis}
Another application for the computation of strongest invariants, is the complementary view of 
 synthesising linear loops from a given   invariant. 
Most generally, the loop synthesis problem asks, given a polynomial ideal with 
 generating set $S\subseteq \QQ[\boldsymbol{x}]$, whether there exists a  linear loop with (an infinite) orbit~$\orbit$ such that 
$\zski \subseteq V(S)$. 
This problem and its  variants have recently received  much attention~\cite{humenberger2020algebra,humenberger2022algebra,hitarth2024quadratic}; see the discussion in related works below.

A recently defined variant of the loop synthesis problem sharpens the inclusion of $\zski$ in $V(S)$ by requiring equality of these two varieties.
We call this variant  the \emph{strong synthesis} problem, whereas we may refer to the loop synthesis problem as the \emph{weak synthesis problem}. As an instance, 
recall the  ideal~$I\coloneqq\langle (y^2+xy-x^2)^2-z^2 \rangle$ generated by the polynomial in~\eqref{eq-invFibbo}. 
Given~$I$ as the input, 
the loop program  in \cref{loop-fibo} is a witness for weak synthesis  but not for strong synthesis. 
Indeed, the strongest polynomial invariant for the loop in \cref{loop-fibo} is the ideal generated by $J=\langle y^2+xy-x^2-z,z^2-1 \rangle$ for which $I$ is a strict subset (meaning that $V(J)\subset V(I)$).

We study both weak and strong synthesis for  loops over $\ZZ$ and $\QQ$. 
An informal discussion in~\cite[Remark~2.8]{hitarth2024quadratic}
connects loop synthesis and  Diophantine equations. 
The immediate observation in \cref{claim:undec} formalises this connection.
The proof of \cref{claim:undec} is given in \cref{sec:omitted}.

\begin{restatable}{lemma}{claimundec}
	\label{claim:undec}
	The weak synthesis problem 
	over $\{\ZZ,\QQ\}$ is as hard as Hilbert’s Tenth problem (HTP) over $\{\ZZ,\QQ\}$, even in fixed dimension.
\end{restatable}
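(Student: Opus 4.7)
The plan is to reduce Hilbert's Tenth problem (HTP) over $R \in \{\ZZ, \QQ\}$ in $n$ variables to the weak synthesis problem over $R$ in dimension $n+2$. Given an HTP instance $P(x_1, \ldots, x_n) \in R[\boldsymbol{x}]$, I would produce as the synthesis instance the singleton set $S = \{P\}$, now viewed inside $\QQ[x_1, \ldots, x_n, y, w]$ where $y$ and $w$ are two fresh coordinates; observe that the variety $V(S) \subseteq \QQbar^{n+2}$ equals $\{\boldsymbol{v} \in \QQbar^n : P(\boldsymbol{v}) = 0\} \times \QQbar^2$. The role of $y$ and $w$ is to provide a linear counter that guarantees an infinite orbit without perturbing the first $n$ coordinates.

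For the forward direction, assume $P(\boldsymbol{a}) = 0$ for some $\boldsymbol{a} \in R^n$. I would exhibit the loop $\langle M, \boldsymbol{\alpha} \rangle$ with initial vector $\boldsymbol{\alpha} = (a_1, \ldots, a_n, 0, 1)^\top \in R^{n+2}$ and update matrix $M \in R^{(n+2) \times (n+2)}$ that is the identity on coordinates $1, \ldots, n$ and $n+2$ and whose $(n+1)$th row is $(0, \ldots, 0, 1, 1)$, so that $y \mapsto y + w$ and $w \mapsto w$. A direct induction yields $M^k \boldsymbol{\alpha} = (a_1, \ldots, a_n, k, 1)^\top$; hence $\orbit$ is infinite and $\zski$ is the line $\{\boldsymbol{a}\} \times \QQbar \times \{1\}$, which lies in $V(S)$ since the first $n$ coordinates remain fixed at the solution $\boldsymbol{a}$. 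Conversely, if some loop $\langle M, \boldsymbol{\alpha} \rangle$ over $R$ in dimension $n+2$ has infinite orbit with $\zski \subseteq V(S)$, then $\boldsymbol{\alpha} \in \orbit \subseteq V(S)$; since $\boldsymbol{\alpha} \in R^{n+2}$ by definition of a loop over $R$, its first $n$ coordinates constitute a solution to $P$ in $R^n$.

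To obtain the fixed-dimension statement, I would invoke Matiyasevich's theorem, which supplies a fixed $n_0$ (e.g.\ $n_0 = 9$) for which HTP over $\ZZ$ in $n_0$ variables is already undecidable; composed with the above reduction this yields the hardness of weak synthesis over $\ZZ$ in the fixed dimension $n_0 + 2$. For $R = \QQ$, the same dimension-preserving reduction shows that for every fixed $n$, HTP over $\QQ$ in $n$ variables reduces to weak synthesis over $\QQ$ in dimension $n+2$, establishing the claim independently of any open question on the decidability of rational HTP. The reduction is mechanical; the only delicate point is arranging that the synthesised loop has infinite orbit, which is precisely the role played by the counter variable $y$ together with the sustaining variable $w \equiv 1$ inside a purely linear (non-affine) update.
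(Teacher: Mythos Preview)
Your proof is correct and follows essentially the same reduction as the paper: adjoin two auxiliary coordinates and build a loop that fixes the first $n$ coordinates at a given $R$-point while generating an infinite orbit in the last two. The only cosmetic difference is that the paper realises the infinite tail via the update $\diag{1,\ldots,1,2,2}$ together with an added constraint $x_{d+1}-x_{d+2}$, whereas you use a unipotent counter and need no extra polynomial; both choices serve the same purpose.
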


HTP over~$\ZZ$ was shown undecidable by Matiyasevich in 1970~\cite{matiyasevich1970h10}, even for polynomial equations with fixed number of variables (as small as  11 variables)~\cite{poonen2008}.
The decidability of HTP over~$\QQ$ is  a long-standing open problem in  the theory of Diophantine Equations~\cite{shlapentokh2011h10} and in arithmetic geometry~\cite{poonen2008}. 
\cref{claim:undec} implies undecidability of the weak synthesis of loops over~$\ZZ$.  
In~\cite{davis1972numsol}, it is shown that 
the problem of asking whether a variety has infinitely many integer points is as hard as  HTP over~$\ZZ$. 
The reduction in ~\cite{davis1972numsol} can be extended naturally  to show that the problem of asking whether a variety has infinitely many rational points is also as hard as HTP over~$\QQ$. The strong synthesis of  
loops over $\ZZ$ and $\QQ$ has the flavour of this latter problem, and we leave the decidability open.

\medskip 

\noindent {\bf Bit-bounded Synthesis.} 
The HTP-hardness of the (weak)  synthesis problem  motivates the following notion of \emph{bit-bounded synthesis.}
The bitsize of a rational number $\frac{a}{b}$, with
$a$ and $b$ co-prime integers, is  $\log(|a|)+\log(|b|)$.
Given a linear loop~$\loopie=\langle M, \boldsymbol{\alpha} \rangle$,
we say that it 
is \emph{$B$-bounded rational}, if  all entries of~$M$ and $\alpha$
are rational numbers with bitsize at most~$B$.\footnote{\(B\)-boundedness does not place any restrictions on higher powers of \(M\) nor the orbit of the associated loop \(\loopie\).}

The \emph{strong bit-bounded synthesis problem} asks, given a set~$S \subseteq \QQ[\boldsymbol{x}]$ of polynomials and a bound~$B\in \NN$, 
whether there exists some $B$-bounded pair~$\langle M, \boldsymbol{\alpha} \rangle$
with orbit~$\orbit$ such that 
$\zski= V(S)$.
Similarly, the \emph{weak bit-bounded synthesis problem} asks
whether~$\zski\subseteq  V(S)$ holds.

We place the bit-bounded variants of loop synthesis, for both strong and weak cases, in $\PSPACE$ through the invariant verification problem.
We also establish complexity lower bounds for these problems through
 reductions from $\mathsf{3SAT}$
 and $\mathsf{Unique\;SAT}$.

\begin{restatable}{proposition}{thbitbounded}
	\label{th:bitbounded}
	The strong and weak bit-bounded synthesis problems over~$\{\QQ,\ZZ\}$ lie in $\PSPACE$. The weak variant  is \(\NP\)-hard; and the strong variant is \(\NP\)-hard under randomised reductions.
\end{restatable}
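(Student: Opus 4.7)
The plan for the \(\PSPACE\) upper bound is a standard guess-and-verify argument that leverages~\cref{lem-weakverify}. A \(B\)-bounded pair \(\langle M,\boldsymbol{\alpha}\rangle\) in dimension~\(d\) is described by \(O(Bd^2)\) bits, so one can iterate deterministically over all such pairs within polynomial space (equivalently, guess one nondeterministically and appeal to Savitch's theorem). For each candidate, I would first check that the orbit is infinite; this is known to lie in polynomial space by well-understood characterisations in terms of the eigenstructure of~\(M\) restricted to the \(M\)-invariant subspace generated by~\(\boldsymbol{\alpha}\) (finiteness is equivalent to all such eigenvalues being zero or roots of unity, with the non-zero eigenvalues appearing only in semisimple Jordan blocks). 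Next I would invoke the \(\PSPACE\) verification procedure from~\cref{lem-weakverify} to check \(\zski \subseteq V(S)\) for the weak variant, or \(\zski = V(S)\) for the strong variant. Since \(\PSPACE^{\PSPACE}=\PSPACE\), the overall procedure runs in polynomial space.

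For the \(\NP\)-hardness of the weak variant, the plan is to reduce from \(\mathsf{3SAT}\). Given a 3CNF formula~\(\varphi\) with variables \(x_1,\ldots,x_n\) and clauses \(C_1,\ldots,C_m\), I build the polynomial set
\[
S \;=\; \{\,x_i^2 - x_i : 1 \leq i \leq n\,\} \;\cup\; \{\,P_{C_j} : 1 \leq j \leq m\,\}
\;\subseteq\; \QQ[x_1,\ldots,x_n,y],
\]
where, encoding each positive literal \(x_k\) as the polynomial \(x_k\) and each negative literal \(\neg x_k\) as \(1-x_k\), the clause polynomial \(P_{C_j}=\prod_{\ell \in C_j}(1-\ell)\) vanishes on a Boolean point exactly when \(C_j\) is satisfied. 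Hence \(V(S)=A_\varphi\times\QQbar\), where \(A_\varphi\) is the set of satisfying Boolean assignments. With a constant bound~\(B\): if \(\varphi\) is satisfied by some \(\boldsymbol{a}\in\{0,1\}^n\), the loop with \(M=\diag{1,\ldots,1,2}\) and \(\boldsymbol{\alpha}=(\boldsymbol{a},1)\) is \(B\)-bounded, has the infinite orbit \(\{(\boldsymbol{a},2^k) : k\in\NN\}\), and satisfies \(\zski=\{\boldsymbol{a}\}\times\QQbar\subseteq V(S)\); conversely, any witnessing loop must have \(\boldsymbol{\alpha}\in V(S)\), whose first~\(n\) coordinates form a satisfying Boolean assignment.

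For the \(\NP\)-hardness of the strong variant under randomised reductions, the plan is to compose the Valiant--Vazirani randomised reduction from \(\mathsf{SAT}\) to \(\mathsf{Unique\;SAT}\) with a polynomial-time reduction from \(\mathsf{Unique\;SAT}\) to strong bit-bounded synthesis that reuses the same set~\(S\) as above. The analysis splits by the number of satisfying assignments: if~\(\varphi\) has none, \(V(S)=\emptyset\) cannot equal \(\zski\) because any orbit contains~\(\boldsymbol{\alpha}\) and is therefore non-empty; if~\(\varphi\) has a unique satisfying assignment~\(\boldsymbol{a}^\ast\), then \(V(S)=\{\boldsymbol{a}^\ast\}\times\QQbar\), which is realised as~\(\zski\) by the same loop as above because \(\{2^k:k\in\NN\}\) is Zariski-dense in~\(\QQbar\). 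Hence the strong-synthesis answer aligns with the uniqueness of the satisfying assignment, and pre-composing with Valiant--Vazirani yields the claimed randomised \(\NP\)-hardness.

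The main obstacle I anticipate is handling the ``middle'' case of multiple satisfying assignments for the strong variant, where \(V(S)\) becomes a union of affine lines that may or may not be realisable as an orbit closure in a controlled way. The Valiant--Vazirani promise sidesteps this issue: the randomised reduction only needs correctness on the 0-solution and 1-solution cases, both of which the construction handles cleanly. A secondary subtle point is the orbit-closure computation: for any polynomial \(P(x_1,\ldots,x_n,y)\) vanishing on the orbit, the univariate specialisation \(P(\boldsymbol{a}^\ast,y)\) has infinitely many roots and hence vanishes identically, so \(P\) lies in the ideal generated by \(x_1-a^\ast_1,\ldots,x_n-a^\ast_n\), giving the asserted equality of varieties.
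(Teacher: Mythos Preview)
Your proof is correct and follows essentially the same approach as the paper: a guess-and-verify argument via \cref{lem-weakverify} for the \(\PSPACE\) upper bound, and reductions from \(\mathsf{3SAT}\) and \(\mathsf{Unique\;SAT}\) (via Valiant--Vazirani) for the hardness results. The only cosmetic differences are that you use a single extra ``free'' variable~\(y\) where the paper uses two variables linked by \(x_{d+1}-x_{d+2}\), and you are slightly more explicit about checking orbit infiniteness in the upper bound.
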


When synthesising loops with a fixed dimension~$d\in \NN$,
we obtain an $\NP$ upper bound.
For weak synthesis over~$\ZZ$ we show that the problem is complete for the class~$\NP$.

\begin{restatable}{proposition}{thbitboundedfix}
	\label{th:bitboundedfix}
	For all fixed $d\in \NN$, the strong and weak bit-bounded synthesis problems in dimension $d$ over~$\{\QQ,\ZZ\}$ are in $\NP$. Moreover, weak bit-bounded synthesis over~$\ZZ$ is $\NP$-complete.
\end{restatable}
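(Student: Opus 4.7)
The plan has two parts: establishing $\NP$ membership for all four variants (weak/strong, over $\QQ$/$\ZZ$), and proving $\NP$-hardness of weak bit-bounded synthesis over~$\ZZ$.

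\textbf{$\NP$ upper bound.} In fixed dimension~$d$, a $B$-bounded candidate $\langle M, \boldsymbol{\alpha}\rangle$ is specified by $d(d+1)$ rational (or integer) entries of bit-size at most~$B$, yielding a certificate of size polynomial in the input. To verify such a candidate, first apply the polynomial-time algorithm of Theorem~\ref{theo-inver-pspace} to compute a finite polynomial set defining $\zski$. Then (i)~use Proposition~\ref{lem-weakverify} (polynomial-time for fixed $d$) to decide $\zski \subseteq V(S)$ in the weak case, or $\zski = V(S)$ in the strong case; and (ii)~check that the loop is non-trivial, equivalently that $\dim \zski \geq 1$, which is a standard dimension computation on the ideal generators and is polynomial-time for fixed $d$. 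All steps run in polynomial time.

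\textbf{$\NP$-hardness of weak synthesis over~$\ZZ$.} I reduce from bounded integer bivariate polynomial root-finding: given $p(x_1, x_2) \in \ZZ[x_1, x_2]$ and $B \in \NN$, decide whether there exist $(c_1, c_2) \in \ZZ^2$ with $|c_i| \leq 2^B$ and $p(c_1, c_2) = 0$. This problem is $\NP$-hard via classical encodings of subset-sum or of binary quadratic Diophantine equations \`a la Manders and Adleman. Given an instance $(p, B)$, construct the weak synthesis instance in dimension $d = 3$ over~$\ZZ$ with $S = \{p(x_1, x_2)\} \subseteq \QQ[x_1, x_2, x_3]$ and bound~$B$.

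For the forward direction, given a bounded integer zero $(c_1, c_2)$ of $p$, take $\boldsymbol{\alpha} = (c_1, c_2, 1)^\top$ and $M = \diag{1, 1, 2}$; all entries are $B$-bounded integers. The orbit $\{(c_1, c_2, 2^n) : n \in \NN\}$ is infinite and Zariski dense in the affine line $\{c_1\} \times \{c_2\} \times \QQbar$, so $\zski$ equals this line, which lies in $V(S)$ precisely because $p(c_1, c_2) = 0$. For the backward direction, any valid witness has $\boldsymbol{\alpha} \in V(S) \cap \ZZ^3$ of bit-size at most~$B$, so $(\alpha_1, \alpha_2)$ is a bounded integer zero of~$p$. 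For any fixed $d > 3$, pad with additional trivial coordinates (entries of $\boldsymbol{\alpha}$ set to zero and corresponding identity blocks in~$M$); the reduction is unchanged.

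\textbf{Main obstacle.} The crux is verifying the invariant property in polynomial time for fixed dimension. A na\"ive approach of iterating $M^n \boldsymbol{\alpha}$ is infeasible because the bit-size of orbit entries grows with~$n$, and computing $\zski$ by general-purpose elimination is exponential. The polynomial-time guarantees for invariant generation (Theorem~\ref{theo-inver-pspace}) and invariant verification (Proposition~\ref{lem-weakverify}) in fixed dimension are precisely what enable the $\NP$ upper bound here; without them, one would obtain only a $\PSPACE$ bound matching \cref{th:bitbounded}.
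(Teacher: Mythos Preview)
Your $\NP$ upper bound is correct and essentially matches the paper's: guess the $B$-bounded loop and invoke the polynomial-time invariant verification of \cref{lem-weakverify} for fixed~$d$. (The explicit call to \cref{theo-inver-pspace} is redundant since \cref{lem-weakverify} already subsumes it, and the non-triviality check is a harmless addition.)

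The $\NP$-hardness argument, however, has a genuine gap. You reduce from ``bounded integer bivariate root-finding'' and assert this is $\NP$-hard via Manders--Adleman, but that result concerns solutions in~$\N^2$, and the non-negativity constraint is essential: over~$\ZZ$ the equation $ax^2+by=c$ is easy to analyse. One can try to recover non-negativity from the bound by noting that $|x|\le\sqrt{c/a}$ forces $y\ge 0$, but the bit-bounded synthesis problem imposes a \emph{single} bound~$B$ on all entries of~$\bm{\alpha}$, whereas the implicit Manders--Adleman bounds on $x$ and $y$ are asymmetric ($\sqrt{c/a}$ versus $c/b$). Taking $2^B$ large enough to accommodate~$y$ then permits values of~$x$ with $|x|>\sqrt{c/a}$, creating spurious solutions with $y<0$ in the backward direction. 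Your alternative appeal to subset-sum does not obviously yield a bivariate encoding either, since bit-extraction is not polynomial.

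The paper circumvents this by using Lagrange's four-square theorem: it introduces four auxiliary variables $y_1,\ldots,y_4$ and the relation $y=y_1^2+y_2^2+y_3^2+y_4^2$, which forces $y\ge 0$ over~$\ZZ$ while still ranging over all of~$\N$. Together with two further variables carrying the $\diag{2,2}$ block (your $x_3$ r\^ole), this gives a reduction in dimension~$8$ rather than~$3$. Your reduction would be repaired by adopting the same device, at the cost of the smaller dimension you claim.
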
  

\medskip 

\noindent {\bf Related Work: Loop Synthesis.}
Program synthesis, conceived as the problem of generating constraints
that relate unknowns and enforce correctness requirements, has
received significant
attention~\cite{alur2018search,gulwani2017program}.  Recent works
focusing on polynomial invariants \cite{hitarth2024quadratic,humenberger2022algebra,humenberger2020algebra, kenison2023polynomial}
have leveraged algebraic techniques to recast the problem of loop
synthesis as that of solving an algebraic system of recurrence
sequences.  The template-based procedure in the work by Humenberger et
al.~\cite{humenberger2022algebra,humenberger2020algebra} finds the
solution to this system of recurrences by solving a polynomial
constraint problem; however, we note their solutions result in loops
defined over \(\alg\).  A Diophantine approach to loop synthesis is employed in
\cite{hitarth2024quadratic}; therein those authors synthesise loops when
given a single quadratic equation as an input.  Another recent work~\cite{kenison2023polynomial} 
gave a procedure for synthesising loops for an input binomial ideal.

Matrix completion is the task of completing a partially defined matrix according to a given specification~\cite{johnson1990matrix}.
Perhaps the most notable variant is the \emph{Netflix Problem}~\cite{Candes2010} (an application in collaborative filtering~\cite{goldberg1992collaborative}).   
In this application the
goal is to complete a matrix of movie recommendations so as to
minimise the rank (or some proxy thereof such as the nuclear norm).
This is reminiscent of template-based approaches towards program
synthesis~\cite{srivastava2013template}.
In the language of loop synthesis,   the partial matrix represents an
incomplete program fragment and the task is to complete the program so
as to guarantee certain desired polynomial invariants.

\section{Algebraic Closure of Linear Loops}
\label{sec-algLoops}
Our goal in this section is to construct the strongest algebraic invariant of an input loop. 
Fix a loop~$\loopie$ with update matrix~\(M\in\QQ^{d\times d}\) and initial vector~\(\bm{\alpha}\in\QQ^d\).
We define the size of $\loopie$ as $d + \ell$, where  $\ell$ is the bitsize of the entries of $M$ and $\boldsymbol{\alpha}$.

As in~\cite{DerksenJK05,hrushovski2023strong,galuppi2021toric,NPSHW2021}, our first step is to compute the Jordan normal form 
of the update matrix~$M$ via a Jordan decomposition.
A Jordan decomposition of~$M$ comprises  a Jordan  matrix $J$, and a  change-of-basis matrix~$P$ which satisfy $M = P J P^{-1}$.
Recall that the matrix $J$ is such that  the only nonzero entries of $J$ are on its diagonal and its superdiagonal, and that  
the matrix~$P$ in the Jordan decomposition is not unique.
In \Cref{sec:JBD}, roughly speaking, we show that~$P$ can be chosen 
such that $P^{-1} \bm{\alpha}$ is a binary vector.
This  simplifies the computation of the invariant equations, derived
in~\Cref{sec:computestrong}, and our proposed algorithms.  
We analyse the computational complexity of the 
algorithm in~\Cref{sec:complexity}.
Worked examples demonstrating the algorithm are given in~\cref{sec-worked-examples}.

\subsection{A Convenient Jordan Block Decomposition}
\label{sec:JBD}

The \emph{Jordan normal form} \(J\in\alg^{d\times d}\) of \(M\) is given by  the direct sum of several so-called  Jordan blocks 
$J_1, \dots, J_s$. 
Each block $J_i$ is a square matrix of dimension~$d_i$ as follows:
\[
J_i \coloneqq \begin{pmatrix}
	\lambda_i & 1 & & \\
	& \ddots & \ddots & &\\
	&  & \lambda_i & 1\\
	& & & \lambda_i \\
\end{pmatrix}.
\]
The multiset  $\{\lambda_1, \dots, \lambda_s\}$ of algebraic numbers is  equal to  the multiset of eigenvalues of~$M$. Indeed, for each   eigenvalue  $\lambda_i$ the sum of the sizes of all Jordan blocks corresponding to~$\lambda_i$ is its algebraic multiplicity.
Recall  that 
\(J\) is unique up to the ordering of its Jordan blocks.

A \emph{Jordan decomposition} of \(M\), denoted by $(P,J)$, consists of the associated Jordan normal form \(J\) and an invertible change-of-basis matrix \(P\in\alg^{d\times d}\) for which \(M=PJP^{-1}\).
According to the $s$ blocks of~$(P,J)$, we can partition  $d$-dimensional vectors $\bm{v}$ into $s$ block vectors $\bm{v}_i$ such that 
\begin{equation}
	\label{eq-partblock}
	\bm{v}\coloneqq(\bm{v}_1,\ldots, \bm{v}_s)
\end{equation} 
where  each  $\bm{v}_i$  is a vector of size~$d_i$.
The \emph{fingerprint} of~$\bm{\alpha}$ with respect to~$(P,J)$ is
defined as a binary vector~$\bm{\beta}\in \{0,1\}^d$
where, for each block $\bm{\beta}_i=(\beta_{i,1},\ldots, \beta_{i,d_i})$ with $i\in \{1,\ldots,s\}$,
\begin{itemize}
	\item the entry~$\beta_{i,j}$ is $1$ if and only if
	$j$ is the largest index  such that $\alpha_{i,j}$ is nonzero. 
\end{itemize}
We define a Jordan decomposition~$(P,J)$ as  \emph{convenient} for~$\langle M, \bm{\alpha}\rangle$ if $P^{-1} \bm{\alpha}$ 
is the fingerprint of~$\bm{\alpha}$ with respect to the decomposition.

\begin{restatable}{lemma}{oneinblock}
	\label[lemma]{oneinblock}
	For a loop~$\langle M, \bm{\alpha} \rangle$ and any Jordan decomposition~$(P,J)$ of  $M$,
	there exists a matrix $U \in \GL_d(\alg)$ such that 
	$(PU^{-1},J)$  is a convenient Jordan 
	decomposition for~$\langle M, \bm{\alpha} \rangle$. The computation of~$U$ is in polynomial time in the size of the loop.	
\end{restatable}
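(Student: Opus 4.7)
The plan is to locate $U$ inside the commutant of $J$: if $UJ = JU$, then $(PU^{-1})J(PU^{-1})^{-1} = PJP^{-1} = M$, so $(PU^{-1},J)$ is automatically a Jordan decomposition of $M$. Writing $\bm{a} \coloneqq P^{-1}\bm{\alpha}$, the convenient requirement $(PU^{-1})^{-1}\bm{\alpha} = \bm{\beta}$ becomes the linear equation $U\bm{a} = \bm{\beta}$. The whole task therefore reduces to finding an invertible $U$ in the commutant of $J$ that sends $\bm{a}$ to its fingerprint $\bm{\beta}$.

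The structural input I would invoke is the classical description of the commutant of a single Jordan block: writing $J_i = \lambda_i I + N_i$ with $N_i$ the standard nilpotent shift, the matrices commuting with $J_i$ are precisely the $d_i \times d_i$ upper-triangular Toeplitz matrices. Consequently any block-diagonal $U = \diag{U_1, \ldots, U_s}$ with each $U_i$ upper-triangular Toeplitz commutes with $J$, and it suffices to solve $U_i \bm{a}_i = \bm{\beta}_i$ independently in each block.

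Within block $i$ I would argue as follows. If $\bm{a}_i = 0$ then $\bm{\beta}_i = 0$ and $U_i = I$ works. Otherwise let $k_i$ be the largest index with $a_{i,k_i} \neq 0$, so $\bm{\beta}_i$ is the $k_i$-th standard basis vector. Parameterising $U_i$ by its first row $(c_1, \ldots, c_{d_i})$, the system $U_i \bm{a}_i = \bm{\beta}_i$ is triangular in the $c_j$: the equation at row $k_i$ forces $c_1 = 1/a_{i,k_i} \neq 0$, and the equations at rows $k_i - 1, k_i - 2, \ldots, 1$ successively determine $c_2, \ldots, c_{k_i}$ by back-substitution, while $c_{k_i + 1}, \ldots, c_{d_i}$ remain free and can be set to zero. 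Since $c_1 \neq 0$, each $U_i$, and hence $U$, is invertible.

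For the complexity claim, computing $P^{-1}\bm{\alpha}$, identifying each pivot $k_i$, and running the block-wise back-substitutions use only polynomially many symbolic arithmetic operations on the algebraic numbers appearing in $(P, J, \bm{\alpha})$, each of which runs in polynomial time under the symbolic representation assumed throughout the paper. The step I expect to require the most care is confirming that the block-diagonal Toeplitz ansatz suffices even when several Jordan blocks of $J$ share an eigenvalue---the commutant of $J$ then contains additional off-diagonal pieces, but our construction deliberately restricts to the block-diagonal part, which commutes with $J$ unconditionally and is all that is needed to reach the fingerprint.
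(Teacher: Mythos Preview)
Your proposal is correct and follows essentially the same route as the paper: construct $U$ block-diagonally inside the commutant of $J$, taking each $U_i$ to be an upper-triangular Toeplitz matrix whose entries are fixed by back-substitution from the requirement $U_i\bm{a}_i=\bm{\beta}_i$, with the leading entry $c_1=1/a_{i,k_i}$ guaranteeing invertibility. The paper's argument is identical up to notation (it writes the Toeplitz parameters as $b_r,\ldots,b_1$ rather than your $c_1,\ldots,c_{k_i}$), and your extra remarks on the zero-block case and on repeated eigenvalues only make explicit what the paper leaves implicit.
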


\begin{proof}
   
    Let \((P,J)\) be a Jordan decomposition of \(M\), 
wherein all algebraic numbers are represented \emph{symbolically}. The computation of \((P,J)\) and $P^{-1}$ can be performed in polynomial time~\cite{cai1994jnf,cai2000complexityabc}.

    Write 
    $J = J_1 \oplus \cdots \oplus J_s$ as a direct sum of Jordan blocks.  Consider the corresponding block decomposition of the vector $P^{-1} \boldsymbol{\alpha} = \boldsymbol{\alpha}_1 \oplus \cdots \oplus \boldsymbol{\alpha}_s$. We  construct a matrix $U = U_1 \oplus \cdots \oplus U_s$ such that $UP^{-1}\boldsymbol{\alpha} \in \{0,1\}^d$ and $U$ commutes with $J$. To this end, suppose that  $\boldsymbol{\alpha}_i $ is given by $\boldsymbol{\alpha}_i = (\alpha_{1,i}, \ldots, \alpha_{r,i}, 0\ldots, 0)^\top$ such that $\alpha_{r,i} \neq 0$. We define
\[ 
U_i \coloneqq \begin{pmatrix}
    b_r & \cdots  & b_1     &     \cdots     &  0 \\
        & \ddots  &         &  \ddots   &  \vdots \\
        &         & \ddots  &            & b_1 \\
        &         &          &  \ddots   & \vdots  \\
        &         &          &           &b_r
\end{pmatrix}
\]
such that each \(U_i\) is upper triangular and  Toeplitz (meaning that 
along each diagonal the entries are constant).
The entries $b_1,\ldots,b_r$ are chosen such that 
    $U_i \boldsymbol{\alpha}_i = \bm{e}_r$, where $\bm{e}_r$ is the standard unit vector with $1$ in the $r$th position. 
    Specifically,  $b_r = 1/{\alpha_{r,i}}$, and the other $b_j$'s are defined one-by-one for $j=r-1,\ldots,1$ by back substitution. 
    
    By construction we have $UP^{-1}\boldsymbol{\alpha} \in \{0,1\}^d$.
    By a well-known property of upper triangular Toeplitz matrices, each $U_i$ commutes with $J_i$, implying  that~$J U = U J$.  Define $Q \coloneqq PU^{-1}$, and observe that $M =P J P^{-1} = Q J Q^{-1}$.  Moreover,  $Q^{-1}\boldsymbol{\alpha}\in \{ 0, 1\}^d$ holds and  $Q^{-1}\boldsymbol{\alpha}$ is the fingerprint of $\bm{\alpha}$. Hence, the Jordan decomposition $(PU^{-1}, J)$ is convenient for $\langle M, \boldsymbol{\alpha} \rangle.$ 
    \end{proof}

\subsection{Computing the Strongest Algebraic Invariant} 
\label{sec:computestrong}

Let $(P,J)$ be an arbitrary Jordan decomposition of~$M$ such that all Jordan blocks with zero eigenvalue appear first.
Consider the orbit \(\mathcal{O}\coloneqq\{ M^n \boldsymbol{\alpha} : n \in \mathbb{N} \}\) of the loop. We are interested in computing 
the polynomials that define  the algebraic set 
\(
\zski=\overline{\{M^n \boldsymbol{\alpha} : n \in \mathbb{N} \}}.
\)
Since
\(M^n\boldsymbol{\alpha} = PJ^nP^{-1}\boldsymbol{\alpha}\)
holds for all~$n\in \NN$,
we  compute the polynomials that define \(\zski\) in two steps:
\begin{itemize}
	\item first, we can compute the polynomials that define the algebraic set
	\(\overline{\{ J^n \; (P^{-1}\boldsymbol{\alpha}): n \in \mathbb{N} \}}\) and
	\item second, we apply the invertible linear transformation \(P\).
\end{itemize}

For the  decomposition~$(P,J)$ of~$M$,
let $N$ be the direct sum of all Jordan blocks associated with   eigenvalue~$0$.
Let $d_0$ be the dimension of~$N$; that is, the algebraic multiplicity of~$0$. Write
$J_1, \dots, J_s$ for all Jordan blocks  with associated nonzero eigenvalues
$\lambda_1, \dots, \lambda_s$, respectively.  
Denote by $d_i$ the dimension of the Jordan blocks~$J_i$. 
Thus
	\(J= N \oplus J_1 \oplus \cdots \oplus J_s\).

The above-mentioned matrix~$N$  is \emph{nilpotent}, i.e., there is an integer $m \leq d_0$ such that $N^m = 0$. 
Thus  the set \(\{ N^n : n \in \mathbb{N} \}\) is finite, which in turn implies that the  algebraic set
$\overline{\{ J^n \; (P^{-1}\boldsymbol{\alpha}): n \in \mathbb{N} \}}$ contains at most  $m$ isolated points: 
\begin{equation}
	\label{def-isp}
	\{P^{-1}\boldsymbol{\alpha} , J P^{-1}\boldsymbol{\alpha} , \ldots, J^{m -1} P^{-1}\boldsymbol{\alpha}\}.
\end{equation} 
Write $n_0$ for the number of distinct isolated points in the above set.
Define 
\(\bm{\gamma}\coloneqq J^{n_0} \; (P^{-1}\boldsymbol{\alpha})\). 
We obtain $\Tilde{J}$ from \(J = N \oplus J_1 \oplus \cdots \oplus J_s\) by replacing $N$ with the  zero matrix (of   size
$d_0\times d_0$).
By~\eqref{def-isp} and the definitions of~$\bm{\gamma}$ and $\Tilde{J}$,  
the  algebraic set
$\overline{\{ J^n (P^{-1}\boldsymbol{\alpha}): n \in \mathbb{N} \}}$ decomposes into 
\[\{J^i P^{-1}\boldsymbol{\alpha} \, :\, 0\leq i< n_0\}\cup  \overline{\{ \Tilde{J}^n \, \bm{\gamma}: n \in \mathbb{N} \}}.
\]
This allows us  to   first focus on the  Zariski closure of the loop with the invertible transition   matrix~$J_1 \oplus \cdots \oplus J_s$, and then recover~$\zski$. 

Towards this goal, we apply Lemma~\ref{oneinblock} to $\langle \Tilde{J}, \boldsymbol{\gamma} \rangle$ in order to compute the matrix~$U$ 
for which $(PU^{-1},\Tilde{J})$ is a convenient Jordan decomposition
for~$\langle \Tilde{J}, \bm{\gamma} \rangle$, which by  construction
respects the ordering of the Jordan blocks in~$\Tilde{J}$.
Define
\(\bm{\beta}\coloneqq U\bm{\gamma}\).  
Consider the partition of $\bm{\beta}$ according to the decomposition~$(P,J)$, defined in \eqref{eq-partblock}. Denote by 
$\bm{\beta}_i$ the block of $\bm{\beta}$   
that corresponds to the Jordan block~$J_i$.
Thanks to  the convenient Jordan decomposition, each block~$\bm{\beta}_i$ is either a zero vector, or a standard unit vector.
For block~$\bm{\beta}_i$, we refer to $k_i$ as the index of the nonzero entry $\beta_{i, k_i}$ of~$\boldsymbol{\beta}_i$.  
(We assume $k_i = 0$ in the case that $\beta_i$ is a zero vector.)

Consider $ i\in \{ 1, \ldots, s\}$ such that the block~$\bm{\beta}_i$ is a standard unit vector. 
By analysing $J_i^n \bm{\beta}_i$, we find that
\begin{equation}\label{eq:block}
	J_i^n \bm{\beta}_i=   \begin{pmatrix}
		\binom{n}{k_i -1} \lambda_i^{n-k_i+1}\\
		\binom{n}{k_i -2} \lambda_i^{n-k_i+2}\\
		\vdots\\
		n \lambda_i^{n-1} \\
		\lambda_i^n\\
		0\\
		\vdots\\
		0
	\end{pmatrix}
\end{equation}
holds for  all $n \in \N$.

At the next step we introduce a linear transformation $R \coloneqq I_{d_0} \oplus R_1 \oplus  \cdots \oplus R_s$
to simplify~\eqref{eq:block} such that 
\begin{equation}\label{eq:only-powers-n}
	R_iJ_i^n \bm{\beta}_i=   \begin{pmatrix}
		n^{k_i - 1} \lambda_i^n\\
		n^{k_i - 2} \lambda_i^n\\
		\vdots\\
		n \lambda_i^n \\
		\lambda_i^n\\
		0\\
		\vdots\\
		0
	\end{pmatrix}.
\end{equation}
Recall the combinatorial identity
\begin{equation}\label{eq:stirling2nd}
	n^k = \sum_{i=1}^{k} i!  \genfrac{\{}{\}}{0pt}{}{k}{i}  \cdot \genfrac{(}{)}{0pt}{}{n}{i} = \sum_{i=1}^{k} c_{k,i} \genfrac{(}{)}{0pt}{}{n}{i},
\end{equation}
where $ \genfrac{\{}{\}}{0pt}{}{k}{i} = \frac{c_{k,i}}{i!}$ is the Stirling number of the second kind.
Since the Stirling numbers are defined recursively by the relation \(c_{k+1,i} = i (c_{k,i} + c_{k,i-1})\),
we can compute the coefficients in~$R_i$, starting with $c_{1,1} = \dots = c_{k_i-1,1} = 1$, in polynomial time.
The block matrix \(R_i\) is defined below.

	\begin{equation} \label{eq:Ri}
		\renewcommand{\arraystretch}{1}
		R_i \coloneqq 
		\left( \begin{array}{@{}cccccc|c@{}}
			
			c_{k_i-1,k_i-1}\lambda_i^{k_i-1} & c_{k_i-1,k_i-2}\lambda_i^{k_i-2} & &\dots & c_{k_i-1,1} \lambda_i & 0 &\\
			& c_{k_i-2,k_i-2}\lambda_i^{k_i-2} & & & c_{k_i-2,1} \lambda_i & \vdots & \\
			& & \vphantom{\ddots} & & \vdots & & \\
			& & \ddots & & & & \\
			&  &  &c_{2,2}\lambda_i^2 & c_{2,1}\lambda_i & & \\
			& & & & c_{1,1}\lambda_i & 0 & \\
			&  & & &  & 1 & \\
			\hline
			& & & & & & I_{d_i-k_i}
		\end{array} \right).
	\end{equation}

One can  see that \eqref{eq:only-powers-n} is a direct consequence of \eqref{eq:block}, \eqref{eq:stirling2nd} and \eqref{eq:Ri}.
We note that, by construction, $R$ is an invertible matrix.

Let $\bm{x}=(x_1,\ldots,x_d)$ where $d$ is the dimension of the matrix~$M$.  
Consider the partition of $\bm{x} = (\bm{x}_0, \ldots, \bm{x}_s)$ according to the decomposition~$(P,J)$, defined in \eqref{eq-partblock}. Let
\(\bm{x}_{i}=(x_{i,1},\ldots,x_{i,d_i})\)
be the block vector corresponding  to the Jordan block~$J_i$ with $i\in \{1,\ldots,s\}$
and  
\(\bm{x}_{0}=(x_{0,1},\ldots,x_{0,d_0})\) the concatenation of the block vectors  corresponding to Jordan blocks with zero eigenvalues.

In the following, we construct a set of polynomials in $\QQ[\bm{x}]$ that define the variety	
\begin{equation}
	\label{eq-r-closure}
	R \cdot  \overline{\{ \Tilde{J}^n \, \bm{\beta}: n \in \mathbb{N} \}},
\end{equation}
which in turn helps us define the set of polynomials for  the algebraic  closure of the orbit of~$\langle M, \bm{\alpha}\rangle$.

Recall that the only nonzero entry of the block~$\boldsymbol{\beta}_i$ is indexed by $k_i$ if $\boldsymbol{\beta}_i$ is a standard unit vector, and 
$k_i = 0$ otherwise.
Define the set~$S_1$ of polynomials in 
$\QQ[x_{1,k_1}, \ldots, x_{s,k_s}]$ such that 
\begin{equation}
	\label{def-s1}	
	V(S_1) = \overline{\{(\lambda_1^n, \ldots, \lambda_i^n, \ldots , \lambda_s^n)\, :\, n \in \N; \, \bm{\beta}_i\neq 0\}}.
\end{equation}
The task of computing the defining polynomials is well-understood, see for example~\cite{DerksenJK05,hrushovski2023strong,galuppi2021toric,NPSHW2021} and
\cref{sec:alg} for details. 
The underlying idea is to compute a basis for the lattice of  multiplicity relations between the eigenvalues.

For each Jordan block~$J_i$ with $i\in \{1,\ldots,s\}$,
define the set~$S_{2,i}$ of polynomials in 
$\QQ[\bm{x}_{i}]$ as follows: 
 \begin{equation*}
	S_{2,i} \coloneqq 
	\begin{cases}
		\{ x_{i, k_{i}-1}^j \,  - x_{i, k_{i}-j} x_{i, k_i}^{j-1} \, : \, 2 \leq j \leq k_i -1 \} & \text{if $k_i\geq 3$,}\\
		\emptyset & \text{otherwise}.
	\end{cases}
\end{equation*}
This set captures the relations between the nonzero entries of the $i$th block, as in~\eqref{eq:only-powers-n}.

For each pair of distinct Jordan blocks~$J_i,J_j$, with $i,j\in \{1,\ldots,s\}$, 
we define the set~$S_{3,i,j}$ of polynomials in 
$\QQ[x_{i, k_i-1}, x_{i, k_i}, x_{j, k_{j}-1 }, x_{j, k_j}]$ as follows: 
\begin{equation*}
	S_{3,i,j} \coloneqq 
	\begin{cases}
		\{ x_{i, k_i - 1 } x_{j, k_j} - x_{j, k_{j}-1 } x_{i, k_i}\} & \text{if $k_i\geq 2$ and $k_j\geq 2$,}\\
		\emptyset & \text{otherwise}.
	\end{cases}
\end{equation*}
This set
defines the relations  between the two blocks~$J_i$
and $J_j$. Such a relation only exists when  in both blocks~$\boldsymbol{\beta}_i$ and 
$\boldsymbol{\beta}_j$, the indices $k_i$ and $k_j$ corresponding to the nonzero entries are both greater than or equal $2$.

Define the set~$S_{4}$ of polynomials in 
$\QQ[\bm{x}]$ such that  
 \begin{equation*}
	S_{4}\coloneqq \{ x_{0,j} \, : \, 1\leq j \leq d_0\}
	\cup \{ x_{i,j} \, : \, 1\leq i\leq s, k_i< j \leq d_i \}  
\end{equation*}
captures the zero entries.
Define $I_{R}\coloneqq\ideal{S_1, \, S_{2,i}, \, S_{3,i,j},S_4 \,:\, i,j\in \{1,\ldots,s\}}$. We are now in the position to prove the following: 
\begin{claim}\label{claim-IR}
	$V(I_R)=R \cdot  \overline{\{ \Tilde{J}^n \, \bm{\beta}: n \in \mathbb{N} \}}$.
\end{claim}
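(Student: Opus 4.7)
The plan is to prove the two inclusions separately. The forward inclusion $V(I_R) \supseteq R \cdot \overline{\{\Tilde{J}^n \bm{\beta} : n \in \N\}}$ is the easier direction: by Zariski-closedness of $V(I_R)$, it suffices to check that each trajectory point $R \Tilde{J}^n \bm{\beta}$ satisfies every generator of $I_R$. Using the explicit form in~\eqref{eq:only-powers-n}, one verifies directly that $S_4$ captures the zero coordinates; $S_1$ captures the multiplicative relations among the $\lambda_i^n$'s by its very definition; the relation in $S_{2,i}$ amounts to the identity $(n\lambda_i^n)^j = (n^j\lambda_i^n)(\lambda_i^n)^{j-1}$; and the cross-block relation in $S_{3,i,j}$ amounts to $(n\lambda_i^n)\lambda_j^n = (n\lambda_j^n)\lambda_i^n$.

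For the reverse inclusion, I would parametrize $V(I_R)$ via the polynomial map
\[
\Psi \colon V(S_1) \times \QQbar \to \QQbar^d,
\]
sending $((y_i)_{i \in T}, t)$, where $T \coloneqq \{i : \bm{\beta}_i \neq 0\}$, to the vector $\bm{v}$ whose zero block and blocks $i \notin T$ are zero, whose trailing entries $v_{i, j}$ for $j > k_i$ are zero, and whose remaining entries satisfy $v_{i,k_i-j} = t^j y_i$ for $i \in T$ and $0 \le j \le k_i - 1$. The containment $\Psi(V(S_1) \times \QQbar) \subseteq V(I_R)$ is a direct check against each $S_\ast$. For the reverse, on the open locus of $V(I_R)$ where some $v_{i_0,k_{i_0}} \neq 0$ with $k_{i_0} \ge 2$, the cross-block relations $S_{3,i,j}$ force $t \coloneqq v_{i_0,k_{i_0}-1}/v_{i_0,k_{i_0}}$ to equal $v_{i,k_i-1}/v_{i,k_i}$ for every $i \in T$ with $v_{i, k_i} \ne 0$ and $k_i \ge 2$, and the block-internal relations $S_{2,i}$ then pin down $v_{i,k_i-j} = t^j v_{i,k_i}$ throughout each such block; together with $S_1$ and $S_4$, this yields $\bm{v} = \Psi((v_{i,k_i})_{i \in T}, t)$. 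The degenerate locus, on which every $v_{i,k_i}$ with $k_i \ge 2$ vanishes, is a strict subvariety and lies in the closure of the non-degenerate locus by taking $t \to 0$ along curves of $\Psi$-images.

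It then remains to match $\overline{\Psi(V(S_1) \times \QQbar)}$ with $W \coloneqq R \cdot \overline{\{\Tilde{J}^n \bm{\beta} : n \in \N\}}$. The equality $\Psi(n, (\lambda_i^n)_{i \in T}) = R \Tilde{J}^n \bm{\beta}$ is immediate from~\eqref{eq:only-powers-n}, so $W = \overline{\Psi(\{(n, (\lambda_i^n)_{i\in T}) : n \in \N\})}$. Since $\Psi$ is a polynomial map and hence Zariski-continuous, this closure equals $\overline{\Psi(V(S_1) \times \QQbar)}$ as soon as the trajectory lift $\{(n, (\lambda_i^n)_{i \in T}) : n \in \N\}$ is Zariski-dense in $V(S_1) \times \QQbar$.

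The density claim is the main obstacle. My plan is to argue it component-by-component: decompose $V(S_1) = \bigcup_c V_c$ into irreducible toric components, each hit by the orbit $(\lambda_i^n)_{i \in T}$ along a single arithmetic progression $\{n \equiv c \pmod m\}$ (this structure being the content of~\cite{GalSt21toric}). Any polynomial $f(t, y)$ vanishing on the lifted subsequence in $\QQbar \times V_c$, after substituting $n = c + mk$ and $y_i = \lambda_i^{c+mk}$, becomes an exponential polynomial $\sum_\mu q_\mu(k)\mu^k$ in $k$ with pairwise distinct bases $\mu$ and polynomial coefficients $q_\mu$. The classical Skolem--Mahler--Lech identity forces each $q_\mu \equiv 0$, which unwinds to $f$ vanishing identically on $\QQbar \times V_c$, completing the argument.
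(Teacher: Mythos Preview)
Your forward inclusion is correct and coincides with the paper's own argument: the paper's proof of this claim consists entirely of verifying that each family $S_1$, $S_{2,i}$, $S_{3,i,j}$, $S_4$ vanishes along $R\Tilde{J}^n\bm{\beta}$, exactly as in your first paragraph. It does not address the reverse containment at all.

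Your reverse inclusion, however, has a genuine gap at the sentence ``The degenerate locus \ldots\ is a strict subvariety and lies in the closure of the non-degenerate locus by taking $t\to 0$.'' This fails already for a single block with $k_1=4$ and $\lambda_1=2$. Here $S_1=S_3=S_4=\emptyset$ and
\[
S_{2,1}=\{\,x_{1,3}^2-x_{1,2}x_{1,4},\ x_{1,3}^3-x_{1,1}x_{1,4}^2\,\},
\]
so $I_R=\langle x_{1,3}^2-x_{1,2}x_{1,4},\ x_{1,3}^3-x_{1,1}x_{1,4}^2\rangle$. The point $(0,1,0,0)$ lies in $V(I_R)$, yet the polynomial $x_{1,1}x_{1,3}-x_{1,2}^2$ vanishes identically on $\operatorname{im}\Psi$ (since $(t^3y)(ty)=(t^2y)^2$) while evaluating to $-1$ at $(0,1,0,0)$. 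Hence the $2$-plane $\{x_{1,3}=x_{1,4}=0\}$ is an entire irreducible component of $V(I_R)$ that is \emph{not} contained in $\overline{\operatorname{im}\Psi}$; in particular it does not lie in the closure of the non-degenerate locus. Since your density argument (which is sound) gives $R\cdot\overline{\{\Tilde{J}^n\bm{\beta}\}}=\overline{\operatorname{im}\Psi}$, this shows $R\cdot\overline{\{\Tilde{J}^n\bm{\beta}\}}\subsetneq V(I_R)$, so the equality in the claim is actually false as stated. The set $S_{2,i}$ is simply too small once $k_i\ge 4$: to cut out the affine cone over the rational normal curve one needs all $2\times 2$ minors of the Hankel matrix $\bigl(\begin{smallmatrix}x_{i,1}&\cdots&x_{i,k_i-1}\\ x_{i,2}&\cdots&x_{i,k_i}\end{smallmatrix}\bigr)$, after which your parametrisation and density argument do go through.
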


\begin{proof}
	
	Let $\bm{v}_n$ be the vector obtained from $\Tilde{J}^n \, \bm{\gamma}$ 
	under the transformation~$R$, shown explicitly in~\eqref{eq:only-powers-n}. 
	We consider the block decomposition of $\bm{v}_n$ according to~$(P,J)$. 
	
	We did not precisely define the set of polynomials in~$S_1$ as 
	we borrow the technology developed in~\cite[Lemma 6]{DerksenJK05}
	to construct $S_1$ defining  the  variety in~\eqref{def-s1}. 
	The polynomials in $S_1$ are defined over the variables~$x_{1,k_1}, \ldots, x_{s,k_s}$. 
	This choice is justified by the fact that  in the block corresponding to the Jordan block~$J_i$
	of~$\bm{v}_n$ the $k_i$th entry is $\lambda_i^n$. 
	
	For each Jordan block~$J_i$, the set $S_{2,i}$
	reflects the relations between corresponding entries in~$\bm{v}_n$.
	Inspecting~\eqref{eq:only-powers-n} again for the  entries of~$\bm{v}_n$,  the following identity  
	\begin{align*}
		\left(n\lambda_i^n\right)^j \,  - n^j \lambda_i^n \left(\lambda_i^n\right)^{j-1} \,=\,0  	
	\end{align*}
	is realised by $x_{i, k_{i}-1}^j - x_{i, k_{i}-j} \cdot x_{i, k_i}^{j-1}$, with $j \in \{2,\ldots,k_i-1\}$.
	
	Furthermore, for each pair of distinct Jordan blocks~$J_i,J_j$, with $i,j\in \{1,\ldots,s\}$, 
	the set~$S_{3,i,j}$ encompasses the relations between the two blocks.
	As mentioned above, there is no relation between ~$J_i$ and $J_j$ if 
	either $k_i<2$ or $k_j<2$. Otherwise, the identity
	\begin{align*}
		(n \lambda_i^n) \, (\lambda_j^n) -  (n \lambda_j^n) \, (\lambda_i^n)\,=\,0  	
	\end{align*} 
	is realised by $x_{i, k_i - 1 } x_{j, k_j} - x_{j, k_{j}-1 } x_{i, k_i}$.
	The set~$S_4$ reflects the entries in $\bm{v}_n$ that are identically zero. 
\end{proof}

In \cite{galuppi2021toric} it was shown that each irreducible component of the variety defined by a cyclic matrix semigroup is isomorphic to the  Cartesian product of a toric variety and a  normal rational  curve (excluding  a number of isolated points). Our contribution in this regard is to show
 an explicit construction of that isomorphism (in the case of loops)  can be computed in  polynomial time using the matrices $R$ and $U$.
In \eqref{eq:only-powers-n} we read-off  the effect of the  Cartesian product of a toric variety and a  normal rational  curve.

It remains to define a generating set of polynomials 
for~\(\zski\) from the  ideal $I_R$. 
From the steps detailed in the construction, we observe that  
 \begin{equation}\label{orbit-closure-is}
	\overline{\{M^n \boldsymbol{\alpha} : n \in \mathbb{N} \}} = P   U^{-1} R^{-1} R \cdot  \overline{\{ \Tilde{J}^n \, \bm{\beta}: n \in \mathbb{N} \}} 	\cup \{M^i \boldsymbol{\alpha} : 0\leq i< n_0\}.
\end{equation}
By this equality, to obtain the 
polynomials defining $\zski$, 
\begin{itemize}
	\item we first compute ideals $\mathcal{I}_0, \dots, \mathcal{I}_{n_0-1}$, each defining one of the $n_0 \leq d$ isolated points in $\{M^i \boldsymbol{\alpha} \, :\, 0\leq i < n_0\}$,
	\item next we obtain an ideal $\mathcal{J}$ by applying   the transformation~$P U^{-1} R^{-1}$ to the 
	vector of variables~$\bm{x}$ for each polynomial in the set
	\begin{equation}
		\label{eq-SS}
		S\coloneqq (S_1\cup S_4) \cup  \bigcup_{\substack{i,j\in \{1,\ldots,s\} \\i\neq j}} (S_{2,i}\cup  S_{3,i,j})
	\end{equation} 
\end{itemize}
defining $V(I_R)$.
This  completes the construction of the strongest algebraic 
invariant of the loop~$\langle M, \bm{\alpha}\rangle$
as it is simply~$V(\mathcal{I}_0 \cap \cdots \cap \mathcal{I}_{n_0-1} \cap \mathcal{J})$.
For complexity purposes, in particular to avoid an exponential blow-up, 
we choose to output the generators of the~$\mathcal{I}_i$ and $\mathcal{J}$ separately. (This choice is crucial for the obtained $\PSPACE$ bound in~\Cref{lem-weakverify}.)

The obtained polynomials are not in $\QQ[\bm{x}]$ but in $k[\bm{x}]$ where $k=\QQ(\lambda_1,\ldots,\lambda_s)$ is the number field obtained by adjoining the eigenvalues to~$\QQ$. 
Furthermore, the eigenvalues are given in symbolic presentation.
As an extra step (not necessary to obtain the complexity bound) we convert the coefficient of polynomials to integers.
We introduce the variables~$y_i$, one for  each  eigenvalue~$\lambda_i$.	
Recall that  each $\lambda_i$ appears with all its Galois conjugates, and its degree is bounded from above by $d$.
For brevity, given a Galois automorphism~$\sigma$
of the field $\Q(\lambda_i)$
we denote by 
$\sigma(y_i)$ the variable among $y_1,\ldots,y_s$
corresponding to~$\sigma(\lambda_i)$. 
For each~$\lambda_i$, write   $m_{\lambda_i}=\sum_{k=0}^n a_k x^k$ for the minimal polynomial. 
By basic algebra, we know that the set of equations
\begin{equation}
	\label{eq-eign}
	\begin{aligned}
		a_{n-1}&=- \sum_{\sigma \in {\text{Gal}}(\QQ(\lambda_i)/\QQ)}
		\sigma(y_i)\\
		a_{n-2}&=  \sum _{\substack{\sigma,\sigma' \in {\text{Gal}}(\QQ(\lambda_i)/\QQ)}} \sigma(y_i)\, \sigma'(y_i)\\
		&{}\ \,\vdots \\
		a_{0}&=(-1)^{n}  \prod_{\sigma \in {\text{Gal}}(\QQ(\lambda_i)/\QQ)}
		\sigma(y_i)
	\end{aligned}
\end{equation}
is such that the solutions to variables \(\sigma(y_i)\)  define the set of all conjugates of~$\lambda_i$.

Define  the rewrite rules $\tau:=\{\lambda_i \to y_i   \;,
    x_{i,k_i} \to y_i \mid 1\leq i\leq s \}$.
Similarly to the above,
we define the ideal $\mathcal{Y}\in \QQ[\bm{x},y_1,\ldots,y_s]$ obtained by 
 applying the  transformation $\tau(PU^{-1}R^{-1})$ to the vector of variables~$\bm{x}$ for each polynomial in~\eqref{eq-SS},
the set of polynomials given in~\eqref{eq-eign}
 for each~$\lambda_i$, the set $\tau(S_1)$ of multiplicity relations expressed in $y_i$, together with $\tau(P JP^{-1})-M$ and $\tau(U P^{-1})\bm{\alpha} - \bm{\beta}$. 
 
Observe that $V(\mathcal{J})=\pi(V(\mathcal{Y}))$ where 
$\pi$ is the projection map onto the $x_i$ coordinates, 
see \cref{lem:comppolys} in \cref{sec:omitted} for a detailed proof.

\subsection{Computational Complexity}
\label{sec:complexity}

We now proceed with the statement and proof of our main result.

\theoinvpspace*

\begin{proof}
	The construction detailed in~\cref{sec:computestrong}
	can be performed in polynomial time, modulo 
	the computation of the set~$S_1$ of polynomials
	defining $\overline{\{(\lambda_1^n, \ldots, \lambda_i^n, \ldots , \lambda_s^n)\, :\, n \in \N; \, \bm{\beta}_i\neq 0\}}$ in~\eqref{def-s1}.
	
	Indeed, every \emph{rational} square matrix has a Jordan decomposition that can be computed in polynomial time~\cite{cai1994jnf}, and in polynomial time we can shuffle the decomposition so that the Jordan blocks with zero eigenvalues  appear first. 
	Due to the symbolic representation  of eigenvalues~\cite[Section 4.2.1]{cohen2013course}, all constructions in the remaining steps are performed in polynomial time. This includes the computation of~$U$ obtained by applying~\cref{oneinblock} and the computation of its inverse.

	It remains to discuss the computation of~$S_1$.
	Here, we  employ the architecture developed in
	\cite[Lemma 6]{DerksenJK05}. 
	Therein, the problem of computing the 
	polynomial ideal that defines the closure $\overline{\{(\lambda_1^n, \ldots, \lambda_i^n, \ldots, \lambda_s^n)\, :\, n \in \N; \, \bm{\beta}_i\neq 0\}}$ reduces to finding a set of generators 
	for the associated lattice~$L$. 
	To obtain the complexity  bounds, we 
	rely on a theorem of Masser~\cite{masser1988relations} that gives an explicit upper bound  on the magnitude of the components of  a basis for~$L$.
	
	Following~\cite{HosseiniO019}, membership of a tuple $\bm{v} \in L$ is in ${\exists \mathbb{R}}$, using a
decision  procedure for the existential theory of the reals. 
	In combination with Masser's bound, it follows that we can
	compute a basis for~$L$ in $\PSPACE$ by brute-force search if the dimension is not fixed, and in polynomial time when the dimension is fixed.
	See \cref{sec:alg} for further details.
\end{proof}

In the construction above, the input matrix-vector pair \(\langle M, \bm{\alpha}\rangle\) encodes a loop with rational constants. Our procedures naturally extend to cases where the loop constants are algebraic numbers. This is because   algebraic numbers can be expressed in the power basis of their splitting field, with each number represented as a vector of rational coefficients. This encoding increases the loop's dimension by a factor equal to the degree of the splitting field over~\(\mathbb{Q}\).
For further details about computations with algebraic numbers see \cite{mishra2012algorithmic}.

\section{Worked Examples}
 \label{sec-worked-examples}
 We illustrate the multi-step procedure of \cref{sec-algLoops}
with three worked examples. The computations involved in preparing these examples were performed in \texttt{Macaulay2}~\cite{M2}.

The orbit closure of the loop in \cref{ex:isolated} has three isolated points. By design, this example highlights
the role of the convenient Jordan form and the transformation~$R$ in  computing the ideal defining the invariant. As the update matrix in this example has only a single nonzero eigenvalue, the lattice of multiplicity relations is not relevant.

\begin{example} \label{ex:isolated}
We run our procedure for generating the strongest algebraic invariant for the loop below. 
	\begin{algorithmic}
    \REQUIRE \((x_1,x_2,x_3,x_4,x_5,x_6)\leftarrow  \bluebox{ {$\frac{1}{16}(0,8,14,-5,0,0)$}}\)
		\WHILE{\texttt{true}}
		\STATE \begin{equation*}
			\begin{pmatrix} x_1 \\ x_2 \\ x_3 \\ x_4 \\ x_5 \\ x_6 \end{pmatrix} \leftarrow
			\redbox{ $\displaystyle\frac{1}{14} \begin{pmatrix}
     42  & 0&-7&-42& 21 & 28\\
    -50& 10 &0 &2&-2&-20\\
    -26&-20 & 28 & 52 &-10&-30\\
    -4&-2&0&-6&6&4\\
    14&0&-14&-28&28&14\\
    -38&-12&14&48&-20&-18
\end{pmatrix}$ } 
\begin{pmatrix}
				x_1\\
				x_2\\
				x_3\\
				x_4\\
				x_5\\
				x_6
			\end{pmatrix}
		\end{equation*}
		\ENDWHILE
			\end{algorithmic}
   \captionof{Loop}{Loop defined by the tuple \(\langle \red{M}, \bm{\blue{\alpha}}\rangle\)}
   
\medskip

The tuple $(P,J)$ is a Jordan decomposition  of the update matrix $M$, where
\[
 P = \begin{pmatrix}
    0& -1&0& 0& -1& -1\\
    2& 0& 0& 1& 1& 1\\ 
    1& 1& -1& -1&-1&1\\
       1& 0& 1& 0& 0& 0\\
        1& 1& 1& 1& -1& -1\\
        1& 1& 0& -1& 1& 1
\end{pmatrix}
\enspace \text{and} \enspace
J = \begin{pmatrix}
    0& 1& 0& 0& 0& 0\\
    0& 0& 1& 0& 0& 0\\ 
    0& 0& 0& 0&0&0 \\
       0& 0& 0&2& 1& 0\\
        0& 0&0& 0& 2& 1\\
        0& 0& 0&0& 0& 2
        \end{pmatrix}.
\]
The matrix \(J\) decomposes into two blocks $J= N \oplus J_1$, where the
 block \(N\) is the nilpotent block associated with the eigenvalue~$0$ and  
\(J_1\) is the block associated with the eigenvalue 2.
Since \(N^3=0\), we deduce that $\zski$ has $3$ isolated points $\{ \bm{\alpha}, M \bm{\alpha}, M^2 \bm{\alpha}\}$, see equation \eqref{def-isp}. 
Let 
\[\bm{\gamma} = J^3 P^{-1}\bm{\alpha} =(0,0,0,0,0,1)^\top \qquad \text{and} \qquad \Tilde{J} = \bm{0}_{3\times 3}\oplus J_1.\] 
 This implies that $(\Id_6, \Tilde{J})$ is a convenient Jordan decomposition for $\langle \Tilde{J}, \bm{\gamma}\rangle$; as in \cref{oneinblock} and the discussion at~\eqref{def-isp}.
  For every $n \ge 3$, as in equation \eqref{eq:block}, we have 
\[
M^n \bm{\alpha} = P \Tilde{J}^n \bm{\gamma} = P \begin{pmatrix}
    0\\
    0\\
    0\\
    \tfrac{n(n-1)}{2} 2^{n-2}\\
    n2^{n-1}\\
    2^n
\end{pmatrix}.
\]
Let $R = \Id_3 \oplus \begin{pmatrix}
    8 & 2 & 0 \\
    0 & 2 & 0 \\
    0 & 0 & 1
\end{pmatrix}.$ Then for every $n \ge 3$, as in \eqref{eq:only-powers-n}, we get
\begin{equation} \label{eq:workedexample2}
 M^n \bm{\alpha} = P R^{-1} \begin{pmatrix}
    0\\
    0\\
    0\\
    n^2 2^n\\
    n2^n\\
    2^n
\end{pmatrix}.
\end{equation}
In this simple example, our procedure outputs \(S_{2,1} = \{x_4 x_6 - x_5^2\}\) and \(S_4 =\{x_1, x_2, x_3\}\).  One can also infer these polynomial relations from \eqref{eq:workedexample2}. 
Altogether, the ideal $\langle x_1, x_2, x_3, x_4  x_6 - x_5^2 \rangle$ defines the closure of $ 
 R P^{-1}  \left\{M^n \bm{\alpha} : n \ge 3\right\}$.
After applying the transformation $P R^{-1}$, we obtain the polynomials that define the closure of $\left\{M^n \bm{\alpha} : n \ge 3\right\}$.
Indeed, the transformed relations generate the polynomial 
\begin{equation*}
   I =  \langle x_5+x_6, x_4, 2x_1+x_2+x_6, 4x_2^2+3x_2x_3-3x_3^2-3x_2x_6+x_3x_6-2x_6^2
    \rangle.
\end{equation*}
It follows that the variety defining the orbit closure \(\zski\) of the loop \(\langle M, \bm{\alpha}\rangle\) is given by 
$V(I) \cup\{ \bm{\alpha}, M \bm{\alpha}, M^2 \bm{\alpha}\}$. \hfill $\blacktriangleleft$
\end{example}

In the following example  we consider a loop whose  update matrix  has several nonzero eigenvalues. Thus to compute the orbit closure of the loop, our procedure requires us to first compute a basis for lattice of multiplicity relations between the eigenvalues.

\begin{example} \label{ex:lattice}
We run our procedure on the loop $\langle M, \bm{\alpha}\rangle$, given below. The eigenvalues of \(M\) are all primitive third and fourth roots of unity.

	\begin{algorithmic}
		\REQUIRE \((x_1,x_2,x_3,x_4,x_5,x_6)\leftarrow \bluebox{(2,-1,1,1,0,-1)}\)
		\WHILE{\texttt{true}}
		\STATE \begin{equation*}
			\begin{pmatrix} x_1 \\ x_2 \\ x_3 \\ x_4 \\ x_5 \\ x_6 \end{pmatrix} \leftarrow
			\redbox{ $\begin{pmatrix}
				0 & 0 & 0 & 0 & 0 & -1\\
				1&	0&	0&	0&	0&	-2 \\
				0&	1&	0&	0&	0&	-4 \\
				0&	0&	1&	0&	0&	-4 \\
				0&	0&	0&	1&	0&	-4 \\
				0&	0&	0&	0&	1&	-2
			\end{pmatrix}$  } 
   \begin{pmatrix}
				x_1\\
				x_2\\
				x_3\\
				x_4\\
				x_5\\
				x_6
			\end{pmatrix}
		\end{equation*}
		\ENDWHILE
			\end{algorithmic}
   \label{loop1}
   \captionof{Loop}{Loop defined by the tuple \(\langle \red{M}, \bm{\blue{\alpha}}\rangle\)}

\medskip

Since all the eigenvalues of update matrix \(M\) are nonzero, 
there are  
no isolated points in the orbit closure of the loop. 
Fix $\omega\coloneqq  -\frac{1}{2} + \frac{\sqrt{3}}{2}\iu$. 
The tuple~$(P, J)$ is a Jordan decomposition of ~$M$, where
	\begin{equation*} P =
	\begin{pmatrix}
		-\iu &	\iu &	-\omega &	\omega^2 &	-\omega^2 &	\omega \\
		1-2\iu &	1+2\iu &	\omega^2 + 2 &	-2\omega &	\omega +2 &	-2\omega^2 \\
		2-3\iu &	2+3\iu &	2\omega^2 +3 &	\omega^2 +1 &	2\omega +3&	\omega+1\\
		3-2\iu &	3+2\iu &	\omega^2 + 3 &	-2\omega &	\omega +3 &	-2\omega^2 \\
		2-\iu & 2+\iu &	\omega^2 +2 &	1&	\omega+2&	1\\
		1&	1& 	1&	0&	1&	0
	\end{pmatrix}
\quad \text{and} \quad J = 
	\begin{pmatrix}
		-\iu &	0&	0&	0&	0&	0\\
		0&	\iu &	0&	0&	0&	0\\
		0&	0&	\omega^2&	1&	0&	0\\
		0&	0&	0&	\omega^2&	0&	0\\
		0&	0&	0&	0&\omega&	1\\
		0&	0&	0&	0&	0&\omega
	\end{pmatrix}.
\end{equation*}

The spectrum of $J$ (and hence $M$) is $\{\pm\iu,\omega, \omega^2\}$.
We next transform our decomposition into a convenient one with a change-of-basis matrix (following~\cref{oneinblock}).
We compute an invertible matrix~$U$ for which
$UP^{-1}\bm{\alpha} = (1,1,0,1,0,1)^\top$.
Recall that $U$ admits a block diagonal decomposition \[U \coloneqq  U_1 \oplus U_2 \oplus U_3 \oplus U_4\] where blocks $U_1$ and $U_2$ are size-1 and blocks $U_3$ and $U_4$ are  size-2.
For this computation, we pre-compute the inverse matrix
\[P^{-1} = \frac{1}{18}
\begin{pmatrix} 
-9\iu &	-9 &	9\iu &	9 &	-9\iu &	-9 \\
9\iu &	-9 &	-9\iu &	9&	9\iu&	-9 \\
8(\omega-\omega^2) &	{8-2\omega} & {4(\omega^2-\omega)}  &	{2\omega^2-8} &	{8(\omega-\omega^2)} &	{14-8\omega}\\
6\omega &	6 &	6\omega^2 &	 6\omega &	6 &	6\omega^2 \\
8(\omega^2-\omega) &	8-2\omega^2 & 4(\omega-\omega^2) &	2\omega-8 &	8(\omega^2-\omega) &	14-8\omega^2 \\
6\omega^2 &	6&	6\omega &	6\omega^2 &	6 &	6\omega
\end{pmatrix}
\]
and the vector
$P^{-1}\bm{\alpha} = \alpha_1 \oplus \alpha_2 \oplus \bm{\alpha}_3 \oplus \bm{\alpha}_4$ where 
\begin{align*}
    \alpha_1 &= \frac{3}{2} - \frac{\iu}{2}\, , \qquad &
\alpha_2  = \frac{3}{2} + \frac{\iu}{2}\, , 
\\
\bm{\alpha}_3 & = \frac{1}{9} \left(16\omega-10, 9\omega -3 \right)\, ,  \text{and} &
\bm{\alpha}_4  = \frac{1}{9} \left({16\omega^2 -10}, 9\omega^2 -3  \right).
\end{align*}
By solving the system of linear equations $UP^{-1}\bm{\alpha} = (1,1,0,1,0,1)^\top$ in terms of the entries of \(U\), we obtain  $U_i = \alpha_i^{-1}$ for $i\in \{1,2\}$, and  
\[
U_3 =
\begin{pmatrix}
	\frac{3}{3\omega-1} & \frac{10-16\omega}{(3\omega-1)^2}\\
	0&	\frac{3}{3\omega-1}
\end{pmatrix},\enspace \text{and} \enspace U_4 = 
\begin{pmatrix}
	\frac{3}{3\omega^2-1} & \frac{10-16\omega^2}{(3\omega^2-1)^2}\\
	0&	\frac{3}{3\omega^2-1} 
\end{pmatrix},
\]
which completes the computation of the block matrix \(U\).
From \cref{oneinblock},
we conclude that $\left(PU^{-1}\bm{\alpha}, J\right)$ is a convenient Jordan decomposition for~$\langle M, \bm{\alpha}\rangle$.

Let
$\bm{\beta} \coloneqq  U P^{-1}\bm{\alpha}$.
The next step is to compute a block matrix
\(R = R_1 \oplus R_2 \oplus R_3 \oplus R_4\) such that the blocks of $RJ^n\bm{\beta}$ have the form presented in~\eqref{eq:only-powers-n}.
Since each $R_i$ has size at most 2, the matrix~$R$ is diagonal (see the general form given in~\eqref{eq:Ri}).
Observe that $R = \diag{1,1,\omega^2,1,\omega,1}$.
We now construct a generating set of the ideal~$I_R$ defining the variety \(V(I_R) = R \cdot  \overline{\{ J^n \, \bm{\beta}: n \in \mathbb{N} \}}\), 
as described in \eqref{eq-r-closure}.
We consider each of the sets \(S_1\), \(S_2\), \(S_3\), and \(S_4\) of possible polynomial relations amongst the variables in turn.

First, we consider the set \(S_1\) of multiplicative relations between the eigenvalues. 
This step constructs the exponent lattice $L_{\textrm{exp}}$ of $\{-\iu, \iu, \omega^2, \omega\}$ defined by
\begin{equation*}
	L_{\textrm{exp}} = \left\{(a,b,c,d) \in \ZZ^4 : (-\iu)^a \iu^b (\omega^2)^c \omega^d = 1\right\}.
\end{equation*}
We invoke a standard subroutine to compute a basis for this lattice 
(see \cref{sec:alg} for the details).  
An example of such a basis is \(\Lexp{-\iu, \iu, \omega^2, \omega}\) is \(
	\{(4,0,0,0), (1,1,0,0), (0,0,3,0), (0,0,1,1)\}\)
and so we deduce that the polynomials
\[p_{11}(\bm{x}) = x_1^4-1, \quad
p_{12}(\bm{x}) = x_1x_2-1, \quad 
p_{13}(\bm{x}) = x_4^3-1, \quad
p_{14}(\bm{x}) = x_4x_6 - 1\]
define the set~$S_1$.

We next consider each of the sets \(S_{2,i}\) that characterise the polynomial relations amongst the nonzero entries of the $i$th Jordan block.
We note that in this example each set \(S_{2,i}\) is empty because each of the four Jordan blocks has size at most $2$. %
Similarly, the set of polynomial relations $S_4$ is empty. %

Finally, in this example the only non-empty set amongst the $S_{3,i,j}$'s is $S_{3,3,4}$, as the $S_{3,i,j}$ relations require Jordan blocks \(J_i\) and \(J_j\) of size at least~2. The single polynomial relation we infer is \[p_3(\bm{x}) = x_3x_6 - x_4x_5\,.\]

From~\cref{claim-IR}, 
we have $I_R = \langle p_{11}, p_{12},p_{13},p_{14},p_3\rangle$, or
 \[R \cdot  \overline{\{ J^n \, \bm{\beta}: n \in \mathbb{N} \}} = V(I_R).\]

In order to compute the polynomials that define the algebraic set $\zski=\overline{\{M^n \boldsymbol{\alpha} : n \in \mathbb{N} \}}$,
we need to apply the transformation~$PU^{-1}R^{-1}$ to each of the generators of~$I_R$.
For example, we obtain \[p'_{12}(\bm{x}) \coloneqq p_{12}(RUP^{-1}\bm{x}) = \frac{1}{10} (x_1-\iu x_2-x_3+ \iu x_4+x_5-\iu x_6)(x_1+ \iu x_2-x_3- \iu x_4+x_5+ \iu x_6) - 1.\]
The polynomials $p'_{11}, p'_{13}, p'_{14}$, and $p'_3$ are obtained similarly. Thus the algebraic set \(\zski\) is characterised by the variety \(V(\langle p'_{11}, p'_{12}, p'_{13}, p'_{14},p'_3\rangle )\) and we have completed the task of invariant generation, as desired.
\hfill $\blacktriangleleft$
\end{example}

Recall that the degree of the splitting field associated with the eigenvalues of a rational matrix may exhibit exponential growth in the dimension of the  matrix.
In \cref{ex:splitting},
we consider a loop whose linear update is given by the companion matrix of $(x^2-2)(x^2-3)(x^2-5)$.
This matrix has spectrum \(\{\pm\sqrt{2}, \pm\sqrt{3}, \pm\sqrt{5}\}\), and the splitting field of its eigenvalues has degree $2^3$ over $\mathbb{Q}$.
Extending this example, we can construct a matrix with spectrum $\{\pm \sqrt{p_1},\ldots,\pm\sqrt{p_k}\}$ where  $p_i$ is the $i$th prime number. The degree of the splitting field $\mathbb{Q}(\sqrt{p_1},\ldots,\sqrt{p_k})$ over $\mathbb{Q}$ is $2^k$, while the dimension of the matrix is $2k$.

\begin{example} \label{ex:splitting}
We apply our  invariant generation procedure for the following loop.

	\begin{algorithmic}
		\REQUIRE \((x_1,x_2,x_3,x_4,x_5,x_6)\leftarrow \bluebox{(6,0,-62/15,0,2/3,0)}\)
		\WHILE{\texttt{true}}
		\STATE \begin{equation*}
			\begin{pmatrix} x_1 \\ x_2 \\ x_3 \\ x_4 \\ x_5 \\ x_6 \end{pmatrix} \leftarrow
			\redbox{ $\begin{pmatrix}
    0& 0&0&0&0&30\\
    1&0&0&0&0&0\\
    0&1&0&0&0&-31\\
    0&0&1&0&0&0\\
    0&0&0&1&0&10\\
    0&0&0&0&1&0
\end{pmatrix}$    }
\begin{pmatrix}
				x_1\\
				x_2\\
				x_3\\
				x_4\\
				x_5\\
				x_6
			\end{pmatrix}
		\end{equation*}
		\ENDWHILE
			\end{algorithmic}
   \captionof{Loop}{Loop defined by the tuple \(\langle \red{M}, \bm{\blue{\alpha}}\rangle\)}
   
\medskip 

A Jordan decomposition $(P,J)$ of the update matrix $M$ is given by the matrices 
\[
 P = \frac{1}{30} \begin{pmatrix}
    30& 30& 30& 30& 30& 30\\
    15\sqrt{2}& -15\sqrt{2}& 10\sqrt{3}& -10\sqrt{3}& 6\sqrt{5}& -6\sqrt{5}\\ 
    -16& -16& -21& -21&-25& -25\\
       -8\sqrt{2}& 8 \sqrt{2}& -7 \sqrt{3}& 7 \sqrt{3}& -5\sqrt{5}& 5\sqrt{5}\\
        2& 2& 3& 3& 5& 5\\
        \sqrt{2}& -\sqrt{2}& \sqrt{3}& -\sqrt{3}& \sqrt{5}& -\sqrt{5}
\end{pmatrix}
\]
and
\[J = \diag{\sqrt{2}, -\sqrt{2}, \sqrt{3}, -\sqrt{3}, \sqrt{5}, -\sqrt{5}}\,.\]
Since \(P^{-1}\bm{\alpha} = (1,1,1,1,1,1)^\top\), the tuple \((P,J)\) is a convenient form for the pair \(\langle M, \bm{\alpha}\rangle\).
We now compute the polynomials that define the orbit closure \(\zski\).
Since each eigenvalue of \(M\) is simple, the convenient form immediately leads us to 
\[\zski = P \cdot \overline{ \left\{ \bigl((\sqrt{2})^n, \bigl(-\sqrt{2}\bigr)^n, (\sqrt{3})^n, \bigl(-\sqrt{3}\bigr)^n, (\sqrt{5})^n, \bigl(-\sqrt{5}\bigr)^n \bigr)^\top : n \in \N \right\}}.
\]
Then the ideal defining $P^{-1} \zski$ is \[I = \langle x_1^2 - x_2^2 , x_3^2 - x_4^2, x_5 ^2 - x_6^2 \rangle,\] and after applying matrix $P$, we obtain 
\[
\zski =  
V \left(\left\langle \begin{array}{c}
x_3x_4+x_2x_5+10x_4x_5+x_1x_6+10x_3x_6+69x_5x_6,\\
x_2x_3+x_1x_4-31x_4x_5-31x_3x_6-280x_5x_6,\\
x_1x_2+30x_4x_5+30x_3x_6+300x_5x_6
\end{array}\right\rangle\right).
\]
\hfill $\blacktriangleleft$
\end{example}

The complexity of our algorithm for computing the orbit closure of simple linear loops primarily depends on finding a basis for the lattice of multiplicity relations between the eigenvalues of matrices. 
In particular, 
as argued in the proof of \cref{theo-inver-pspace}, all other steps in our algorithm run in polynomial time. 
Ge's algorithm~\cite{ge1993thesis} computes such a basis in polynomial time in the degree of the splitting field of the input algebraic numbers (i.e., the update matrix eigenvalues).
For rational matrices, while the degree of the eigenvalues is bounded by the matrix dimension, the degree of the splitting field can grow exponentially with the dimension.
We observed this in the earlier extension of \cref{ex:splitting} to a family of $2k$-dimensional matrices, where the degree of the splitting field becomes $2^k$. 
Therefore, Ge's algorithm, in its current form, does not allow us to reduce the complexity from {\PSPACE} to polynomial time.

\section{Invariant Verification}
\label{sec-verif}
Recall that the invariant verification problem checks whether the algebraic set defined by a given set of input polynomials is an invariant for an input loop, though it may not be an inductive invariant. The study and motivation behind such non-inductive invariants have been  explored previously~\cite{humenberger2020algebra, kincaid2017reasoning, kovacs2008psolvable, bayarmagnai2024algebraic,  Muller-OlmS04, rodriguez2004inv}.

As discussed in~\cref{sec-overview}, a standard backward algorithm, used in similar settings in~\cite{benedikt2017polynomial,bayarmagnai2024algebraic,kauers2007equivalence,kauers2008solving}, provides  
a conceptually simple procedure for  invariant verification. 
Let   $\langle M,\bm{\alpha}\rangle$ be a simple linear loop. Let 
 $S\subseteq \QQ[\bm{x}]$ be a set of polynomials, written in the dense representation, with the  description size~$s$.
Define the sequence of nested ideals~$(I_i)_{i\in \NN}$ by
	$I_0= \langle S \rangle$ and $
	I_i = \langle P(M^j\boldsymbol{x}): P\in S, j\leq i \rangle$, as defined in~\eqref{eq:chain1}. 
Since $\QQ[\boldsymbol{x}]$ is Noetherian,
this ascending sequence $I_0 \subseteq I_1  \subseteq I_2 \subseteq \cdots $
of  ideals stabilises: there exists~$k$ such that $I_k= I_{k+j}$ for all
$j\in \mathbb{N}$. Denote by $I_{\infty}$ the stabilising value of the
sequence.
The algorithm tests whether $\bm{\alpha} \in  V(I_\infty)$. 
If yes, then $V(S)$ is an algebraic invariant for $\langle M,\bm{\alpha}\rangle$; otherwise,  it is not.

Since $M\boldsymbol{x}$ is a linear transformation, the degree of
$P(M^i\boldsymbol{x})$ is at most the degree of $P(\boldsymbol{x})$. 
From this observation we prove that 
if $\boldsymbol{\alpha}\not\in V(I_{\infty})$ holds then
there exists $k=  O(2^{s})$ such that  
$\boldsymbol{\alpha}\not\in V(I_k)$.
Below, we use this bound to obtain a $\coNP$
upper bound for invariant verification with the input ideal given in dense  representation.
 
\lemweakverifycoNP*
\begin{proof}

Recall that $s$ denotes the description size of~$S$, and
recall 
the chain of nested ideals  $(I_i)_{i\in \NN}$, defined above (and in \eqref{eq:chain1}). 
To show the $\coNP$ membership,
 it suffices to provide a polynomial-time verifiable  certificate  
showing that
 $\boldsymbol{\alpha} \not \in V(I_{\infty})$ for negative instances of the problem. We first prove that
 
\begin{restatable}{claim}{lengthchain}
\label{claim:length}
Suppose that $\boldsymbol{\alpha} \not \in V(I_{\infty})$. There exists $k =O(2^{s})$
such that $\boldsymbol{\alpha} \not \in V(I_{k})$.
\end{restatable}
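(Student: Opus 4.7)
The plan is to bound the length of the ascending chain of ideals by turning it into an ascending chain of finite-dimensional $\QQ$-vector spaces that all sit inside a single ambient space. The crucial ingredient is the observation already flagged in the excerpt: because $M\boldsymbol{x}$ is a \emph{linear} substitution, the polynomial $P(M^j\boldsymbol{x})$ has total degree at most $D\coloneqq\max_{P\in S}\deg P$, independently of $j$. Hence the vector spaces
\[
V_i \coloneqq \Span_{\QQ}\{P(M^j\boldsymbol{x}) : P\in S,\ 0\le j\le i\}
\]
all sit in the finite-dimensional $\QQ$-space $\QQ[\boldsymbol{x}]_{\le D}$ of polynomials of total degree at most $D$, whose dimension is $\binom{D+d}{d}$.

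Next, I would show that the first index $k_0$ at which the vector-space chain stops growing already forces the entire tail of the \emph{ideal} chain to stabilise. Suppose $V_{k_0}=V_{k_0+1}$. Then for every $P\in S$ we may write
\[
P(M^{k_0+1}\boldsymbol{x}) \;=\; \sum c_{P',j}\, P'(M^{j}\boldsymbol{x}),
\]
a $\QQ$-linear combination of the generators defining $V_{k_0}$. Applying the substitution $\boldsymbol{x}\mapsto M\boldsymbol{x}$ to this identity expresses $P(M^{k_0+2}\boldsymbol{x})$ as a $\QQ$-linear combination of polynomials $P'(M^{j+1}\boldsymbol{x})$ with $j\le k_0$, all of which already lie in $V_{k_0+1}=V_{k_0}$. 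A straightforward induction on the number of further substitutions gives $V_{k_0+j}=V_{k_0}$ for every $j\ge 0$, and consequently $I_{k_0}=I_{\infty}$.

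To finish, only the bound on $k_0$ remains. Each strict inclusion $V_i\subsetneq V_{i+1}$ raises the dimension by at least one, so $k_0\le \binom{D+d}{d}$. In the dense representation one coefficient is stored for every monomial of total degree at most $D$, so the description size of $S$ satisfies $s\ge \binom{D+d}{d}$. This already yields $k_0\le s$, which is well within the claimed $O(2^s)$ bound. Finally, since $V(I_0)\supseteq V(I_1)\supseteq\cdots\supseteq V(I_{\infty})$ with $V(I_{k_0})=V(I_{\infty})$, the hypothesis $\boldsymbol{\alpha}\notin V(I_{\infty})$ gives $\boldsymbol{\alpha}\notin V(I_{k_0})$ as required. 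I expect the main subtlety to be the substitution step that promotes vector-space stabilisation to ideal-chain stabilisation: the ideals $I_i$ are themselves infinite-dimensional as $\QQ$-vector spaces, and it is only their distinguished generating sets that inherit the uniform degree cap~$D$ from the linearity of the update.
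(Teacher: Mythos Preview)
Your proof is correct and follows essentially the same approach as the paper: both bound the length of the chain by the dimension $\binom{D+d}{d}$ of the ambient space $\QQ[\boldsymbol{x}]_{\le D}$, using linearity of the update together with the substitution $\boldsymbol{x}\mapsto M\boldsymbol{x}$ to propagate stabilisation. Your explicit use of the vector-space chain $(V_i)_i$ is cleaner, and your observation that dense encoding already forces $s\ge\binom{D+d}{d}$ (hence $k_0\le s$) is in fact sharper than the paper's stated $O(2^s)$.
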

\begin{proof}
By definition,  the maximal degree~$D$  of polynomials in $S$ and the number~$d$ of variables  are both bounded by~$s$.  
For all $n \in \N$ and all polynomials~$P \in S$,  the degree of  $ P(M^n \boldsymbol{x})$ is at most~$D$, meaning that  
the ideal $I_n$ is generated by polynomials of degree at most $D$.

Let $N\in \NN$ be such that $I_N = I_{N +1}$. This implies that $Q(M^{N +1}\boldsymbol{x}) \in \langle P(M^j\boldsymbol{x}) : P\in S, \, j \leq N \rangle$
for all~$Q\in S$.
But then $Q(M^{N +2}\boldsymbol{x}) \in \{ P(M^j\boldsymbol{x}) : P\in S,\,  j \leq N + 1\} = I_{N +1}=I_N$
for all~$Q\in S$.
Hence, using an inductive argument we can show $I_{N + j} = I_N$ for all~$j \in \N$. 
    Let $k$ be the smallest integer such that $I_{k} = I_{k +1}$, which implies $I_{k} = I_\infty$. Since $I_0 \subsetneq I_1 \subsetneq \cdots \subsetneq I_k$,  at each step $i$ we introduce at least one generator of $I_i$ that can not be expressed as a linear combination of the  generators of $I_{i -1}$. The generators of $I_i$ lie in the vector space of polynomials of degree at most $D$ with dimension~$\binom{D + d}{d}$. Therefore, $k \leq \binom{D + d}{d}$ which completes the proof of the claim.
\end{proof}

This claim implies that 
  there exists $P\in S$ such that 
  $P(M^k\boldsymbol{x}) (\boldsymbol{\alpha})\neq 0$.  
A $\coNP$ algorithm  guesses $P(\bm{x})\in S$,  an index $k =O(2^{s})$  and a prime $p$
with bitsize~$s$ such that $P(M^k\boldsymbol{x}) (\boldsymbol{\alpha})\neq 0$.
Taking the binary representation of~$k$ into account and, by standard doubling techniques, 
the algorithm constructs a small circuit 
for~$P(M^k\boldsymbol{x})$  and checks whether $P(M^k\boldsymbol{x})\not \equiv 0$ (mod~$p$) in polynomial time.
We borrow the correctness of the latter test from the well-known fingerprinting procedure for the ACIT problem~\cite{allender2009complexity}.

The proof of {\coNP}-hardness is by a  reduction from  \(\mathsf{3SAT}\)  to the complement of the invariant verification problem. 
Given a  \(\mathsf{3SAT}\) formula~$\Phi$, we construct a  loop~$\langle M, \bm{\alpha} \rangle$ with orbit~$\orbit$  and a polynomial~$Q(\bm{x})$ such that 
$\Phi$ is satisfiable if and only if $\orbit \not \subseteq V(\langle Q \rangle)$.

Let $\Phi = \bigwedge_{i=1}^m C_m$ be in CNF  over 
variables~$\{y_1, \dots, y_k\}$. Let $p_1 <  \dots < p_k$ be the first~$k$ primes. Define $D(i) \coloneqq  1+\sum_{j=1}^{i-1} p_j$ for $i\in \{1,\ldots,k\}$ and 
 $d\coloneqq  \sum_{j=1}^{k} p_j$. 
Construct $M \in \{0,1\}^{d\times d}$ and~$\bm{\alpha} \in \{0,1\}^d$ as follows:
 the entry $\alpha_{\ell}$  of $\bm{\alpha}$  is $1$ if, and only if, 
	$\ell = D(i)$ for some $i \in \{1, \dots, k\}$. 
The matrix~$M$ is a 
 block-diagonal matrix $ M_1 \oplus \dots \oplus M_k$,
	where $M_j$ is a permutation matrix of size~$p_j$.
Intuitively speaking, for any $n \in \NN$ the vector $M^n\bm{\alpha}$ splits into $k$ blocks of prime size, and  exactly one entry in each block is set to~$1$. By the Chinese remainder theorem, for every $\bm{\beta} \in \{0,1\}^d$ with exactly one~1 entry in each block,
	there exists~$n\in \NN$ such that $\bm{\beta} = M^n\bm{\alpha}$.

 We construct the polynomial~$Q$ over the vector $\bm{x}=(x_1,\cdots, x_d)$ of variables. 
 For each $1 \leq i \leq k$, we interpret the variables $x_{D(i)}$ and $x_{D(i)+1}$ as 
literals $y_i$ and $\neg y_i$. The other variables are called \emph{non-literal}.
Define $Q\in \QQ[\bm{x}]$ as follows 
	\[Q (\bm{x}) \coloneqq  \left(\prod_{x_i\text{ non-literal}} (1-x_i)\right) \cdot \prod_{i=1}^m Q_i(\bm{x}),\] 
	where $Q_i \coloneqq  \sum_{j=1}^k t_{ij}^2$ with 
\[t_{ij} = \begin{cases}
		x_{D(j)+1} & \text{if $y_j$ appears in $C_i$,}\\
		x_{D(j)+2} & \text{if $\neg y_j$ appears in $C_i$,}\\
		0 & \text{otherwise.}
	\end{cases}\]

The correctness of reduction follows from   two simple observations:  
$Q(\bm{x})$ vanishes on~$\bm{\beta}\in \{0,1\}^{d}$  if $\beta_i = 1$ for some non-literal entry $i$. Furthermore,  
provided that  all  non-literal entries are zero, the point~$\bm{\beta}$ is a zero of $Q(\bm{x})$ if and only if it corresponds to an unsatisfying assignment of~$\Phi$. 
\end{proof}

The backward algorithm employed in \cref{lem-weakverify-coNP} results in an inefficient \(\EXPSPACE\) upper bound  when the input polynomials have sparse representation. 
Here our route to obtaining a tighter complexity bound is through invariant generation, specifically computing the strongest (inductive) algebraic invariant \(\zski\).
By \cref{theo-inver-pspace}, we can construct a  set $\{I_1,\ldots,I_k\}$
of ideals  such that~$\zski=V(I_1 \cap \cdots \cap I_k)$.
The generating set of each ideal~$I_i$ has a small cardinality, and  comprises of    polynomials in $\QQ[\boldsymbol{x}]$ written in sparse representation with size polynomial in the  input loop description. We  perform several radical membership tests to determine the relation of~$\zski$ with~$V(S)$.

\lemweakverify*

\begin{proof}
Let
$\langle M,\bm{\alpha}\rangle$ be the input loop with orbit~$\orbit$.
	With \cref{theo-inver-pspace} at our disposal,
	we construct the  list of generators for the ideals 
    {$\mathcal{I}_0, \ldots,\mathcal{I}_{n_0-1}\subseteq \QQ[\bm{x}]$} and {$\mathcal{Y}\subseteq  \QQ[\bm{x}, y_1, \dots, y_s]$}.
Our construction is such that, for all~$i \in \{0,\ldots,n_0-1\}$,  we have
\(
\{M^i \bm{\alpha} \}=V(\mathcal{I}_i)
\). Moreover,
\(
(\bm{v}, \lambda_1,\ldots, \lambda_s) \in V(\mathcal{Y}) 
\) 	 
if and only if~$\bm{v}\in \overline{\{M^n \boldsymbol{\alpha} : n \geq n_0 \}}$.

We now demonstrate how to check in $\PSPACE$
	 whether 
	 \[V(\mathcal{I}_0 \cap \cdots \cap \mathcal{I}_{n_0-1} \cap \mathcal{Y}) \sim V(S)\] 
	 holds for  $\sim  \in \{=, \subseteq \}$.	
Since $\mathcal{Y}$ is defined with extra variables $y_i$'s, we  add  the relations  
between these extra variables to~$S$. 
(More precisely, for each $\lambda_i$, $1 \leq i \leq s$,   we add the set of polynomials defined in~\eqref{eq-eign}, together with $\tau(S_1)$, $\tau(P JP^{-1})-M$ and $\tau(U P^{-1})\bm{\alpha} - \bm{\beta}$ used in definition of $\mathcal{Y}$ from $\mathcal{J}$.)
We first explain how to check $\overline{\orbit} \subseteq V(S)$ in $\PSPACE$.
Towards this, we first check if the rational points  $M^{i}\bm{\alpha}$ with 
$i\in \{0,\ldots,n_0-1\}$ lie in $V(S)$,
this   reduces to the ACIT and can be tested in randomised polynomial time~\cite{allender2009complexity}. 
It remains to check whether $V(\mathcal{J}) \subseteq V(S)$; 
we verify this through testing whether $P \in \sqrt{\mathcal{J}}$
for each polynomial $P\in S$, which  
 reduces to radical membership testing.
 As explained in~\cref{sec-overview}, the latter task is in $\AM$
 under GRH and in $\PSPACE$ unconditionally. 

To conclude, we explain how to test $V(S) \subseteq \overline{\orbit}$ in $\PSPACE$.
We test whether every polynomial~$P\in \mathcal{I}_0 \cap \cdots \cap \mathcal{I}_{n_0-1} \cap \mathcal{J}$ is a member of~$\sqrt{S}$.
This test could be algorithmically expensive due to the intersection, as  a generating set for the intersection of these ideals can face an exponential blow-up in size. We instead test the complement of this question, by guessing one generator from each $\mathcal{I}_{i}$ and 
one from $\mathcal{J}$ and checking whether the product is not a member of~{$\sqrt{S}$}. Here again we
rely on  the $\AM$ protocol of 
radical membership testing.

The hardness follows from the construction given in~\cref{lem-weakverify-coNP}.
\end{proof}

\section{Bit-bounded Synthesis}
\label{sec-bitbounded}

The results in this section are motivated by the HTP-hardness results in \cref{claim:undec}.
Herein we consider complexity bounds for variants of the loop synthesis problem where one bounds the bitsizes of the loop components (\cref{th:bitbounded,th:bitboundedfix}).
In \cref{th:bitboundedfix}, we obtain tighter bounds by placing an additional dimension specification on the loop.

\thbitbounded*

\begin{proof}
The $\PSPACE$ bound follows by guessing~$M$ and $\boldsymbol{\alpha}$ with entries respecting the required bit bounds, and using the 
 invariant verification subroutine in \Cref{lem-weakverify} 
 applied to~$\langle M, \boldsymbol{\alpha} \rangle$ and the input ideal.
Thus all that remains is to prove the claimed lower bounds. 

Our lower bounds  are obtained by
 reductions from $\mathsf{3SAT}$,
and $\mathsf{Unique\;3SAT}$, following   the folklore encoding of $\mathsf{3SAT}$ in HN~\cite{koiran1996hn}.
The following encoding is the main building block for both 
reductions.  Let~$\Phi := \bigwedge_{i=1}^m C_m$ be in CNF  over 
variables~$\{y_1, \dots, y_d\}$.
From~$\Phi$ we construct a set of polynomials $S\subseteq \QQ[\bm{x}]$ with  vector~$\bm{x}=(x_1, \dots, x_{d+2})$  of variables. We initialise~$S$ with polynomial~$x_{d+1}-x_{d+2}$.
For each boolean variable $y_i$, with~$i\in \{1,\ldots,d\}$, we add the polynomial~$x_i(1-x_i)$ to~$S$.
For each clause~$C_j$, with~$j\in \{1,\ldots,m\}$, we add $P_i := \prod_{j=1}^d t_{ij}$ to~$S$, where  \begin{equation}\label{encode-lit}
	t_{ij} = \begin{cases}
		1-x_j & \text{if $y_j$ appears in $C_i$,}\\
		x_j & \text{if $\neg y_j$ appears in $C_i$,}\\
		1 & \text{otherwise.}
	\end{cases}
\end{equation}

\medskip 

\noindent \textbf{Weak bit-bounded synthesis over \(\{\Q,\Z\}\) is \(\NP\)-hard:}
The proof is by a reduction from~\(\mathsf{3SAT}\). Given an instance~$\Phi$ of~\(\mathsf{3SAT}\),  construct the set $S$ of polynomials, as described above. 

Assume that $\Phi$ is satisfiable. Given  a satisfying assignment, 
define $\bm{\alpha}$ such that  $\alpha_i=1$ if and only if $y_i$ is true in the assignment, and set $\alpha_{d+1}=\alpha_{d+2}=1$. 
Clearly, the infinite orbit of the point $\bm{\alpha}$ under the matrix~$M:=\diag{1, \dots, 1, 2, 2}$
lies in $V(\langle S \rangle)$.
Conversely, if $\Phi$ is unsatisfiable, the  set $V(\langle S \rangle)$ is empty
and, a fortiori, no loop exists for~$\langle S \rangle$.	

We note in passing that that the entries of such $M$ and $\bm{\alpha}$ have constant bitsize.

\medskip

\noindent {\textbf{Strong bit-bounded  synthesis is \(\NP\)-hard under randomised reductions:}}
Recall that $\mathsf{Unique\;3SAT}$ is 
 \(\NP\)-hard  under randomised polynomial-time reductions~\cite{valiant1986unique}. It is 
a promised version of
$\mathsf{3SAT}$, where the  input formula is 
\emph{promised}
to have \emph{at most~one} satisfying assignment.

 Given an instance~$\Phi$ of~\(\mathsf{Unique\;3SAT}\), construct again the set $S$ of polynomials, as described above.
By the promise on~$\Phi$, the projection of $V(\langle S\rangle)$ into the first $d$ coordinates is either empty or a singleton.
The proof is immediate from the previous case, and 
by the observation that 
if $\Phi$ is satisfiable, the projection of~$\overline{\{ M^n \bm{\alpha} : n \in \mathbb{N} \}}$ into the last two coordinates is exactly  $V(\langle x_{d+1}-x_{d+2}\rangle)$. 
\end{proof}

We now improve the complexity lower bounds under an additional dimensionality assumption.

\thbitboundedfix*

\begin{proof}

The $\NP$ bound follows by guessing~$M$ and $\boldsymbol{\alpha}$ with entries respecting the required bit bounds, and using the 
 invariant verification subroutine in  \Cref{lem-weakverify}  applied to~$\langle M, \boldsymbol{\alpha} \rangle$ and the input ideal.

The {\NP} lower bound is by reduction from the quadratic Diophantine equations problem,  known to be {\NP}-complete~\cite{manders1978quadratics,garey1979computers}:
it asks, given natural numbers~$a, b, c$,  
whether there a solution \((x,y)\in\N^2\) to the equation $ax^2+by=c$.
Given an instance~$(a,b,c)$ of the 
 quadratic Diophantine equations problem~$(a,b,c)$, 
	we construct polynomials $L, P, Q \in \ZZ[\bm{x}]$
 with vector~$\bm{x}=(x, y, y_1,y_2, y_3, y_4, z_1, z_2)$ of variables 
	such that there is an integer point in the variety $V(\langle L, P \rangle ) \subseteq \alg^6$
	if and only if the original equation $ax^2+by=c$ has a solution \((x,y)\in\N^2\).
	Define
	\begin{align*}
		L (\bm{x}) &:= y - y_1^2-y_2^2-y_3^2-y_4^2, \\
	P(\bm{x}) &:= ax^2+by-c, \\
	Q(\bm{x}) &:= z_1 - z_2.
	\end{align*}

	By Lagrange's four-squares theorem, every positive integer~$y$ can be expressed as a sum of four integer squares; %
    thus polynomial \(L\) ensures that variable \(y\) can attain only non-negative integer values.
Suppose the intersection $V(\langle L, P, Q\rangle ) \cap \ZZ^8$ is non-empty.
Observe that for \(\bm{\alpha}\in V(\langle L, P, Q\rangle ) \cap \ZZ^8\),
the orbit under $M:=\diag{1, \dots, 1, 2, 2}$ is infinite, and thus$\langle M, \bm{\alpha} \rangle$	 is a non-trivial loop.  
The converse direction is immediate.
\end{proof}

\section{Further Discussion}
\label{sec-conclusion}
We suggest several directions for further research inspired by our contributions to invariant generation for simple linear loops and  loop synthesis presented herein.

\medskip 

\noindent {\bf Invariant Generation for  affine programs.}
A program is considered affine if it exclusively features nondeterministic branching (as opposed to conditional branching) and all its assignments are defined by affine expressions. The 
invariant generation problem for 
affine programs is  addressed by  the
algorithm in \cite{hrushovski2023strong} 
 through the group-closure problem. This problem entails computing a generating set of polynomials for the  Zariski closure $\overline{\langle M_1, \ldots, M_k \rangle}$ for a given  set~$\{M_1, \ldots, M_k\}$ of invertible rational matrices.
 The tightest complexity bound for solving the group-closure problem is severalfold exponential time~\cite{NPSHW2021}.

The main result in this paper, \cref{theo-inver-pspace},  presents a  
{\PSPACE} algorithm for generating the invariants of a simple linear loop (the class corresponding to branch-free loops with a single linear update). To extend our technique to the general case of affine programs, we may first consider  the setting with multiple linear updates~$M_1, \ldots, M_k$ where the matrices are  commutative and invertible. 
Since the $M_i$ commute,  the orbit of the loop is
defined by~$\orbit = \{ M_1^{n_1} \cdots M_k^{n_k} \boldsymbol{\alpha}$ : $n_1, \ldots, n_k \in \N\}$ where $\bm{\alpha}$ is the initial vector.

Recall
that a matrix \(M\in \QQbar^{d\times d}\) is {\emph{unipotent} if there exists \(n\in\N\) such that \((M -\Id_d)^n=0_d\) (here \(\Id_d\) and \(0_d\) are the $d\times d$ identity and zero matrices, respectively)} and $M$ is
\emph{semisimple} if it is diagonalisable over \(\QQbar\).
Define $G\coloneqq  \overline{\langle M_1, \ldots, M_k \rangle}$ so that $\zski = \overline{ G  \boldsymbol{\alpha}}$. 
It is known that the subset of semisimple matrices in~$G$,
denoted by $G_s$, forms an algebraic subgroup; {likewise the set of unipotent matrices in~$G$, denoted
by $G_u$, forms an algebraic subgroup. 
By the Jordan--Chevalley decomposition, we have~$\zski = \overline{G_u G_s \boldsymbol{\alpha}}$.}

In the case that \(G=G_s\) we have the following.
\begin{lemma}
Let \(G\) 
be a  semisimple commutative group. A set of polynomials defining~$\overline{ G  \boldsymbol{\alpha}}$ is computable in {\PSPACE}. 
\end{lemma}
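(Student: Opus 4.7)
The plan is to reduce the multi-matrix semisimple commutative case to the same lattice-of-multiplicative-relations computation that bounds the complexity of \Cref{theo-inver-pspace}, by exploiting simultaneous diagonalisation. Write $G = \overline{\langle M_1, \ldots, M_k\rangle}$ where the $M_i$ commute and are each diagonalisable over $\QQbar$. Commuting diagonalisable matrices are simultaneously diagonalisable: there exists an invertible $P\in\QQbar^{d\times d}$ and diagonal matrices $D_i = \operatorname{diag}(\lambda_{i,1},\ldots,\lambda_{i,d})$ with $M_i = P D_i P^{-1}$ for every $i$. First I would compute such a $P$ in polynomial time by diagonalising $M_1$ symbolically (using the polynomial-time routines for eigendecomposition over $\QQbar$ we already employ in \Cref{sec:JBD}), then restricting the remaining $M_j$ to each eigenspace of $M_1$ (which is preserved by commutativity) and proceeding inductively on the dimension.

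Next, set $\bm{\beta}\coloneqq P^{-1}\bm{\alpha}$. Since $\overline{G\bm{\alpha}} = P\cdot\overline{\{D_1^{n_1}\cdots D_k^{n_k}\bm{\beta} : n_1,\ldots,n_k\in\NN\}}$, it suffices to compute the closure inside the diagonalised coordinates. In these coordinates every orbit point has $j$-th entry $\beta_j \cdot \lambda_{1,j}^{n_1}\cdots \lambda_{k,j}^{n_k}$. The coordinates with $\beta_j = 0$ contribute the trivial relations $x_j = 0$, and coordinates with $\lambda_{i,j}=0$ for some $i$ are handled analogously by the finitely many isolated vanishing contributions (mirroring the nilpotent-block treatment of \Cref{sec:computestrong}, but here trivially because no non-trivial Jordan blocks appear). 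After removing these, we are left with an orbit contained in an algebraic torus.

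The crux is then to compute the defining ideal of the closure of the image of the homomorphism $\varphi\colon \ZZ^k \to (\QQbar^\times)^{d'}$ sending $(n_1,\ldots,n_k)\mapsto (\prod_i \lambda_{i,j}^{n_i})_{j}$, scaled componentwise by the nonzero $\beta_j$. This closure is a toric subvariety of the torus, and its defining binomial ideal is determined by the lattice
\[
L = \bigl\{(a_1,\ldots,a_{d'})\in\ZZ^{d'} : \textstyle\prod_j \lambda_{i,j}^{a_j} = 1 \text{ for all } i\in\{1,\ldots,k\}\bigr\},
\]
each basis vector $\mathbf{a}\in L$ yielding a binomial relation of the form $\prod_j(\beta_j^{-1}x_j)^{a_j^+} - \prod_j(\beta_j^{-1}x_j)^{a_j^-} = 0$. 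Once a basis of $L$ is in hand, assembling these binomials and conjugating by $P$ to return to the original coordinates runs in polynomial time.

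The hard part, precisely as for the single-matrix setting in \Cref{theo-inver-pspace}, is producing a basis for $L$. The same architecture applies verbatim: Masser's theorem gives a polynomial bound on the bitsize of the entries of a basis of $L$ in terms of the heights and degrees of the $\lambda_{i,j}$ (which are all algebraic of degree bounded by $d$ in each coordinate), and membership of a tuple in $L$ reduces to a query in the existential theory of the reals via the symbolic representation of the eigenvalues. A brute-force search over all tuples bounded by Masser's estimate, with each membership query decided in $\exists\mathbb{R}\subseteq\PSPACE$, produces a basis of $L$ in $\PSPACE$. Combining this with the polynomial-time simultaneous diagonalisation and the polynomial-time post-processing establishes the $\PSPACE$ bound.
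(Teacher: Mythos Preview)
Your proposal is correct and follows essentially the same route as the paper: simultaneously diagonalise the commuting generators, reduce to the closure of a diagonal orbit, and read off the defining binomials from the lattice of multiplicative relations using Masser's bound together with a \(\PSPACE\) brute-force search. The only cosmetic differences are that the paper normalises $P^{-1}\boldsymbol{\alpha}$ to lie in $\{0,1\}^d$ via \Cref{oneinblock} (you instead carry the $\beta_j$'s into the binomials, which is equivalent), and the paper phrases your lattice $L$ as the intersection $\bigcap_i L_i$ of the per-generator exponent lattices---the same object, but this framing makes it transparent that Masser's bound applies to each $L_i$ separately and the intersection is then polynomial-time linear algebra; also, since $G$ is a group the $M_i$ are invertible, so your zero-eigenvalue case never actually occurs.
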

\begin{proof}
    Since $G$ is a semisimple commutative group, 
    there exists $P \in \GL_d(\QQbar)$ such that for all~$i \in \{ 1, \ldots, k\}$, the matrix~$D_i \coloneqq P^{-1} M_i P $ is  diagonal. Following~\cref{oneinblock}, we can choose $P$ in such a way that $P^{-1} \boldsymbol{\alpha} \in \{0,1\}^d$. 
    Writing $\zski = P \overline{ \langle D_1, \ldots, D_k \rangle P^{-1}\boldsymbol{\alpha}}$,
    we observe that $\overline{ \langle D_1, \ldots, D_k \rangle P^{-1}\boldsymbol{\alpha}}$ is the 
    closure of a group of diagonal matrices of dimension at most \(d\).
    To obtain a {\PSPACE} procedure it suffices to closely follow our construction in~\cref{theo-inver-pspace}. The sole difference from that construction lies in the set
 $S_1$ of polynomial equations, which  is now defined  by the intersection of lattices $L_i$ of multiplicity relations between the entries   
 of~$D_i P^{-1}\boldsymbol{\alpha}$~\cite[Chapter 3]{bombieri2006heights}.
\end{proof}

A  natural direction for future research  involves extending our {\PSPACE} procedure to apply to commutative matrices more broadly. 
This requires a better understanding of the polynomial map
\begin{align*}
    G_u \times \overline{ G_s \boldsymbol{\alpha}} &\to \overline{ G\boldsymbol{\alpha}},\\
    (g, \boldsymbol{v}) &\mapsto g \boldsymbol{v}.
\end{align*}
Subsequently, an ambitious objective is to develop a procedure with improved complexity bounds for the  invariant generation problem in affine programs.

\medskip 

\noindent {\bf Loop Synthesis.}
Our results for bit-bounded synthesis over the rationals (\cref{th:bitbounded,th:bitboundedfix}) demonstrate an inherent source of hardness (Hilbert's Tenth problem).
A first direction for future research might consider circumventing such obstacles
by focusing on classes of ideals with an abundance of rational solutions.  A small initial step in this direction, the synthesis of loops for pure-difference binomials, was shown in \cite{kenison2023polynomial}.
An ultimate goal is the synthesis of loops for the larger class of binomial ideals.

\newpage
\begin{acks}
The authors thank James Worrell for his valuable comments and feedback. G.~Kenison and A.~Varonka are grateful for their financially supported travel (UKRI Frontier Research Grant EP/X033813/1). M.~Shirmohammadi and R.~Ait El Manssour are supported by the International Emerging Actions grant (IEA’22), and by ANR grant VeSyAM (ANR-22-CE48-0005). %

\includegraphics[height=1em]{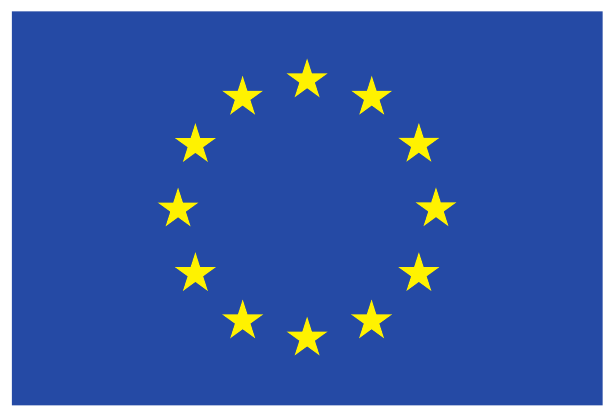} \quad This paper is part of a project that has received funding from the European Research Council (ERC) under the European Union's Horizon 2020 research and innovation program (grant agreement No.~10103444).
A.~Varonka gratefully acknowledges the support of the ERC consolidator grant ARTIST 101002685 and the Vienna Science and Technology Fund (WWTF) 10.47379/ICT19018.

We thank the hosts of Autob\'{o}z 2023 in Kassel, Germany for providing the space where initial discussions for this paper took place.

\end{acks}

\printbibliography

@Misc{M2,
          author = {Grayson, Daniel R. and Stillman, Michael E.},
          title = {Macaulay2, a software system for research in algebraic geometry},
          howpublished = {Available at \url{http://www2.macaulay2.com}}
        }

@book{cox2015ideals,
	AUTHOR = {Cox, David A. and Little, John B. and O'Shea, Donal},
	TITLE = {Ideals, varieties, and algorithms},
	SERIES = {Undergraduate Texts in Mathematics},
	EDITION = {Fourth},
	NOTE = {An introduction to computational algebraic geometry and
	commutative algebra},
	PUBLISHER = {Springer, Cham},
	YEAR = {2015},
	PAGES = {xvi+646},
	MRCLASS = {13P10 (13-01 14-01 14Qxx 68W30)},
	MRNUMBER = {3330490},
	DOI = {10.1007/978-3-319-16721-3},
	rurl = {https://doi.org/10.1007/978-3-319-16721-3},
}

@article{shlapentokh2011h10,
	ISSN = {10798986},
	abstract = {This paper surveys the recent developments in the area that grew out of attempts to solve an analog of Hubert's Tenth Problem for the field of rational numbers and the rings of integers of number fields. It is based on a plenary talk the author gave at the annual North American meeting of ASL at the University of Notre Dame in May of 2009.},
	author = {Alexandra Shlapentokh},
	journal = {The Bulletin of Symbolic Logic},
	number = {2},
	pages = {230--251},
	publisher = {[Association for Symbolic Logic, Cambridge University Press]},
	title = {DEFINING INTEGERS},
	volume = {17},
	year = {2011}
}

@article{matiyasevich1970h10,
	author= "Yuri Matiyasevich",
	title="Enumerable sets are {D}iophantine",
	journal="Soviet Math. Dokl.",
	year="1970",
	volume="11",
	pages="354-358",
}

@article{davis1972numsol,
	ISSN = {00029939, 10886826},
	rurl = {http://www.jstor.org/stable/2037646},
	abstract = {For any nontrivial set of cardinal numbers $\leqq \aleph_0$, it is shown that there is no algorithm for testing whether or not the number of positive integer solutions of a given polynomial Diophantine equation belongs to the set.},
	author = {Martin Davis},
	journal = {Proceedings of the American Mathematical Society},
	number = {2},
	pages = {552--554},
	publisher = {American Mathematical Society},
	title = {On the {N}umber of {S}olutions of {D}iophantine {E}quations},
	urldate = {2023-05-15},
	volume = {35},
	year = {1972}
}

@article{buchberger2006jsc,
	author    = {Bruno Buchberger},
	title     = {{B}runo {B}uchberger's {PhD} thesis 1965: An algorithm for finding the basis
	elements of the residue class ring of a zero dimensional polynomial
	ideal},
	journal   = {J. Symb. Comput.},
	volume    = {41},
	number    = {3-4},
	pages     = {475--511},
	year      = {2006},
	optdoi       = {10.1016/j.jsc.2005.09.007},
}

@inproceedings{rodriguez2004inv,
	author = {Rodr\'{i}guez-Carbonell, Enric and Kapur, Deepak},
	title = {{Automatic Generation of Polynomial Loop Invariants: Algebraic Foundations}},
	year = {2004},
	optdoi = {10.1145/1005285.1005324},
	booktitle = {Proc. of ISSAC},
	pages = {266–273},
}

@article{kauers2008algrel,
	title = {{Computing the algebraic relations of C-finite sequences and multisequences}},
	journal = {Journal of Symbolic Computation},
	volume = {43},
	number = {11},
	pages = {787-803},
	year = {2008},
	issn = {0747-7171},
	doi = {10.1016/j.jsc.2008.03.002},
	rurl = {https://www.sciencedirect.com/science/article/pii/S074771710800045X},
	author = {Manuel Kauers and Burkhard Zimmermann},
	keywords = {C-finite sequences, Algebraic relations, Recurrence equations},
	abstract = {We present an algorithm for computing generators for the ideal of algebraic relations among sequences which are given by homogeneous linear recurrence equations with constant coefficients. Knowing these generators makes it possible to use Gröbner basis methods for carrying out certain basic operations in the ring of such sequences effectively. In particular, one can answer the question whether a given sequence can be represented in terms of other given sequences.}
}

@article{hrushovski2023strong,
	author = {Hrushovski, Ehud and Ouaknine, Jo\"{e}l and Pouly, Amaury and Worrell, James},
	title = {On {S}trongest {A}lgebraic {P}rogram {I}nvariants},
	year = {2023},
	issue_date = {October 2023},
	publisher = {Association for Computing Machinery},
	address = {New York, NY, USA},
	volume = {70},
	number = {5},
	issn = {0004-5411},
	rurl = {https://doi.org/10.1145/3614319},
	doi = {10.1145/3614319},
	abstract = {A polynomial program is one in which all assignments are given by polynomial expressions and in which all branching is nondeterministic (as opposed to conditional). Given such a program, an algebraic invariant is one that is defined by polynomial equations over the program variables at each program location. M\"{u}ller-Olm and Seidl have posed the question of whether one can compute the strongest algebraic invariant of a given polynomial program. In this article, we show that, while strongest algebraic invariants are not computable in general, they can be computed in the special case of affine programs, that is, programs with exclusively linear assignments. For the latter result, our main tool is an algebraic result of independent interest: Given a finite set of rational square matrices of the same dimension, we show how to compute the Zariski closure of the semigroup that they generate.},
	journal = {J. ACM},
	month = {10},
	articleno = {29},
	numpages = {22},
	keywords = {Program verification, polynomial programs, algebraic invariants, matrix semigroups, Zariski closure}
}

@article{ouaknine2015looptermination,
author = {Ouaknine, Jo\"{e}l and Worrell, James},
title = {On linear recurrence sequences and loop termination},
year = {2015},
issue_date = {April 2015},
publisher = {Association for Computing Machinery},
address = {New York, NY, USA},
volume = {2},
number = {2},
url = {https://doi.org/10.1145/2766189.2766191},
doi = {10.1145/2766189.2766191},
abstract = {A sequence of real numbers is said to be positive if all terms are positive, and ultimately positive if all but finitely many terms are positive. In this article we survey recent progress on long-standing open problems concerning deciding the positivity and ultimate positivity of integer linear recurrence sequences. We briefly discuss some of the many contexts in which these problems arise, and relate them to the well-known Skolem Problem, which asks whether a given linear recurrence sequence has a zero term. We also highlight some of the mathematical techniques that have been used to obtain decision procedures for these problems, pointing out obstacles to further progress.In the second half of this survey we move on to closely related questions concerning the termination of linear while loops. This is a well-studied subject in software verification and by now there is rich toolkit of techniques to prove termination in practice. However the decidability of termination for some of the most basic types of loop remains open. Here again we discuss recent progress and remaining open problems.},
journal = {ACM SIGLOG News},
month = {4},
pages = {4–13},
numpages = {10}
}

@InProceedings{drager2016invariance,
author="Dr{\"a}ger, Klaus",
editor="Jacobs, Bart
and L{\"o}ding, Christof",
title="The Invariance Problem for Matrix Semigroups",
booktitle="Foundations of Software Science and Computation Structures",
year="2016",
publisher="Springer Berlin Heidelberg",
address="Berlin, Heidelberg",
pages="479--492",
abstract="The question of whether a given subspace of {\$}{\$}{\backslash}mathbb {\{}Q{\}}^d{\$}{\$}Qdcan be reached from a starting vector using linear transformations from a given finite set is well known to be undecidable in dimension 3 and above. We show that, in contrast, the invariance problem, i.e. the question of whether it is possible to remain inside a given subspace indefinitely using linear transformations from a given finite set, is decidable.",
isbn="978-3-662-49630-5"
}

@article{babi1988arthurmerlin,
title = {{A}rthur--{M}erlin games: {A} randomized proof system, and a hierarchy of complexity classes},
journal = {Journal of Computer and System Sciences},
volume = {36},
number = {2},
pages = {254-276},
year = {1988},
issn = {0022-0000},
doi = {10.1016/0022-0000(88)90028-1},
author = {László Babai and Shlomo Moran}
}

@article{koiran1996hn,
IDS = {Koiran96},
title = {Hilbert's {N}ullstellensatz {I}s in the {P}olynomial {H}ierarchy},
journal = {J. Complex.},
volume = {12},
number = {4},
pages = {273-286},
year = {1996},
issn = {0885-064X},
doi = {10.1006/jcom.1996.0019},
author = {Pascal Koiran},
abstract = {We show that if the Generalized Riemann Hypothesis is true, the problem of deciding whether a system of polynomial equations in several complex variables has a solution is in the second level of the polynomial hierarchy (in fact, this problem is in RPNP). The best previous bound was PSPACE.},
note = {long version, DIMACS report 96-27}
}

@article{galuppi2021toric,
        IDS = {GalSt21toric},
	author = {Francesco Galuppi and Mima Stanojkovski},
	title = {Toric varieties from cyclic matrix semigroups},
	year = {2021},
	publisher = {EUT Edizioni Università di Trieste},
	JOURNAL = {Rend. Istit. Mat. Univ. Trieste},
 	FJOURNAL = {Rendiconti dell'Istituto di Matematica dell'Universit\`a di
              Trieste. An International Journal of Mathematics},
	abstract = {We present and expand some existing results on the Zariski closure of cyclic groups and semigroups of matrices. We show that, with the exclusion of isolated points, their irreducible components are toric varieties. Additionally, we demonstrate how every toric variety can be realized as the Zariski closure of a cyclic matrix group. Our paper includes a number of explicit examples and a note on existing computational results.},
	keywords = {Toric varieties; matrix semigroups; },
	doi = {10.13137/2464-8728/33099},	
    issn = {0049-4704},    
}

@inproceedings{kenison2023polynomial,
  author       = {George Kenison and
                  Laura Kov{\'{a}}cs and
                  Anton Varonka},
  editor       = {Alicia Dickenstein and
                  Elias P. Tsigaridas and
                  Gabriela Jeronimo},
  title        = {{F}rom {P}olynomial {I}nvariants to {L}inear {L}oops},
  booktitle    = {Proceedings of the 2023 International Symposium on Symbolic and Algebraic
                  Computation, {ISSAC} 2023, Troms{\o}, Norway, July 24-27, 2023},
  pages        = {398--406},
  publisher    = {{ACM}},
  year         = {2023},
  doi          = {10.1145/3597066.3597109},
}

@inproceedings{humenberger2020algebra,
  author       = {Andreas Humenberger and
                  Nikolaj S. Bj{\o}rner and
                  Laura Kov{\'{a}}cs},
  editor       = {Brijesh Dongol and
                  Elena Troubitsyna},
  title        = {{A}lgebra-{B}ased {L}oop {S}ynthesis},
  booktitle    = {Integrated Formal Methods - 16th International Conference, {IFM} 2020,
                  Lugano, Switzerland, November 16-20, 2020, Proceedings},
  series       = {Lecture Notes in Computer Science},
  volume       = {12546},
  pages        = {440--459},
  publisher    = {Springer},
  year         = {2020},
  doi          = {10.1007/978-3-030-63461-2\_24},
  timestamp    = {Sun, 02 Oct 2022 16:07:39 +0200},
  biburl       = {https://dblp.org/rec/conf/ifm/HumenbergerBK20.bib},
  bibsource    = {dblp computer science bibliography, https://dblp.org}
}

@InProceedings{hitarth2024quadratic,
  author =	{Hitarth, S. and Kenison, George and Kov\'{a}cs, Laura and Varonka, Anton},
  title =	{{Linear Loop Synthesis for Quadratic Invariants}},
  booktitle =	{41st International Symposium on Theoretical Aspects of Computer Science (STACS 2024)},
  pages =	{41:1--41:18},
  series =	{Leibniz International Proceedings in Informatics (LIPIcs)},
  ISBN =	{978-3-95977-311-9},
  ISSN =	{1868-8969},
  year =	{2024},
  volume =	{289},
  editor =	{Beyersdorff, Olaf and Kant\'{e}, Mamadou Moustapha and Kupferman, Orna and Lokshtanov, Daniel},
  publisher =	{Schloss Dagstuhl -- Leibniz-Zentrum f{\"u}r Informatik},
  address =	{Dagstuhl, Germany},
  URL =		{https://drops.dagstuhl.de/entities/document/10.4230/LIPIcs.STACS.2024.41},
  URN =		{urn:nbn:de:0030-drops-197512},
  doi =		{10.4230/LIPIcs.STACS.2024.41},
  annote =	{Keywords: program synthesis, loop invariants, verification, Diophantine equations}
}

@article{humenberger2022algebra,
author = {Humenberger, Andreas and Amrollahi, Daneshvar and Bj\o{}rner, Nikolaj and Kov\'{a}cs, Laura},
title = {Algebra-{B}ased {R}easoning for {L}oop {S}ynthesis},
year = {2022},
issue_date = {March 2022},
publisher = {Association for Computing Machinery},
address = {New York, NY, USA},
volume = {34},
number = {1},
issn = {0934-5043},
doi = {10.1145/3527458},
abstract = {Provably correct software is one of the key challenges of our software-driven society. Program synthesis—the task of constructing a program satisfying a given specification—is one strategy for achieving this. The result of this task is then a program that is correct by design. As in the domain of program verification, handling loops is one of the main ingredients to a successful synthesis procedure.We present an algorithm for synthesizing loops satisfying a given polynomial loop invariant. The class of loops we are considering can be modeled by a system of algebraic recurrence equations with constant coefficients, thus encoding program loops with affine operations among program variables. We turn the task of loop synthesis into a polynomial constraint problem by precisely characterizing the set of all loops satisfying the given invariant. We prove soundness of our approach, as well as its completeness with respect to an a priori fixed upper bound on the number of program variables. Our work has applications toward synthesizing loops satisfying a given polynomial loop invariant—program verification—as well as generating number sequences from algebraic relations. To understand viability of the methodology and heuristics for synthesizing loops, we implement and evaluate the method using the Absynth tool.},
journal = {Form. Asp. Comput.},
month = {7},
articleno = {4},
numpages = {31},
keywords = {Program synthesis, SMT solving, loop invariants, linear recurrences, symbolic computation}
}

@article{cai1994jnf,
	title = {COMPUTING JORDAN NORMAL FORMS EXACTLY FOR COMMUTING MATRICES IN POLYNOMIAL TIME },
	journal = {International Journal of Foundations of Computer Science},
	volume = {05},
	number = {03n04},
	pages = {293-302},
	year = {1994},
	issn = {0747-7171},
	doi = {10.1142/S0129054194000165},
	author = {Jin-Yi Cai},
	keywords = {Jordan normal form,polynomial time algorithm,commutative matrices,algebraic extension field, Galois group},
}

@phdthesis{ge1993thesis,
	author        = {Guoqiang Ge},
	title       = {Algorithms Related to Multiplicative Representations of Algebraic
	Numbers}, 
	school       = {U.C. Berkeley},
	year         = {1993},
}

@article{kincaid2017reasoning,
author = {Kincaid, Zachary and Cyphert, John and Breck, Jason and Reps, Thomas},
title = {Non-Linear Reasoning for Invariant Synthesis},
year = {2017},
issue_date = {January 2018},
publisher = {Association for Computing Machinery},
address = {New York, NY, USA},
volume = {2},
number = {POPL},
rurl = {https://doi.org/10.1145/3158142},
doi = {10.1145/3158142},
abstract = {Automatic generation of non-linear loop invariants is a long-standing challenge in program analysis, with many applications. For instance, reasoning about exponentials provides a way to find invariants of digital-filter programs, and reasoning about polynomials and/or logarithms is needed for establishing invariants that describe the time or memory usage of many well-known algorithms. An appealing approach to this challenge is to exploit the powerful recurrence-solving techniques that have been developed in the field of computer algebra, which can compute exact characterizations of non-linear repetitive behavior. However, there is a gap between the capabilities of recurrence solvers and the needs of program analysis: (1) loop bodies are not merely systems of recurrence relations---they may contain conditional branches, nested loops, non-deterministic assignments, etc., and (2) a client program analyzer must be able to reason about the closed-form solutions produced by a recurrence solver (e.g., to prove assertions).  This paper presents a method for generating non-linear invariants of general loops based on analyzing recurrence relations. The key components are an abstract domain for reasoning about non-linear arithmetic, a semantics-based method for extracting recurrence relations from loop bodies, and a recurrence solver that avoids closed forms that involve complex or irrational numbers. Our technique has been implemented in a program analyzer that can analyze general loops and mutually recursive procedures. Our experiments show that our technique shows promise for non-linear assertion-checking and resource-bound generation.},
journal = {Proc. ACM Program. Lang.},
month = {12},
articleno = {54},
numpages = {33},
keywords = {Recurrence relation, Invariant generation, Operational calculus}
}

@article{valiant1986unique,
	author = {Valiant, L. G. and Vazirani, V. V.},
	title = {{NP is as Easy as Detecting Unique Solutions}},
	year = {1986},
	issue_date = {Nov. 1986},
	publisher = {Elsevier Science Publishers Ltd.},
	address = {GBR},
	volume = {47},
	number = {1},
	issn = {0304-3975},
	journal = {Theor. Comput. Sci.},
	month = {11},
	pages = {85–93},
	numpages = {9}
}

@article{cai2000complexityabc,
author = {Cai, Jin-yi and Lipton, Richard J. and Zalcstein, Yechezkel},
title = {{T}he {C}omplexity of the {A} {B} {C} {P}roblem},
journal = {SIAM Journal on Computing},
volume = {29},
number = {6},
pages = {1878-1888},
year = {2000},
doi = {10.1137/S0097539794276853},
}

@incollection{masser1988relations, 
	place={Cambridge}, 
	title={Linear relations on algebraic groups}, 
	DOI={10.1017/CBO9780511897184.016}, 
	booktitle={New Advances in Transcendence Theory}, 
	publisher={Cambridge University Press}, 
	author={Masser, D. W.}, 
	editor={Baker, Alan}, 
	year={1988}, 
	pages={248–262}
}

@article{DerksenJK05,
author = {H. Derksen and
E. Jeandel and
P. Koiran},
 IDS = {derksen2005quantum},
title= {Quantum automata and algebraic groups},
journal = {J. Symb. Comput.},
volume = {39},
number = {3-4},
pages= {357--371},
year= {2005}
}

@inproceedings{beyer2007invariant,
  title={Invariant synthesis for combined theories},
  author={D. Beyer and T. A. Henzinger and R. Majumdar and A. Rybalchenko},
  booktitle={International Workshop on Verification, Model Checking, and Abstract Interpretation},
  pages={378--394},
  year={2007},
  organization={Springer}
}

@inproceedings{NPSHW2021,
  IDS={nosan2021zariski},
  title={{O}n the {C}omputation of the {Z}ariski {C}losure of {F}initely {G}enerated {G}roups of {M}atrices},
  author={K. Nosan and
   A. Pouly and 
   S. Schmitz and
   M. Shirmohammadi and 
   J. Worrell},
  booktitle={Proceedings of the 2022 International Symposium on Symbolic and Algebraic Computation},
  pages={129--138},
  year={2022}
}

@inproceedings{Koiran97,
    IDS = {koiran1997randomized},
  author       = {P. Koiran},
  title        = {Randomized and Deterministic Algorithms for the Dimension of Algebraic
                  Varieties},
  booktitle    = {38th Annual Symposium on Foundations of Computer Science, {FOCS} '97,
                  Miami Beach, Florida, USA, October 19-22, 1997},
  pages        = {36--45},
   publisher    = {{IEEE} Computer Society},
  year         = {1997}
  }

@article{poonen2008,
  title={Undecidability in Number Theory},
  author={Poonen, Bjorn},
  journal={Notices of the AMS},
  volume={55},
  number={3},
  year={2008}
}

@inproceedings{BalajiNS022,
  author       = {Nikhil Balaji and
                  Klara Nosan and
                  Mahsa Shirmohammadi and
                  James Worrell},
  title        = {Identity Testing for Radical Expressions},
  booktitle    = {{LICS} '22: 37th Annual {ACM/IEEE} Symposium on Logic in Computer
                  Science, Haifa, Israel, August 2 - 5, 2022},
  pages        = {8:1--8:11},
  year         = {2022},
  publisher    = {{ACM}}
}

@inproceedings{muller2004note,
  title={A note on {K}arr’s algorithm},
  author={M{\"u}ller-Olm, Markus and Seidl, Helmut},
  booktitle={International Colloquium on Automata, Languages, and Programming},
  pages={1016--1028},
  year={2004},
  organization={Springer}
}

@article{karr1976affine,
  title={Affine relationships among variables of a program},
  author={Karr, Michael},
  journal={Acta Informatica},
  volume={6},
  pages={133--151},
  year={1976},
  publisher={Springer}
}

@article{manders1978quadratics,
	title = {{NP}-{C}omplete decision problems for binary quadratics},
	journal = {Journal of Computer and System Sciences},
	volume = {16},
	number = {2},
	pages = {168-184},
	year = {1978},
	issn = {0022-0000},
	doi = {https://doi.org/10.1016/0022-0000(78)90044-2},
	url = {https://www.sciencedirect.com/science/article/pii/0022000078900442},
	author = {Kenneth L. Manders and Leonard Adleman}
}

@book{garey1979computers,
	asin = {0716710455},
	author = {Garey, M. R. and Johnson, D. S.},
	edition = {First Edition},
	isbn = {0716710455},
	keywords = {complexity computer np science tcs theoretical},
	publisher = {W. H. Freeman},
	title = {Computers and {I}ntractability: {A} {G}uide to the {T}heory of {NP}-{C}ompleteness ({S}eries of {B}ooks in the {M}athematical {S}ciences)},
	year = 1979
}

@inproceedings{johnson1990matrix,
  title={Matrix completion problems: a survey},
  author={Johnson, Charles R},
  booktitle={Matrix theory and applications},
  volume={40},
  pages={171--198},
  year={1990}
}

@article{Candes2010,
  title = {The Power of Convex Relaxation: Near-Optimal Matrix Completion},
  volume = {56},
  ISSN = {1557-9654},
  url = {http://dx.doi.org/10.1109/TIT.2010.2044061},
  DOI = {10.1109/tit.2010.2044061},
  number = {5},
  journal = {IEEE Transactions on Information Theory},
  publisher = {Institute of Electrical and Electronics Engineers (IEEE)},
  author = {Candes,  Emmanuel J. and Tao,  Terence},
  year = {2010},
  month = may,
  pages = {2053–2080}
}

@article{goldberg1992collaborative,
author = {Goldberg, David and Nichols, David and Oki, Brian M. and Terry, Douglas},
title = {Using collaborative filtering to weave an information tapestry},
year = {1992},
issue_date = {Dec. 1992},
publisher = {Association for Computing Machinery},
address = {New York, NY, USA},
volume = {35},
number = {12},
issn = {0001-0782},
url = {https://doi.org/10.1145/138859.138867},
doi = {10.1145/138859.138867},
journal = {Commun. ACM},
month = {12},
pages = {61–70},
numpages = {10},
keywords = {tapestry, information filtering}
}

@book{cohen2013course,
  title={A course in computational algebraic number theory},
  author={Cohen, Henri},
  volume={138},
  year={2013},
  publisher={Springer Science \& Business Media}
}

@inproceedings{HosseiniO019,
  author       = {Mehran Hosseini and
                  Jo{\"{e}}l Ouaknine and
                  James Worrell},
  title        = {Termination of Linear Loops over the Integers},
  booktitle    = {46th International Colloquium on Automata, Languages, and Programming,
                  {ICALP} 2019, July 9-12, 2019, Patras, Greece},
  pages        = {118:1--118:13},
  series       = {LIPIcs},
  volume       = {132},
  year         = {2019}
}

@article{alur2018search,
  title={Search-based program synthesis},
  author={Alur, Rajeev and Singh, Rishabh and Fisman, Dana and Solar-Lezama, Armando},
  journal={Communications of the ACM},
  volume={61},
  number={12},
  pages={84--93},
  year={2018},
  publisher={ACM New York, NY, USA}
}

@article{gulwani2017program,
  title={Program synthesis},
  author={Gulwani, Sumit and Polozov, Oleksandr and Singh, Rishab},
  journal={Foundations and Trends{\textregistered} in Programming Languages},
  volume={4},
  number={1-2},
  pages={1--119},
  year={2017},
  publisher={Now Publishers, Inc.}
}

@inproceedings{HrushovskiOP018,
  author       = {Ehud Hrushovski and
                  Jo{\"{e}}l Ouaknine and
                  Amaury Pouly and
                  James Worrell},
  title        = {Polynomial Invariants for Affine Programs},
  booktitle    = {Proceedings of the 33rd Annual {ACM/IEEE} Symposium on Logic in Computer
                  Science, {LICS} 2018, Oxford, UK, July 09-12, 2018},
   publisher    = {{ACM}},
  pages        = {530--539},
  year         = {2018}
  }

@article{srivastava2013template,
  title={Template-based program verification and program synthesis},
  author={Srivastava, Saurabh and Gulwani, Sumit and Foster, Jeffrey S},
  journal={International Journal on Software Tools for Technology Transfer},
  volume={15},
  pages={497--518},
  year={2013},
  publisher={Springer}
}

@inproceedings{ouaknine2014positivity,
author = {Joël Ouaknine and James Worrell},
title = {Positivity Problems for Low-Order Linear Recurrence Sequences},
booktitle = {Proceedings of the 2014  Annual ACM-SIAM Symposium on Discrete Algorithms (SODA)},
chapter = {},
pages = {366-379},
doi = {10.1137/1.9781611973402.27},
year = {2014},
    abstract = { Abstract We consider two decision problems for linear recurrence sequences (LRS) over the integers, namely the Positivity Problem (are all terms of a given LRS positive?) and the Ultimate Positivity Problem (are all but finitely many terms of a given LRS positive?). We show decidability of both problems for LRS of order 5 or less, with complexity in the Counting Hierarchy for Positivity, and in polynomial time for Ultimate Positivity. Moreover, we show by way of hardness that extending the decidability of either problem to LRS of order 6 would entail major breakthroughs in analytic number theory, more precisely in the field of Diophantine approximation of transcendental numbers. }
}

@article{Muller-OlmS04,
  author       = {Markus M{\"{u}}ller{-}Olm and
                  Helmut Seidl},
  title        = {Computing polynomial program invariants},
  journal      = {Inf. Process. Lett.},
  volume       = {91},
  number       = {5},
  pages        = {233--244},
  year         = {2004},
  url          = {https://doi.org/10.1016/j.ipl.2004.05.004},
  doi          = {10.1016/J.IPL.2004.05.004},
  timestamp    = {Fri, 26 May 2017 22:54:45 +0200},
  biburl       = {https://dblp.org/rec/journals/ipl/Muller-OlmS04.bib},
  bibsource    = {dblp computer science bibliography, https://dblp.org}
}

@inproceedings{Babai1985AM,
	author        = "László Babai",
	booktitle     = "{Proceedings of the 17th Annual ACM Symposium on Theory of Computing}",
	doi           = "10.1145/22145.22192",
	pages         = "421--429",
	publisher     = "{ACM}",
	title         = "{Trading Group Theory for Randomness}",
	year          = 1985,
}

@book{arora2009computational,
  title={Computational complexity: a modern approach},
  author={Arora, Sanjeev and Barak, Boaz},
  year={2009},
  publisher={Cambridge University Press}
}

@article{cyphert2024solvable,
author = {Cyphert, John and Kincaid, Zachary},
title = {Solvable Polynomial Ideals: The Ideal Reflection for Program Analysis},
year = {2024},
issue_date = {January 2024},
publisher = {Association for Computing Machinery},
address = {New York, NY, USA},
volume = {8},
number = {POPL},
url = {https://doi.org/10.1145/3632867},
doi = {10.1145/3632867},
abstract = {This paper presents a program analysis method that generates program summaries involving polynomial arithmetic. Our approach builds on prior techniques that use solvable polynomial maps for summarizing loops. These techniques are able to generate all polynomial invariants for a restricted class of programs, but cannot be applied to programs outside of this class---for instance, programs with nested loops, conditional branching, unstructured control flow, etc. There currently lacks approaches to apply these prior methods to the case of general programs. This paper bridges that gap. Instead of restricting the kinds of programs we can handle, our method abstracts every loop into a model that can be solved with prior techniques, bringing to bear prior work on solvable polynomial maps to general programs. While no method can generate all polynomial invariants for arbitrary programs, our method establishes its merit through a monotonicty result. We have implemented our techniques, and tested them on a suite of benchmarks from the literature. Our experiments indicate our techniques show promise on challenging verification tasks requiring non-linear reasoning.},
journal = {Proc. ACM Program. Lang.},
month = {01},
articleno = {25},
numpages = {29},
keywords = {Algebraic program analysis, monotone program analysis, polynomial invariants}
}

@article{muellner2024strong,
author = {M\"{u}llner, Julian and Moosbrugger, Marcel and Kov\'{a}cs, Laura},
title = {Strong Invariants Are Hard: On the Hardness of Strongest Polynomial Invariants for (Probabilistic) Programs},
year = {2024},
issue_date = {January 2024},
publisher = {Association for Computing Machinery},
address = {New York, NY, USA},
volume = {8},
number = {POPL},
url = {https://doi.org/10.1145/3632872},
doi = {10.1145/3632872},
abstract = {We show that computing the strongest polynomial invariant for single-path loops with polynomial assignments is at least as hard as the Skolem problem, a famous problem whose decidability has been open for almost a century. While the strongest polynomial invariants are computable for affine loops, for polynomial loops the problem remained wide open. As an intermediate result of independent interest, we prove that reachability for discrete polynomial dynamical systems is Skolem-hard as well. Furthermore, we generalize the notion of invariant ideals and introduce moment invariant ideals for probabilistic programs. With this tool, we further show that the strongest polynomial moment invariant is (i) uncomputable, for probabilistic loops with branching statements, and (ii) Skolem-hard to compute for polynomial probabilistic loops without branching statements. Finally, we identify a class of probabilistic loops for which the strongest polynomial moment invariant is computable and provide an algorithm for it.},
journal = {Proc. ACM Program. Lang.},
month = {01},
articleno = {30},
numpages = {29},
keywords = {Point-To-Point reachability, Probabilistic programs, Skolem problem, Strongest algebraic invariant}
}

@inproceedings{bayarmagnai2024algebraic,
      title={Algebraic Tools for Computing Polynomial Loop Invariants}, 
      author={Erdenebayar Bayarmagnai and Fatemeh Mohammadi and Rémi Prébet},
      year={2024},
      booktitle    = {Proceedings of the 2024 International Symposium on Symbolic and Algebraic
                  Computation, {ISSAC} 2024, (To Appear)},
      doi={10.1145/3666000.3669710},
}

@article {benedetto2019currenttrends,
    AUTHOR = {Benedetto, Robert and Ingram, Patrick and Jones, Rafe and
              Manes, Michelle and Silverman, Joseph H. and Tucker, Thomas
              J.},
     TITLE = {Current trends and open problems in arithmetic dynamics},
   JOURNAL = {Bull. Amer. Math. Soc. (N.S.)},
  FJOURNAL = {American Mathematical Society. Bulletin. New Series},
    VOLUME = {56},
      YEAR = {2019},
    NUMBER = {4},
     PAGES = {611--685},
      ISSN = {0273-0979},
   MRCLASS = {37P05 (11G50 37P15 37P20 37P25 37P30 37P45 37P55)},
  MRNUMBER = {4007163},
       DOI = {10.1090/bull/1665},
       URL = {https://doi.org/10.1090/bull/1665},
}

@book {silverman2007arithmetic,
    AUTHOR = {Silverman, Joseph H.},
     TITLE = {The arithmetic of dynamical systems},
    SERIES = {Graduate Texts in Mathematics},
    VOLUME = {241},
 PUBLISHER = {Springer, New York},
      YEAR = {2007},
     PAGES = {x+511},
      ISBN = {978-0-387-69903-5},
   MRCLASS = {11-02 (11-01 11G05 11G07 11G50 37-02 37F10)},
  MRNUMBER = {2316407},
MRREVIEWER = {Thomas\ Ward},
       DOI = {10.1007/978-0-387-69904-2},
       URL = {https://doi.org/10.1007/978-0-387-69904-2},
}

@article{amrollahi2024loop,
  title = {{(Un)Solvable loop analysis}},
  ISSN = {1572-8102},
  url = {http://dx.doi.org/10.1007/s10703-024-00455-0},
  DOI = {10.1007/s10703-024-00455-0},
  journal = {Formal Methods in System Design},
  publisher = {Springer Science and Business Media LLC},
  author = {Amrollahi,  Daneshvar and Bartocci,  Ezio and Kenison,  George and Kovács,  Laura and Moosbrugger,  Marcel and Stankovič,  Miroslav},
  year = {2024},
  month = {06},
}

@InProceedings{amrollahi2022loop,
author="Amrollahi, Daneshvar
and Bartocci, Ezio
and Kenison, George
and Kov{\'a}cs, Laura
and Moosbrugger, Marcel
and Stankovi{\v{c}}, Miroslav",
editor="Singh, Gagandeep
and Urban, Caterina",
title="{Solving Invariant Generation for Unsolvable Loops}",
booktitle="Static Analysis",
year="2022",
publisher="Springer Nature Switzerland",
address="Cham",
pages="19--43",
abstract="Automatically generating invariants, key to computer-aided analysis of probabilistic and deterministic programs and compiler optimisation, is a challenging open problem. Whilst the problem is in general undecidable, the goal is settled for restricted classes of loops. For the class of solvable loops, introduced by Kapur and Rodr{\'i}guez-Carbonell in 2004, one can automatically compute invariants from closed-form solutions of recurrence equations that model the loop behaviour. In this paper we establish a technique for invariant synthesis for loops that are not solvable, termed unsolvable loops. Our approach automatically partitions the program variables and identifies the so-called defective variables that characterise unsolvability. We further present a novel technique that automatically synthesises polynomials, in the defective variables, that admit closed-form solutions and thus lead to polynomial loop invariants. Our implementation and experiments demonstrate both the feasibility and applicability of our approach to both deterministic and probabilistic programs.",
isbn="978-3-031-22308-2",
doi="10.1007/978-3-031-22308-2_3"
}

@inproceedings{humenberger2017hypergeometric,
author = {Humenberger, Andreas and Jaroschek, Maximilian and Kov\'{a}cs, Laura},
title = {Automated Generation of Non-Linear Loop Invariants Utilizing Hypergeometric Sequences},
year = {2017},
isbn = {9781450350648},
publisher = {Association for Computing Machinery},
address = {New York, NY, USA},
url = {https://doi.org/10.1145/3087604.3087623},
doi = {10.1145/3087604.3087623},
abstract = {Analyzing and reasoning about safety properties of software systems becomes an especially challenging task for programs with complex flow and, in particular, with loops or recursion. For such programs one needs additional information, for example in the form of loop invariants, expressing properties to hold at intermediate program points. In this paper we study program loops with non-trivial arithmetic, implementing addition and multiplication among numeric program variables. We present a new approach for automatically generating all polynomial invariants of a class of such programs. Our approach turns programs into linear ordinary recurrence equations and computes closed form solutions of these equations. These closed forms express the most precise inductive property, and hence invariant. We apply Gr\"{o}bner basis computation to obtain a basis of the polynomial invariant ideal, yielding thus a finite representation of all polynomial invariants. Our work significantly extends the class of so-called P-solvable loops by handling multiplication with the loop counter variable. We implemented our method in the Mathematica package Aligator and showcase the practical use of our approach.},
booktitle = {Proceedings of the 2017 ACM on International Symposium on Symbolic and Algebraic Computation},
pages = {221–228},
numpages = {8},
keywords = {hypergeometric sequences, loop invariants, program analysis, recurrence relations},
location = {Kaiserslautern, Germany},
series = {ISSAC '17}
}

@InProceedings{humenberger2018psolvable,
author="Humenberger, Andreas
and Jaroschek, Maximilian
and Kov{\'a}cs, Laura",
editor="Dillig, Isil
and Palsberg, Jens",
title="Invariant Generation for Multi-Path Loops with Polynomial Assignments",
booktitle="Verification, Model Checking, and Abstract Interpretation",
year="2018",
publisher="Springer International Publishing",
address="Cham",
pages="226--246",
abstract="Program analysis requires the generation of program properties expressing conditions to hold at intermediate program locations. When it comes to programs with loops, these properties are typically expressed as loop invariants. In this paper we study a class of multi-path program loops with numeric variables, in particular nested loops with conditionals, where assignments to program variables are polynomial expressions over program variables. We call this class of loops extended P-solvable and introduce an algorithm for generating all polynomial invariants of such loops. By an iterative procedure employing Gr{\"o}bner basis computation, our approach computes the polynomial ideal of the polynomial invariants of each program path and combines these ideals sequentially until a fixed point is reached. This fixed point represents the polynomial ideal of all polynomial invariants of the given extended P-solvable loop. We prove termination of our method and show that the maximal number of iterations for reaching the fixed point depends linearly on the number of program variables and the number of inner loops. In particular, for a loop with m program variables and r conditional branches we prove an upper bound of {\$}{\$}m{\backslash}cdot r{\$}{\$}iterations. We implemented our approach in the Aligator software package. Furthermore, we evaluated it on 18 programs with polynomial arithmetic and compared it to existing methods in invariant generation. The results show the efficiency of our approach.",
isbn="978-3-319-73721-8"
}

@InProceedings{kovacs2008psolvable,
author="Kov{\'a}cs, Laura",
editor="Ramakrishnan, C. R.
and Rehof, Jakob",
title="Reasoning Algebraically About P-Solvable Loops",
booktitle="Tools and Algorithms for the Construction and Analysis of Systems",
year="2008",
publisher="Springer Berlin Heidelberg",
address="Berlin, Heidelberg",
pages="249--264",
abstract="We present a method for generating polynomial invariants for a subfamily of imperative loops operating on numbers, called the P-solvable loops. The method uses algorithmic combinatorics and algebraic techniques. The approach is shown to be complete for some special cases. By completeness we mean that it generates a set of polynomial invariants from which, under additional assumptions, any polynomial invariant can be derived. These techniques are implemented in a new software package Aligator written in Mathematica and successfully tried on many programs implementing interesting algorithms working on numbers.",
isbn="978-3-540-78800-3"
}

@article {kauers2007equivalence,
    AUTHOR = {Kauers, Manuel},
     TITLE = {An algorithm for deciding zero equivalence of nested
              polynomially recurrent sequences},
   JOURNAL = {ACM Trans. Algorithms},
  FJOURNAL = {ACM Transactions on Algorithms},
    VOLUME = {3},
      YEAR = {2007},
    NUMBER = {2},
     PAGES = {Art. 18, 14},
      ISSN = {1549-6325,1549-6333},
   MRCLASS = {68W30 (05A19 33F10 39A10 68R05)},
  MRNUMBER = {2335301},
MRREVIEWER = {Marko\ Petkov\v sek},
       DOI = {10.1145/1240233.1240241},
       URL = {https://doi.org/10.1145/1240233.1240241},
}

@article {kauers2008solving,
    AUTHOR = {Kauers, Manuel},
     TITLE = {Solving difference equations whose coefficients are not
              transcendental},
   JOURNAL = {Theoret. Comput. Sci.},
  FJOURNAL = {Theoretical Computer Science},
    VOLUME = {401},
      YEAR = {2008},
    NUMBER = {1-3},
     PAGES = {217--227},
      ISSN = {0304-3975,1879-2294},
   MRCLASS = {68W30 (33F10 39A11 68R05)},
  MRNUMBER = {2431839},
MRREVIEWER = {Stefan\ Gerhold},
       DOI = {10.1016/j.tcs.2008.05.001},
       URL = {https://doi.org/10.1016/j.tcs.2008.05.001},
}

@inproceedings{blaeser2021tensors,
  author       = {Markus Bl{\"{a}}ser and
                  Christian Ikenmeyer and
                  Vladimir Lysikov and
                  Anurag Pandey and
                  Frank{-}Olaf Schreyer},
  editor       = {D{\'{a}}niel Marx},
  title        = {On the Orbit Closure Containment Problem and Slice Rank of Tensors},
  booktitle    = {Proceedings of the 2021 {ACM-SIAM} Symposium on Discrete Algorithms,
                  {SODA} 2021, Virtual Conference, January 10 - 13, 2021},
  pages        = {2565--2584},
  publisher    = {{SIAM}},
  year         = {2021},
  doi          = {10.1137/1.9781611976465.152},
}

@inproceedings{benedikt2017polynomial,
  title={Polynomial automata: Zeroness and applications},
  author={Benedikt, Michael and Duff, Timothy and Sharad, Aditya and Worrell, James},
  booktitle={2017 32nd Annual ACM/IEEE Symposium on Logic in Computer Science (LICS)},
  pages={1--12},
  year={2017},
  organization={IEEE}
}

@book{deGraafBook,
  title={Computation with Linear Algebraic Groups},
  author={de Graaf, W. A.},
  OPTisbn={9781498722902},
  OPTlccn={2016051051},
  series={Chapman \& Hall/CRC Monographs},
  year={2017},
  publisher={CRC Press},
}

@misc{manssour2024determination,
      title={Determination Problems for Orbit Closures and Matrix Groups}, 
      author={Rida {Ait El Manssour} and George Kenison and Mahsa Shirmohammadi and James Worrell},
      year={2024},
      eprint={2407.04626},
      archivePrefix={arXiv},
      primaryClass={cs.CC},
      url={https://arxiv.org/abs/2407.04626}, 
}

@article{burgisser2024completeness,
  title={Completeness classes in algebraic complexity theory},
  author={B{\"u}rgisser, Peter},
  journal={arXiv preprint arXiv:2406.06217},
  year={2024}
}

@article{burgisser2011overview,
  title={An Overview of Mathematical Issues Arising in the Geometric Complexity Theory Approach to VP$\ne$VNP},
  author={B{\"u}rgisser, Peter and Landsberg, Joseph M and Manivel, Laurent and Weyman, Jerzy},
  journal={SIAM Journal on Computing},
  volume={40},
  number={4},
  pages={1179--1209},
  year={2011},
  publisher={SIAM}
}

@inproceedings{forbes2013explicit,
  title={Explicit noether normalization for simultaneous conjugation via polynomial identity testing},
  author={Forbes, Michael A and Shpilka, Amir},
  booktitle={International Workshop on Approximation Algorithms for Combinatorial Optimization},
  pages={527--542},
  year={2013},
  organization={Springer}
}

@inproceedings{BalajiPS021,
  author       = {Nikhil Balaji and
                  Sylvain Perifel and
                  Mahsa Shirmohammadi and
                  James Worrell},
  editor       = {Fr{\'{e}}d{\'{e}}ric Chyzak and
                  George Labahn},
  title        = {Cyclotomic Identity Testing and Applications},
  booktitle    = {{ISSAC} '21: International Symposium on Symbolic and Algebraic Computation,
                  Virtual Event, Russia, July 18-23, 2021},
  pages        = {35--42},
  publisher    = {{ACM}},
  year         = {2021},
  url          = {https://doi.org/10.1145/3452143.3465530},
  doi          = {10.1145/3452143.3465530},
  timestamp    = {Sat, 09 Apr 2022 12:41:08 +0200},
  biburl       = {https://dblp.org/rec/conf/issac/BalajiPS021.bib},
  bibsource    = {dblp computer science bibliography, https://dblp.org}
}

@article{allender2009complexity,
  title={On the complexity of numerical analysis},
  author={Allender, Eric and B{\"u}rgisser, Peter and Kjeldgaard-Pedersen, Johan and Miltersen, Peter Bro},
  journal={SIAM Journal on Computing},
  volume={38},
  number={5},
  pages={1987--2006},
  year={2009},
  publisher={SIAM}
}

@book{bombieri2006heights,
  title={Heights in Diophantine geometry},
  author={Bombieri, Enrico and Gubler, Walter},
  number={4},
  year={2006},
  publisher={Cambridge university press}
}

@InProceedings{majumdar2020hybrid,
  author =	{Majumdar, Rupak and Ouaknine, Jo\"{e}l and Pouly, Amaury and Worrell, James},
  title =	{{Algebraic Invariants for Linear Hybrid Automata}},
  booktitle =	{31st International Conference on Concurrency Theory (CONCUR 2020)},
  pages =	{32:1--32:17},
  series =	{Leibniz International Proceedings in Informatics (LIPIcs)},
  ISBN =	{978-3-95977-160-3},
  ISSN =	{1868-8969},
  year =	{2020},
  volume =	{171},
  editor =	{Konnov, Igor and Kov\'{a}cs, Laura},
  publisher =	{Schloss Dagstuhl -- Leibniz-Zentrum f{\"u}r Informatik},
  address =	{Dagstuhl, Germany},
  URL =		{https://drops.dagstuhl.de/entities/document/10.4230/LIPIcs.CONCUR.2020.32},
  URN =		{urn:nbn:de:0030-drops-128443},
  doi =		{10.4230/LIPIcs.CONCUR.2020.32},
  annote =	{Keywords: Hybrid automata, algebraic invariants}
}

@article{davenport1988quantifier,
title = {Real quantifier elimination is doubly exponential},
journal = {Journal of Symbolic Computation},
volume = {5},
number = {1},
pages = {29-35},
year = {1988},
issn = {0747-7171},
doi = {https://doi.org/10.1016/S0747-7171(88)80004-X},
url = {https://www.sciencedirect.com/science/article/pii/S074771718880004X},
author = {James H. Davenport and Joos Heintz},
abstract = {We show that quantifier elimination over real closed fields can require doubly exponential space (and hence time). This is done by explicitly constructing a sequence of expressions whose length is linear in the number of quantifiers, but whose quantifier-free expression has length doubly exponential in the number of quantifiers. The results can be applied to cylindrical algebraic decomposition, showing that this can be doubly exponential. The double exponents of our lower bounds are about one fifth of the double exponents of the best-known upper bounds.}
}

@book{mishra2012algorithmic,
  title={Algorithmic Algebra},
  author={Mishra, B.},
  isbn={9781461243441},
  lccn={93014094},
  series={Monographs in Computer Science},
  url={https://books.google.co.uk/books?id=J4DkBwAAQBAJ},
  year={2012},
  publisher={Springer New York}
}

\appendix
\onecolumn
\section{Extended Background}
\label{sec-app-back}
\subsection{Complexity Theory}
An \emph{Arthur--Merlin protocol} is a two-message interactive proof system
between two participants, respectively called Arthur and Merlin.
Arthur can be considered as a probabilistic polynomial time verifier,
while Merlin is an oracle with infinite computational power.  Given an
instance of the problem at hand, Arthur generates some random bits and
sends the bits (together with the instance) to Merlin.  Merlin then
replies to Arthur with a message of length polynomial in the instance.
Arthur decides to accept or reject based on the random bits and
Merlin's message.

The complexity class \AM~of problems that admit Arthur--Merlin protocols 
was introduced by Babai~\cite{Babai1985AM,babi1988arthurmerlin}, motivated by decision problems about matrix groups.
A problem is solved by such a protocol if, for a ``yes\rq\rq\ instance, the
probability over Arthur's bits that there exists a response by Merlin
that causes Arthur to accept is $\geq 2/3$, while for a ``no\rq\rq\ instance,
the probability is $\leq1/3$.  

The complexity class \BPP~is the class of decision problems solvable by an \NP~machine such that \begin{itemize}
	\item a ``yes\rq\rq\ instance is accepted with probability at least~2/3,
	\item a ``no\rq\rq\ instance is accepted with probability at most~1/3,
\end{itemize}
Roughly speaking, an \AM~protocol can be viewed as a \BPP~machine that makes one call to an \NP~oracle accepting its answer.

Important to us is the following chain of inclusions: \(\BPP \subseteq \AM \subseteq \PH \subseteq \PSPACE\).
(Here, \PH~denotes the union of all classes in the \emph{polynomial hierarchy}.)


\subsection{Ideals and Varieties}
Let $R$ be a ring.
A \emph{polynomial ideal} is a subset $I \subseteq R[\bm{x}]$ that satisfies the following properties:
$0\in I$; \(I\) is closed under addition; and for each $p \in R[\bm{x}]$ and $q \in I$, necessarily $pq \in I$.
For a set~$S \subseteq R[\bm{x}]$  of polynomials, the \emph{ideal generated by $S$} is given by
\(
	I = \ideal{S} := \{ s_1 q_1 + \cdots + s_\ell q_\ell : s_j \in S, q_j \in R[\bm{x}], \ell \in \N \}\).
A polynomial ideal \(I\) is \emph{proper} if \(I\) is not equal to \(R[\bm{x}]\)
and \(I\) is \emph{radical} if \(p^n \in I\) implies that \(p\in I\).
The \emph{radical} $\sqrt{I}$ of an ideal~$I$ is defined by 
\(\sqrt{I} = \{p \in R[\bm{x}] : p^n \in I \text{ for some } n\in \NN\}\).
The radical~$\sqrt{I}$ is an ideal of~$R[\bm{x}]$ itself.

A \emph{Noetherian} ring is a ring~$R$ that satisfies an \emph{ascending chain condition} for ideals.
That is, every chain of inclusions $I_1 \subseteq I_2 \subseteq \cdots$ 
where each $I_j$ is an ideal of~$R[\bm{x}]$
\emph{stabilises}, i.e., there exists an ideal $I_n$ such that $I_n = I_{n+1} = \cdots$.
We recall that~$R \in \{\ZZ, \QQ, \alg\}$ are Noetherian.
Henceforth, the rings we discuss will always be assumed to be Noetherian.

Hilbert's Basis Theorem states that every ideal in \(R[\bm{x}]\) has a finite basis. Seminal work by Buchberger introduced Gr\"{o}bner bases for polynomial ideals, which permit the algorithmic computation of key properties of polynomial ideals~\cite{buchberger2006jsc,cox2015ideals}, including ideal membership, ideal union/intersection, elimination ideals, and many more. 

Hilbert's Nullstellensatz is a celebrated theorem that identifies that radical ideals are exactly those ideals that correspond to varieties.
Formally, let $I$ be an ideal in~$\alg[\bm{x}]$.
A \emph{variety} is the set of common zeros of a polynomial ideal~$I$ so that \(V(I):=\{\bm{v} \in \alg^d : f(\bm{v})=0 \text{ for all } f\in I\}\).
Hilbert's Nullstellensatz states that the ideal of all polynomials in~$\alg[\bm{x}]$ that vanish on $V(I)$ is $\sqrt{I}$.

A variety $V\subseteq \alg^d$ is \emph{irreducible} if it cannot be written 
as $V = V_1 \cup V_2$ such that
$V_1$ and $V_2$ are both varieties properly contained in~$V$. 
The dimension of a variety~$V$ is defined to be the maximum number~$n\in \NN$ such that 
there is a strictly increasing chain $V_0 \subset V_1 \subset \cdots \subset V_n$ of non-empty irreducible subvarieties of~$V$. 
A variety~$V \subseteq \alg^d$ has dimension at most~$d$.

The \emph{Zariski topology} on $\alg^d$ is the topology whose closed sets are varieties.
The \emph{Zariski closure}~$\overline{X}$ of a set~$X \subseteq \alg^d$ 
is the smallest variety that contains~$X$.

\subsection{Multiplicative Relations Between Algebraic Numbers}
\label{sec:alg}
Let us first recall some standard terminology.
Let \(M\) be a \(d\times d\) square matrix with entries in \(\Q\) (or \(\alg\)).
Then the distinct characteristic roots \(\lambda_1,\ldots, \lambda_s\) of \(M\) are algebraic numbers.
To every vector $\bm{v} \in \ZZ^s$ we associate a unique \emph{canonical binomial}, $P(y_1, \dots, y_s) := \bm{y}^{{\bm{v}_+}} - \bm{y}^{{\bm{v}_-}}$, where 
$\bm{v}_+ =(\max\{v_1, 0\}, \dots, \max\{v_s, 0\}) \in \NN^s$
and $\bm{v}_- = \bm{v}_+ - \bm{v} \in \NN^s$. 
Here, $\bm{y} = (y_1, \dots, y_s)$ and $\bm{y}^{\bm{\alpha}}$ denotes $y_1^{\alpha_1}\cdots y_s^{\alpha_s}$ for a vector~$\bm{\alpha} = (\alpha_1, \dots, \alpha_s)\in\NN^s$.

In the course of our work, we need to compute the relations between the 
geometric sequences
\(\langle \lambda_1^n\rangle_n, \ldots, \langle\lambda_s^n \rangle_n\).
These relations are, in turn, rooted in the set of multiplicative relations of nonzero algebraic numbers~$\lambda_1, \ldots, \lambda_s$.
Recall that $(n_1, \ldots, n_s) \in \ZZ^s$ is a \emph{multiplicative relation} of $(\lambda_1, \dots, \lambda_s) \in \alg^s$ if $\lambda_1^{n_1}\cdots \lambda_s^{n_s} = 1$.
For the aforementioned set of multiplicative relations, the \emph{exponent lattice} of $\lambda_1, \dots, \lambda_s \in \alg$, is thus defined as follows:
	\begin{equation*} \textstyle
		\Lexp{\lambda_1, \dots, \lambda_s} := \left\{(n_1, \dots, n_s) \in \ZZ^s : \prod_{i=1}^s \lambda_i^{n_i} = 1\right\}.
		\end{equation*}
We note the set of all multiplicative relations forms an additive free abelian group (\emph{lattice}).

The \emph{height} of $\lambda \in \alg$ is the maximum absolute value 
of the coefficients of its minimal polynomial.

\begin{theorem}[Masser]
	Let $\lambda_1, \dots, \lambda_s \in \alg$.
	The exponent lattice~$\Lexp{\lambda_1, \dots, \lambda_s}$ has 
	a basis $\bm{v_1}, \dots, \bm{v_r}$ for which
	\[\max_{i,j} |v_{i,j}| \leq \left( D \log{H} \right)^{O(s^2)}, \]
	where $H$ and $D$ are upper bounds for heights and degrees of $\lambda_i$'s, respectively.
\end{theorem}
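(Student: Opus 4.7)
The plan is to convert the multiplicative relation problem into a lattice problem in a Euclidean space and then invoke Minkowski's theorem on successive minima together with a height lower bound for algebraic numbers. The proof is essentially that of Masser~\cite{masser1988relations}; here I sketch the lattice-geometry version.

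First I would reduce to a common number field. Let $K$ be the Galois closure of $\QQ(\lambda_1,\ldots,\lambda_s)$, so that $[K:\QQ] \le D^s \cdot s!$; crucially, only the \emph{logarithm} of this degree will enter the final bound. Let $\mathcal{S}$ be the finite set of places of $K$ containing all archimedean places together with all finite places at which some $\lambda_i$ fails to be a unit. Since such a finite place must divide the numerator or denominator of $\mathrm{Nm}_{K/\QQ}(\lambda_i)$ for some $i$, one has $|\mathcal{S}| \le C\, [K:\QQ]\,(1+\log H)$ for an absolute constant $C$.

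Next I would use the logarithmic embedding $\ell : K^\times \to \RR^{\mathcal{S}}$, $\ell(\alpha)_v = \log|\alpha|_v$ (with the conventional normalization so that the product formula reads $\sum_v \ell(\alpha)_v = 0$). By Dirichlet's $\mathcal{S}$-unit theorem, $\ell(\mathcal{O}_{K,\mathcal{S}}^\times)$ is a full lattice in the hyperplane $\{\sum_v x_v = 0\}$, and each vector $\ell(\lambda_i)$ has Euclidean norm at most $C' \cdot [K:\QQ]\,\log H$. Composing $\ell$ with coordinates yields a $\ZZ$-linear map $\Phi : \ZZ^s \to \RR^{\mathcal{S}}$ with $\Phi(\bm{n}) = \sum_i n_i\, \ell(\lambda_i)$. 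Kronecker's theorem shows that $\bm{n} \in \Lexp{\lambda_1,\ldots,\lambda_s}$ if and only if $\Phi(\bm{n}) = 0$ and $\prod \lambda_i^{n_i}$ is a root of unity; the quotient $\Lexp{\lambda_1,\ldots,\lambda_s}/\ker\Phi$ therefore embeds into $\mu_K$, a cyclic group of order at most $2[K:\QQ]^2$, so that a basis of $\Lexp{\cdot}$ can be extracted from a basis of $\ker\Phi$ at the cost of a polynomial factor in $[K:\QQ]$.

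Finally, I would apply Minkowski's second theorem to the sublattice $\Lambda := \ker\Phi \subset \ZZ^s$. The covolume of $\Phi(\ZZ^s)$, computed as a Gram determinant in the vectors $\ell(\lambda_i)$, is at most $(D\log H)^{O(s)}$. A lower bound on the first successive minimum of $\Phi(\ZZ^s)$ follows from any effective lower bound on the absolute Weil height of a non-torsion element of $K^\times$ (the trivial bound $h(\alpha) \ge \log 2 / [K:\QQ]$ already suffices). Chaining these through the successive minima inequality $\mu_1 \cdots \mu_{s-r} \asymp \mathrm{covol}(\Lambda^\perp) / \mathrm{covol}(\ZZ^s)$ and converting successive minima into an actual basis by the standard Mahler--van der Corput procedure yields a basis of $\Lambda$ whose entries are bounded by $(D\log H)^{O(s^2)}$; pulling back through $\mu_K$ preserves this bound.

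The main obstacle is the bookkeeping that produces the quadratic exponent $s^2$ rather than $s^3$ or $s\cdot[K:\QQ]$. This requires (i) using Minkowski's \emph{second} theorem (so that the product, not a single power, of the $s$ successive minima absorbs the covolume), and (ii) verifying that the potentially exponential degree of $K$ enters only through its logarithm at each step. A secondary technical point is the conversion from successive minima to a basis, where a naive argument introduces an unwanted factorial factor; one must invoke a Mahler-type selection to keep each coordinate of the final basis within the stated bound.
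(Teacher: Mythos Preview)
The paper does not prove this statement at all: it is quoted as a known result of Masser (with a citation to \cite{masser1988relations}) and used as a black box to bound the search space for a lattice basis. There is therefore nothing in the paper to compare your argument against. Your sketch is a reasonable outline of the standard approach via the logarithmic embedding and Minkowski's second theorem, and you correctly identify the delicate points (keeping $[K:\QQ]$ out of the exponent, converting successive minima to a basis), but for the purposes of this paper a citation suffices.
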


In order to determine the dependencies between the geometric sequences
\(\langle \lambda_1^n\rangle_n, \ldots, \langle\lambda_s^n \rangle_n\),
we first compute a basis \(\bm{v}_1,\ldots, \bm{v}_r\) such that \(\Lexp{\lambda_1,\ldots, \lambda_s} = \Z\bm{v}_1 + \cdots + \Z\bm{v}_r\),
employing Masser's bound.
As a second step, we introduce variables $y_1, \dots, y_s$ to represent 
the geometric sequences.
We further read a set of canonical binomials $P_1, \dots, P_r \in \ZZ[y_1,\dots,y_s]$ from vectors \(\bm{v}_1,\ldots, \bm{v}_r\).

Under the assumption $\lambda_i \neq 0$ for all $1 \leq i \leq s$, we have that
\[\overline{V(P_1, \dots, P_r) \setminus V\left({\textstyle \prod_{i=1}^s y_i}\right)} = \overline{\{(\lambda_1^n, \dots, \lambda_s^n) : n \in \NN\}}.\]
The proof can be found in~\cite[Lemma 6]{DerksenJK05}.
The following is corollary is then immediate.
\begin{corollary}\label{eig-depend}
	Let~$\lambda_1, \dots, \lambda_s$ be nonzero eigenvalues of a rational matrix.
	The Zariski closure of the set 
	\(\{(\lambda_1^n, \dots, \lambda_s^n) : n \in \NN\}\)
	is computable in \PSPACE.
\end{corollary}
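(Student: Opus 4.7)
The plan is to assemble the ingredients that the paper has already set up in Appendix~\ref{sec:alg}. Concretely, by~\cite[Lemma~6]{DerksenJK05} (the result stated immediately before the corollary), one has the identity
\[
\overline{\{(\lambda_1^n,\dots,\lambda_s^n) : n\in\NN\}} = \overline{V(P_1,\dots,P_r) \setminus V\bigl(\textstyle\prod_{i=1}^s y_i\bigr)},
\]
where $P_1,\dots,P_r$ are the canonical binomials read off from any $\ZZ$-basis $\bm{v}_1,\dots,\bm{v}_r$ of the exponent lattice $\Lexp{\lambda_1,\dots,\lambda_s}$. So it suffices to compute such a basis in $\PSPACE$ and then emit the $P_i$'s together with $\prod_i y_i$; the final saturation is implicit in the description of the closure.

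For the basis computation I would invoke Masser's theorem, which guarantees the existence of a generating set whose entries have magnitude at most $(D\log H)^{O(s^2)}$, with $D$ and $H$ upper bounds for the degrees and heights of the $\lambda_i$. Because the $\lambda_i$ are eigenvalues of a rational matrix, $D\le d$ and $\log H$ is polynomial in the bit-length of the input, so each coordinate of a basis vector has polynomially many bits (even though the box of candidates is singly exponential in size).

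To test whether a particular integer vector $\bm{v}\in\ZZ^s$ with polynomially bounded entries lies in $\Lexp{\lambda_1,\dots,\lambda_s}$, I would, following~\cite{HosseiniO019}, express the condition $\prod_{i=1}^s \lambda_i^{v_i}=1$ as a sentence in the existential theory of the reals, using the real and imaginary parts of each $\lambda_i$ (constrained by its minimal polynomial) as unknowns. Since repeated squaring lets each $\lambda_i^{v_i}$ be encoded by a polynomial-size straight-line program, this places lattice membership in $\exists\RR\subseteq\PSPACE$. A $\PSPACE$ transducer can then sweep the Masser box in some canonical order, maintaining a running basis of the discovered sublattice in, say, Hermite normal form using polynomial workspace, and finally output the corresponding canonical binomials.

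The main obstacle, and the reason one does not immediately get a polynomial-time algorithm, is precisely the exponential size of the Masser box: without additional structural information on $\lambda_1,\dots,\lambda_s$ there is no known way to replace the brute-force sweep by a polynomial-size search. In the fixed-dimension case $s\le d$ is a constant and Masser's bound becomes polynomial, so the same strategy collapses to polynomial time, consistent with the second half of Theorem~\ref{theo-inver-pspace}; sharpening the general case below $\PSPACE$ appears to require the kind of identity-testing results cited in the paper for the special cases of rational multiples of roots of unity or unnested radicals.
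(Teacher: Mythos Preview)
Your proposal is correct and follows essentially the same approach as the paper: the corollary is stated there as ``immediate'' from the preceding discussion, and in the proof of Theorem~\ref{theo-inver-pspace} the paper spells out exactly the argument you give---reduce via \cite[Lemma~6]{DerksenJK05} to computing a lattice basis, bound the basis entries by Masser's theorem, test membership of each candidate vector in $\exists\RR\subseteq\PSPACE$ following~\cite{HosseiniO019}, and brute-force the resulting exponential-size box in $\PSPACE$. Your additional remarks on the fixed-dimension collapse to polynomial time and on the obstruction to improving the general bound also match the paper's own commentary.
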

\section{Omitted Proofs}
\label{sec:omitted}
This section details the proofs omitted from the main text. 
The first is a hardness result omitted from \cref{sec-overview}.
\claimundec*
\begin{proof}

	Let $P_1, \dots, P_k \in \ZZ[x_1, \dots, x_d]$  define an instance of HTP over~$R\in\{\Q,\Z\}$. 
	We construct an instance of the weak synthesis problem in $d+2$ variables as follows. Define the input ideal  $\mathcal{S}=\langle P_1, \dots, P_k, x_{d+1} - x_{d+2} \rangle$ in $\ZZ[x_1, \dots, x_d, x_{d+1},x_{d+2}]$.

	Clearly, negative instances (no  solutions in $R^{d}$ for the system of equalities $P_i=0$) remain negative, because polynomials $P_1, \dots, P_k$ still have to vanish on every valuation of the loop.
	Now assume that the original instance is positive, and so there exists a solution to the system of equations.
	Let $(\xi_1, \dots, \xi_d) \in R^d$ be a solution.
	Then there exists a non-trivial linear loop for the produced instance of the Weak Synthesis Problem for loops over~$R$, namely one with an initial vector $\bm{\alpha} = (\xi_1, \dots, \xi_d, 1, 1)$,
	and an update matrix~$\diag{1, \dots, 1, 2, 2}$. 
	The constructed loop is non-trivial,  concluding the proof of the reduction correctness.
\end{proof}

Moving on, as an extra step in \cref{sec:computestrong}, we
generate a set of polynomials with rational coefficients that define the strongest algebraic invariant of a loop \(\langle M, \bm{\alpha}\rangle\). 
This extra step introduces additional variables and we require a technical argument to settle equality between two varieties defined in terms of the polynomial ideals \(\mathcal{J}\subseteq \mathbb{Q}(\lambda_1,\ldots, \lambda_s)[\bm{x}]\) and \(\mathcal{Y}\subseteq\mathbb{Q}[\bm{x}, y_1,\ldots, y_s] \).
The definitions for \(\mathcal{J}\) and \(\mathcal{Y}\) are given in \cref{sec:computestrong}.
\begin{lemma} Let $\pi$ be the projection map onto the $x_i$ coordinates. We have $V(\mathcal{J})=\pi(V(\mathcal{Y}))$. \label{lem:comppolys}
\end{lemma}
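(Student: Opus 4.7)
My plan is to establish the two inclusions $V(\mathcal{J}) \subseteq \pi(V(\mathcal{Y}))$ and $\pi(V(\mathcal{Y})) \subseteq V(\mathcal{J})$ separately, using that the symmetric-function equations \eqref{eq-eign}, combined with the rigidity imposed by the matrix equalities $\tau(PJP^{-1})=M$ and $\tau(UP^{-1})\bm{\alpha}=\bm{\beta}$, force the $y$-coordinates of any point of $V(\mathcal{Y})$ to be a Galois-conjugate image of $(\lambda_1,\ldots,\lambda_s)$.

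For the inclusion $V(\mathcal{J}) \subseteq \pi(V(\mathcal{Y}))$, I would take $\bm{v} \in V(\mathcal{J})$ and extend it to $\bm{w} \coloneqq (\bm{v},\lambda_1,\ldots,\lambda_s)$. The task is then to verify that $\bm{w}$ annihilates every generator of $\mathcal{Y}$ by specialising $y_i=\lambda_i$: each $\tau$-transformed generator specialises to the corresponding generator of $\mathcal{J}$, which vanishes on $\bm{v}$ by hypothesis; the equations in \eqref{eq-eign} are the Vieta identities already satisfied by the roots of the minimal polynomials; the polynomials in $\tau(S_1)$ specialise to the multiplicative relations among the $\lambda_i$ that define $S_1$; and the matrix equations $\tau(PJP^{-1})-M$ and $\tau(UP^{-1})\bm{\alpha}-\bm{\beta}$ specialise to the identities that are used in the construction of the decomposition itself.

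For the reverse inclusion $\pi(V(\mathcal{Y})) \subseteq V(\mathcal{J})$, let $(\bm{v},\mu_1,\ldots,\mu_s) \in V(\mathcal{Y})$. Grouping the relations \eqref{eq-eign} one Galois orbit at a time, the elementary symmetric functions of $\{\mu_j : \lambda_j \text{ runs over the conjugates of a fixed } \lambda_i\}$ coincide with the coefficients of $m_{\lambda_i}$, so each $\mu_i$ is a Galois conjugate of $\lambda_i$. The matrix equations $\tau(PJP^{-1})=M$ and $\tau(UP^{-1})\bm{\alpha}=\bm{\beta}$ cross-link the different $y_i$-blocks and can be shown to force the assignment $\lambda_i\mapsto\mu_i$ to come from a single automorphism $\sigma$ of the splitting field $K=\QQ(\lambda_1,\ldots,\lambda_s)$. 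Applying $\sigma^{-1}$ to the equation $p(\bm{v},\mu_1,\ldots,\mu_s)=0$ for each generator $p$ of $\mathcal{Y}$, noting that these polynomials have rational coefficients and are therefore fixed by $\sigma$, yields $p(\sigma^{-1}\bm{v},\lambda_1,\ldots,\lambda_s)=0$. By the first inclusion this places $\sigma^{-1}\bm{v}$ in $V(\mathcal{J})$. Since $V(\mathcal{J})=\zski$ is the Zariski closure of a $\QQ$-rational orbit and is therefore Galois-invariant, $\bm{v}=\sigma(\sigma^{-1}\bm{v}) \in V(\mathcal{J})$ as required.

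The main obstacle I expect is making rigorous the claim that the $\mu_i$ assemble into a \emph{coherent} Galois conjugate, i.e.\ that there is a \emph{single} automorphism $\sigma\in\operatorname{Gal}(K/\QQ)$ simultaneously sending every $\lambda_i$ to $\mu_i$. The per-orbit equations \eqref{eq-eign} alone only pin each $\mu_i$ down up to its own orbit; the cross-block matrix equalities $\tau(PJP^{-1})=M$ and $\tau(UP^{-1})\bm{\alpha}=\bm{\beta}$ are what tie these choices together, and the argument must also handle the degenerate situation where distinct eigenvalues share Galois conjugates, where some Jordan block has size larger than one, or where the change-of-basis $P$ is not uniquely determined.
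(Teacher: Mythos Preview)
Your first inclusion matches the paper's. For the reverse inclusion you take a genuinely different route, and the obstacle you flag at the end is a real gap, not a detail. The paper never tries to assemble the $\mu_i$ into a single field automorphism. Instead it argues that the constraints packed into~$\mathcal{Y}$ force $\bm{\mu}$ to be a tuple of eigenvalues of~$M$ (with the right multiplicities) satisfying the same multiplicative-relation lattice, and that $(\tau(P),\tau(J))$ evaluated at~$\bm{\mu}$ is therefore another legitimate Jordan decomposition of~$M$ with the same fingerprint~$\bm{\beta}$. Since the construction in \cref{sec:computestrong} depends only on these data---a Jordan decomposition, the fingerprint, and the lattice underlying~$S_1$---running it with $\bm{\mu}$ in place of $\bm{\lambda}$ produces the \emph{same} orbit closure, i.e.\ $V(\tau'(\mathcal{J}))=V(\mathcal{J})$ for the substitution $\tau':\lambda_i\mapsto\mu_i$. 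Hence $\bm{v}\in V(\mathcal{J})$ directly; no Galois bookkeeping is needed.

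Your Galois route, by contrast, requires the claim that the assignment $\lambda_i\mapsto\mu_i$ extends to an automorphism of the splitting field, and the constraints you point to do not deliver this. The equations \eqref{eq-eign} only pin down each Galois orbit of $\mu_i$'s as a multiset. In the diagonalisable case, the $i$th column of the symbolic $P$ depends only on $\lambda_i$, so for any permutation $\sigma$ of the eigenvalues one has $\tau(P)\!\mid_{\bm{\mu}}=P\Pi_\sigma$ and $\tau(J)\!\mid_{\bm{\mu}}=\Pi_\sigma^{-1}J\Pi_\sigma$; the matrix equation $\tau(PJP^{-1})=M$ is then satisfied automatically, and the fingerprint equation likewise. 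Thus the matrix equalities do not ``cross-link'' the blocks in the way you need. The remaining constraint $\tau(S_1)$ is more restrictive, but it is not true in general that a permutation of the roots preserving the multiplicative-relation lattice comes from a field automorphism: take an irreducible polynomial whose Galois group is a proper subgroup of $S_n$ and whose roots happen to be multiplicatively independent---then the lattice is trivial, every permutation preserves it, yet most permutations are not Galois. So the step you identify as the main obstacle is not merely unfinished; the hoped-for automorphism need not exist, and the paper's reparametrisation argument is precisely what circumvents this.
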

\begin{proof}
Denote by $\bm{\lambda}=(\lambda_1, \ldots, \lambda_s)$ the nonzero eigenvalues of $M$  read from its convenient Jordan form~$(P,J)$, as used in our invariant generation construction.

Let $V=V(\mathcal{J})$ and $W=V(\mathcal{Y})$. By construction of $\mathcal{Y}$, the inclusions 
\[V \times \{(\lambda_1,\ldots,\lambda_s)\}\subseteq W\subseteq \alg^d\times \{ (\omega_1, \ldots,\omega_s) \mid m_{\lambda_i}(\omega_i )=0 \}\]
hold.  Clearly, we also have
$V \subseteq \pi(W)$. Below we consider the converse inclusion.

Let $(\bm{v},\bm{\omega}) \in W$ where $\bm{v}\in \alg^d$ and $\bm{\omega}=(\omega_1, \ldots,\omega_s)$. 
As  seen in the above inclusions, for each~$i$, $\omega_i$
is a Galois conjugate of~$\lambda_i$, and moreover  
each conjugate appears with equal multiplicity in~$\bm{\lambda}$ and~\(\bm{\omega}\).  
These ensure that the symbolic construction of $P^{-1}$ given in~\cite{cai1994jnf} is such that 
$\tau(P)$ and $\tau(P^{-1})$ remain inverse when evaluated at $\bm{\omega}$ (rather than $\bm{\lambda}$).
As a result, \(\tau(J)\) evaluated at \(\bm{\omega}\) is equal to \(J\), up to the ordering of Jordan blocks.
 Furthermore, 
the equations in $\tau(S_1)$ ensure that \(\bm{\omega}\) satisfy the same multiplicative relations as those among $\bm{\lambda}$.

Define the rewrite rules $\tau':=\{\lambda_i\to \omega_i \mid 1\leq i\leq s\}$.
The equations  \(\tau(P J P)^{-1} - M\) and \(\tau(U P^{-1}) \bm{\alpha} - \bm{\beta}\) ensure analogous Jordan block configurations and fingerprint for $(\tau'(P),\tau'(J))$ as in $(P,J)$. 
By the above, and our invariant generation construction, we get that
            $V = V(\tau'(\mathcal{J}))$.
         It immediately follows that \(\bm{v}\in V\).
        Since we began with the assumption \((\bm{v} ,\bm{\omega})\in W\), the desired inclusion that \(\pi(W)\subseteq V\) follows.
\end{proof}

\end{document}